\providecommand{\U}[1]{\protect\rule{.1in}{.1in}}
\newtheorem{theorem}{Theorem}
\newtheorem{lemma}[theorem]{Lemma}
\newtheorem{remark}[theorem]{Remark}
\newenvironment{proof}[1][Proof]{\noindent\textbf{#1.} }{\ \rule{0.5em}{0.5em}}
\numberwithin{equation}{section}
\def\e{{ e}}
\newcommand{\Mod}[1]{\ (\mathrm{mod}\, #1)}
\begin{document}

\title{\vspace{-1.1in}Wave Matrix Lindbladization II: General Lindbladians, Linear Combinations, and Polynomials}

\author{Dhrumil Patel\thanks{Department of Computer Science, Cornell University, Ithaca, New York, 14850, USA, Email: djp265@cornell.edu}\and 
        Mark M. Wilde\thanks{School of Electrical and Computer Engineering, Cornell University, Ithaca, New York 14850, USA, Email: wilde@cornell.edu} }

\maketitle
\begin{abstract}
     In this paper, we investigate the problem of simulating open system dynamics governed by the well-known Lindblad master equation. 
     In our prequel paper, we introduced an input model in which Lindblad operators are encoded into pure quantum states, called program states, and we also introduced a method, called wave matrix Lindbladization, for simulating Lindbladian evolution by means of interacting the system of interest with these program states. Therein, we focused on a simple case in which the Lindbladian consists of only one Lindblad operator and a Hamiltonian. 
     Here, we extend the method to simulating general Lindbladians and other cases in which a Lindblad operator is expressed as a linear combination or a polynomial of the operators encoded into the program states.
     We propose quantum algorithms for all these cases and also investigate their sample complexity, i.e., the number of program states needed to simulate a given Lindbladian evolution approximately. 
     Finally, we demonstrate that our quantum algorithms provide an efficient route for simulating Lindbladian evolution relative to full tomography of encoded operators, by proving that the sample complexity for tomography is dependent on the dimension of the system, whereas the sample complexity of wave matrix Lindbladization is dimension independent.
\end{abstract}

\medskip
\begin{quote}
\textit{We dedicate our paper to the memory of G\"oran Lindblad (July~9, 1940--November~30, 2022), whose profound contributions to quantum information science, in the form of the Lindblad master equation~\cite{Lindblad1976OnSemigroups} and the data-processing inequality for quantum relative entropy \cite{Lin75}, will never be forgotten.}
\end{quote}
\vspace{.5in}

\tableofcontents

\section{Introduction}
Quantum simulation involves simulating or modeling the behavior of a complex quantum system using a quantum computer, allowing researchers to further investigate its properties in detail. 
The problem of simulating a closed quantum system, also known as Hamiltonian simulation, is well-studied, and hitherto, many quantum algorithms have been proposed to solve it \cite{Lloyd1996UniversalSimulators,Berry2013ExponentialHamiltonians,Berry2012Gate-efficientAlgorithms,Lloyd2014QuantumAnalysis,Berry2014SimulatingSeries,Berry2015HamiltonianParameters,Low2016OptimalProcessing,Kimmel2017HamiltonianComplexity}. 
However, many practical systems are not closed; rather, they interact with their environment, resulting in more complex dynamics that are well captured by the Lindblad master equation if the system under consideration is Markovian in nature \cite{Lindblad1976OnSemigroups, Gorini1976CompletelySystems}. 
This equation is critical to understanding the behavior of many open quantum systems~\cite{breuer2002theory,Weiss2021}, in condensed matter~\cite{Prosen2011OpenTransport,Manzano2011QuantumLaw,Olmos2012FacilitatedGlasses}, quantum chemistry~\cite{nitzan2006chemical,may2008charge},  quantum optics~\cite{Plenio1998,Gardiner2004QuantumBooks}, entanglement preparation~\cite{Kraus2008,Kastoryano2011,Reiter2016}, 
 thermal state preparation~\cite{Kastoryano2014QuantumCase}, quantum state engineering~\cite{verstraete2009quantum},
and the effects of noise on quantum computers~\cite{Magesan2012ModelingCircuits}.

In this paper, we address the problem of simulating Lindbladian evolution of a finite-dimensional quantum system. Starting from an initial state $\rho$, we aim to simulate its dynamics over a period of time $t\geq0$, as governed by the Lindblad master equation:
\begin{equation}
\label{eq:lindbladmaster}
    \frac{\partial \rho}{\partial t} =  \mathcal{L} (\rho) \coloneqq  -i [H, \rho] + \sum_{k=1}^{K} L_{k}\rho L_{k}^{\dagger} - \frac{1}{2} \left \{L_{k}^{\dagger}L_{k}, \rho \right\}.
\end{equation}
Here, the first term $-i [H, \rho]$ accounts for the unitary evolution of the system under the system Hamiltonian $H$. 
The second term $\sum_{k=1}^{K} L_{k}\rho L_{k}^{\dagger} - \frac{1}{2}  \{L_{k}^{\dagger}L_{k}, \rho \}$ captures non-unitary dynamics from interactions with the environment, described by Lindblad operators $\{L_{k}\}_{k=1}^K$.
These operators are not necessarily Hermitian and in fact have no constraints on them. 
The superoperator $\mathcal{L}$ in the equation above is called a Lindbladian, and the notation $\{A, B\}$ refers to the anti-commutator of operators $A$ and $B$, i.e., $\{A, B\} \coloneqq AB + BA$.

By simulating the evolution mentioned above for time $t$, we are referring to implementing its corresponding quantum channel~$\e^{\mathcal{L}t}$, which is the solution of \eqref{eq:lindbladmaster}, where
\begin{equation}
    \e^{\mathcal{L}t}(\rho) = \sum_{s=0}^\infty \frac{\mathcal{L}^s(\rho)t^s}{s!} ,
    \label{eq:Lind-expand}
\end{equation}
and $\mathcal{L}^s$ denotes $s$ sequential applications of the Lindbladian $\mathcal{L}$. 
The equality above simply comes from the Taylor series expansion of the exponential. 
For small $t$, we have the expansion $\e^{\mathcal{L}t}(\rho) = \rho +  \mathcal{L}(\rho) t + O(t^2)$, and we make use of it in what follows.

There has been an increasing interest in developing efficient quantum algorithms for simulating the dynamics of open quantum systems in recent years \cite{Childs2016EfficientDynamics, Cleve2016EfficientEvolution, KSMM22, Schlimgen2022QuantumOperators, Suri2022Two-UnitarySimulation} (see \cite{Miessen2022QuantumDynamics} for a review). 
These works are based on an assumption that a succinct representation of or black-box access to a set of Lindblad operators is provided beforehand. 
One such succinct representation is a list of non-zero coefficients when writing these operators as a linear combination of Pauli strings \cite{Cleve2016EfficientEvolution}.

As introduced in our prequel paper \cite{Patel2023}, we take a different approach to the above problem. 
Specifically, our approach differs from the methods mentioned above in how the Lindblad operators are provided as input. 
Let us suppose that each Lindblad operator $L_{k}$ is encoded into a pure state $|\psi_{k}\rangle$ in the following way:
\begin{equation}
    |\psi_{k}\rangle \coloneqq (L_{k}\otimes I) |\Gamma\rangle,
    \label{eq:L-encode}
\end{equation}
where
\begin{equation}
|\Gamma\rangle \coloneqq \sum_{j} |j\rangle |j\rangle    
\label{eq:max-ent-vec-def}
\end{equation}
is a maximally entangled vector. Suppose further that we have access to multiple copies of this state.
We refer to such a state as a program state because it can encode any unit-norm linear operator.
By unit-norm, we mean $\left\Vert L_{k}\right\Vert_{2} = 1$, where $\left\Vert A \right\Vert_{2} \coloneqq \sqrt{\operatorname{Tr}[A^{\dagger}A]}$ is the {Schatten-2} norm of a matrix $A$ (also known as the Hilbert--Schmidt norm).
This constraint on~$L_{k}$ is due to the requirement that $|\psi_{k}\rangle$ should be a quantum state (i.e., normalized). 
In light of this, when encoding a Lindblad operator $L'$ with an arbitrary norm, we suppose that its normalized version, i.e., $L'/\left\Vert L' \right\Vert_2$, is encoded in a state. 
Furthermore, for the Hamiltonian term in \eqref{eq:lindbladmaster}, we follow the input model of~\cite{Kimmel2017HamiltonianComplexity} and suppose that the Hamiltonian $H$ is given as a linear combination of program states $\{\sigma_{j}\}_{j=1}^{J}$, i.e., 
\begin{align}
    H  & \coloneqq \sum_{j=1}^{J}c_{j}\sigma_{j},
\end{align}
where $c_{j}\in\mathbb{R}$, and that we have access to multiple copies of $\sigma_{j}$ for all $j \in \{1,\ldots, J\}$.

In essence, we are interested in answering the following question: given one copy of an unknown quantum state $\rho$, some number $n_{j}$ copies of the program state~$\sigma_{j}$ for all $j \in \{1,\ldots, J\}$, and $m_{k}$ of copies of the program state~$\psi_{k} \coloneqq |\psi_{k}\rangle\! \langle\psi_{k}|$ for all $k \in \{1,\ldots, K\}$, can we approximately simulate the quantum channel~$\e^{\mathcal{L}t}$ up to an error $\varepsilon$? In other words,
can we realize the following transformation?
\begin{equation}\label{eq:task-map}
    \rho \otimes \sigma_{1}^{\otimes n_{1}} \otimes \cdots \otimes \sigma_{J}^{\otimes n_{J}} \otimes \psi_{1}^{\otimes m_{1}} \otimes \cdots \otimes \psi_{K}^{\otimes m_{K}} \overset{\overset{\varepsilon}{\approx}}{\longrightarrow} \e^{\mathcal{L}t}(\rho).
\end{equation}
As done in our prequel paper \cite{Patel2023}, we call this modified problem sample-based  Lindbladian simulation, and it can be seen as a natural analogue to sample-based Hamiltonian simulation \cite{Lloyd2014QuantumAnalysis,Kimmel2017HamiltonianComplexity}. In our prequel paper, we introduced an approach for solving a simpler version of this problem, called  
\textit{Wave Matrix Lindbladization} (WML), which is an analogue to  \textit{Density Matrix Exponentiation} \cite{Lloyd2014QuantumAnalysis}. That is,
in~\cite{Patel2023}, we focused on a relatively simple case in which the Lindbladian consists of only one Lindblad operator and a Hamiltonian term. 
We did this so that the reader could grasp the intuition behind the techniques introduced. 
That being said, in this sequel paper, we extend wave matrix Lindbladization to general Lindbladians and beyond.

\subsection{Summary of Main Results}

In this paper, we present two quantum algorithms that implement the quantum channel $\e^{\mathcal{L}t}$ with some desired accuracy $\varepsilon \in (0,1)$. We call these the sampling-based approach and the Trotter-like approach.
We then investigate the sample complexity of these algorithms (see Section~\ref{sec:gen-lindbladians}). 
By sample complexity, we mean the number of copies of the program states $\{\sigma_{j}\}_{j=1}^J$ and~$\{\psi_{k}\}_{k=1}^K$ needed to achieve the above task; i.e., this number is equal to~$\sum_{j=1}^J n_{j} + \sum_{k=1}^K m_{k}$.

Furthermore, we provide another, different extension of the method introduced in \cite{Patel2023}.
Previously, we considered a Lindbladian with only one Lindblad operator and a Hamiltonian, and this entire Lindblad operator was given encoded in a single program state.
We extend this simple case to that in which this Lindblad operator is provided as a linear combination or a polynomial of the operators encoded in the program states.
We propose quantum algorithms for these cases as well under the assumption that we have access to unitaries that prepare these program states (see Sections~\ref{sec:single-op-lin-comb} and~\ref{sec:single-op-poly}).
We then investigate the query complexity of these algorithms, i.e., the number of times these state-preparation unitaries are queried to approximate a given evolution. 

Finally, we show that our quantum algorithms offer an efficient approach for simulating Lindbladian evolution compared to full tomography of the encoded operators. 
We prove in Section~\ref{sec:WML-vs-WMT}~that the sample complexity of the tomography approach scales with the dimension of the system being simulated, while that of our algorithms does not.

\textbf{Key Idea ---} Prior to delving into the specifics of each algorithm, we first provide a high-level overview of the first approach that leverages sampling to implement the channel $\e^{\mathcal{L}t}$ (i.e., sampling-based approach).  We note here that this approach builds upon that outlined in the proof of \cite[Theorem~9]{Kimmel2017HamiltonianComplexity} and is conceptually similar to random compiler approaches in Hamiltonian simulation \cite{campbell2019}.
This sampling-based algorithm primarily consists of three steps that we repeat $n \coloneqq \sum_{j=1}^J n_{j} + \sum_{k=1}^K m_{k}$ times. 
Recall that $\rho$ is the initial state on which we would like to simulate this channel.
\begin{enumerate}
\item The first step is a sampling step, where we sample
\begin{itemize}
    \item $\sigma_{j} \otimes \tau \otimes |0\rangle\!\langle0|$ with probability $c_{j}/c$ for $c_{j} > 0$,
    \item $\sigma_{j} \otimes \tau \otimes |1\rangle\!\langle1|$ with probability $(-c_{j})/c$ for $c_{j} < 0$,
    \item $\psi_{k} \otimes |2\rangle\!\langle2|$ with probability $\left\Vert L_{k}\right\Vert_{2}^{2}/c$,
\end{itemize} 
with $\tau$ an arbitrary state.
Here, we define the normalization constant~$c$ as
\begin{equation}
c\coloneqq \sum_{j=1}^{J}\left\vert c_{j}\right\vert +\sum_{k=1}^{K}\left\Vert
L_{k}\right\Vert_{2}^{2}.
\end{equation}

\item Suppose that $\rho$ is in register 1 and the sampled state obtained from the above step is in registers 2, 3, and 4. Set $\Delta = ct/n$.
The second step of our algorithm involves applying a quantum channel $\mathcal{E}$ on all registers conditioned on the state of register 4:
\begin{itemize}
    \item If $|0\rangle\!\langle0|$, then apply a unitary $\e^{-i\mathsf{SWAP} \Delta}$ on registers 1 and 2, where   $\mathsf{SWAP}$ is the swap operator (defined explicitly in~\eqref{eq:swap-def}).
    \item if $|1\rangle\!\langle1|$, then apply a unitary $\e^{i\mathsf{SWAP} \Delta}$ on registers 1 and 2.
    \item if $|2\rangle\!\langle2|$, then apply a channel $\e^{\mathcal{M}\Delta}$ on registers 1, 2, and 3, where $\mathcal{M}$ is the Lindbladian:
    \begin{equation}\label{eq:intro-M}
        \mathcal{M} (\cdot) \coloneqq M(\cdot) M^{\dagger} - \frac{1}{2} \left \{M^{\dagger}M, (\cdot) \right\},
    \end{equation}
    with Lindblad operator
    \begin{equation}
        M \coloneqq \frac{1}{\sqrt{d}}\left(I_1\otimes |\Gamma\rangle\! \langle \Gamma |_{23}\right) \left( \mathsf{SWAP}_{12} \otimes I_3\right),
        \label{eq:M-op-initial}
\end{equation}    
where we recall \eqref{eq:max-ent-vec-def} for the definition of $|\Gamma\rangle$.
\end{itemize}
\item Finally, in the third step, we trace out registers 2, 3, and 4. 
\end{enumerate}

\begin{figure}
    \centering
    \includegraphics[scale=0.19]{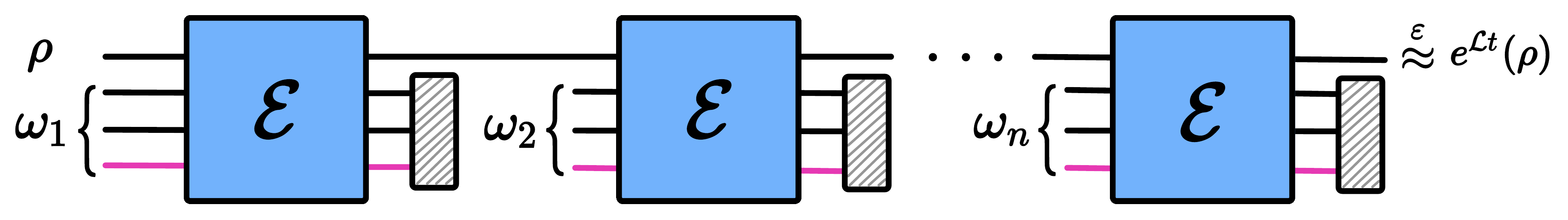}
    \caption{Our quantum algorithm, repeated $n = O(c^2 t^2/\varepsilon)$ times, for approximating the target quantum channel, i.e., $\e^{\mathcal{L}t}$, with an approximation error of~$\varepsilon$. Each hatched rectangle represents the trace-out operation. At the very left, the state $\rho$ is in register~1, and the state $\omega_{i}$, which is obtained after the sampling step of each iteration of our algorithm, is in registers~2, 3 and 4. The quantum channel $\mathcal{E}$ is conditioned on the state of the register~4 (pink line). }
    \label{fig:n-times}
\end{figure}

Now, the key step here is to prove the following equality:
\begin{align}
    \operatorname{Tr}_{234}\left[ \mathcal{E}(\rho \otimes \omega) \right] & = \rho + \frac{1}{c} \mathcal{L}(\rho) \Delta + O(\Delta^2)\\
& = \e^{(\mathcal{L}/c) \Delta}(\rho) + O(\Delta^2),
\end{align}
where $\omega$ is the state resulting from the aforementioned sampling processing (formally defined in \eqref{eq:omega-mul-op-plus-H} later on) and the Lindbladian $\mathcal{L}$ is defined in \eqref{eq:lindbladmaster}.
If the above equality holds and we repeat the aforementioned three steps of our algorithm $n = O(c^2t^2/\varepsilon)$ times, then we can approximate the target channel~$\e^{\mathcal{L}t}$ with an approximation error of $O(\varepsilon)$. 
We prove this statement in detail in Theorem~\ref{thm:mul-op-plus-H}.
This further shows that the algorithm above uses
\begin{equation}
n_j = |c_{j}|/c \cdot O(c^2t^2/\varepsilon) = O(|c_{j}|ct^2/\varepsilon)    
\end{equation}
copies of the program state $\sigma_{j}$ on average and uses
\begin{equation}
m_k = \left\Vert
L_{k}\right\Vert _{2}^{2}/c \cdot O(c^2t^2/\varepsilon) = O(\left\Vert L_{k}\right\Vert _{2}^{2} ct^2/\varepsilon)    
\end{equation}
copies of the program state~$\psi_{k}$ on average.

\subsection{Notation}

We employ the same notation used in our prequel paper \cite{Patel2023} but recall it here for convenience. Let $\mathcal{H}_{S}$ denote a $d$-dimensional Hilbert space associated with a quantum system $S$. 
We denote the set of quantum states acting on~$\mathcal{H}_{S}$ by~$\mathcal{D}(\mathcal{H}_{S})$. 
Let $\operatorname{Tr}[X]$ denote the trace of a matrix $X$, i.e., the sum of its diagonal elements. 
Also, let $X^{\dagger}$ denote the Hermitian conjugate (or adjoint) of the matrix $X$.
The Schatten $p$-norm of a matrix $X$ is defined for $p \in [1, \infty)$ as follows:
\begin{equation}
    \left\Vert X\right\Vert_{p} \coloneqq \left(\operatorname{Tr}\!\left[\left(X^{\dagger}X\right)^{\frac{p}{2}}\right]\right)^{\frac{1}{p}}.\label{eq:schatten-norm-def}
\end{equation}
For the purpose of this paper, we use Schatten norms with $p =1$ (also called trace norm) and $p=2$ (Hilbert--Schmidt norm). Furthermore, let  $[X, Y] \coloneqq XY - YX$ and $\{X, Y\} \coloneqq XY + YX$ denote the commutator and anti-commutator of the operators $X$ and~$Y$, respectively. 

The diamond distance between two quantum channels $\mathcal{N}$ and $\mathcal{M}$ is defined as follows \cite{Kitaev1997QuantumCorrection}:
\begin{equation}
     \left\Vert \mathcal{N} - \mathcal{M} \right\Vert_{\diamond} \coloneqq \sup_{\rho \in \mathcal{D}(\mathcal{H}_{R} \otimes \mathcal{H}_{S})}  \left \Vert (\mathcal{I}_{R} \otimes \mathcal{N}  )(\rho) - (\mathcal{I}_{R} \otimes \mathcal{M}  )(\rho)  \right \Vert_{1},
\end{equation}
where $R$ is a reference system and $\mathcal{I}_{R}$ is the identity channel acting on the system $R$. An important point to note here is that, in the above definition, the dimension of $R$ is arbitrarily large. However, it is known that it suffices to perform the optimization over pure bipartite states with the dimension of~$R$ equal to the dimension of $S$. Furthermore, the quantity in the objective function of the above optimization is the trace distance, defined as $\left \Vert \rho -\sigma \right\Vert_{1}$ for two quantum states $\rho, \sigma \in \mathcal{D}(\mathcal{H_{S}})$. In what follows, we employ the normalized diamond distance $\frac{1}{2} \left\Vert \mathcal{N} - \mathcal{M} \right\Vert_{\diamond}$ to measure approximation error---the normalization factor of $\frac{1}{2}$ guarantees that $\frac{1}{2} \left\Vert \mathcal{N} - \mathcal{M} \right\Vert_{\diamond} \in [0,1]$ for quantum channels $\mathcal{N}$ and $\mathcal{M}$.

\begin{figure}
    \centering
    \includegraphics[scale=0.31]{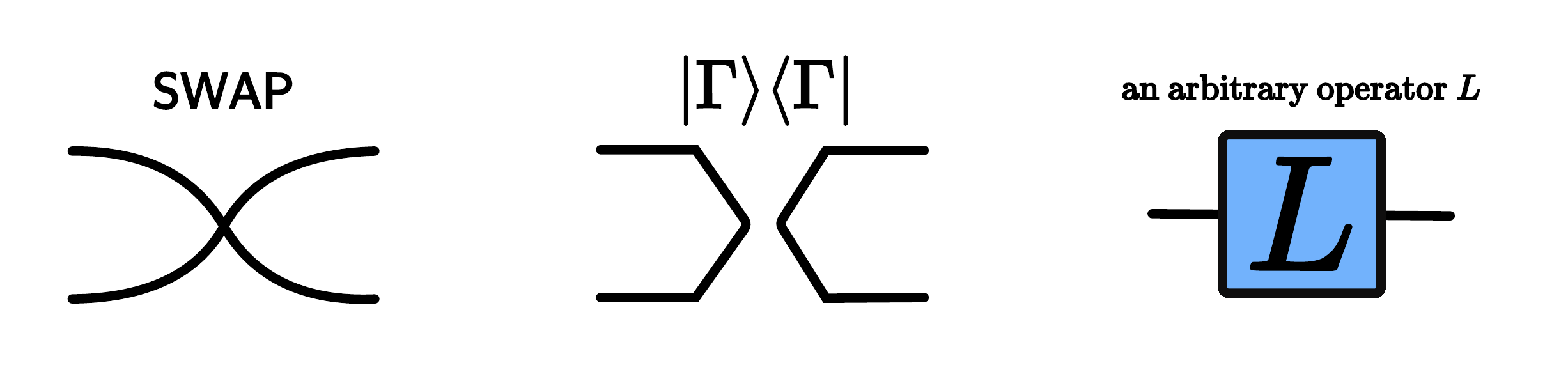}
    \caption{Tensor-network diagrams of operators $\mathsf{SWAP}$, $|\Gamma\rangle\!\langle\Gamma|$, and $L$.}
    \label{fig:lindblad-symb}
\end{figure}

For $\rho$ a state in $\mathcal{D}(\mathcal{H}_{R} \otimes \mathcal{H}_{S})$, we denote the partial trace over the Hilbert space $\mathcal{H}_{R}$ by $\operatorname{Tr}_{R}[\rho]$. We also sometimes use a different notation for partial trace; i.e., given a multipartite state $\rho$, we use the notation $\operatorname{Tr}_{k}[\rho]$ to denote the action of tracing out the $k^{\text{th}}$ party. Furthermore, we define the maximally entangled vector in $\mathcal{H}_{R} \otimes \mathcal{H}_{S}$ as
\begin{equation}
|\Gamma\rangle_{RS} \coloneqq \sum_{i} |i\rangle_{R} |i\rangle_{S},   
\label{eq:max-ent-vec}
\end{equation}
and we denote its normalized version, i.e., the maximally entangled state, by~$|\Phi\rangle$.
 We also define the unitary swap operation and unitary cyclic permutation operation (cyclic swap) in the following way:
\begin{align}
    \mathsf{SWAP} & \coloneqq \sum_{i, j} |i\rangle\! \langle j| \otimes |j\rangle\! \langle i|,
    \label{eq:swap-def}\\
    \mathsf{CYCSWAP} &\coloneqq \sum_{i_{1},i_{2}, \ldots, i_{n} } |i_{n}\rangle\!\langle i_{1}| \otimes |i_{1}\rangle\!\langle i_{2}| \otimes \cdots \otimes |i_{n-1}\rangle\!\langle i_{n}|,
    \label{eq:cyc-swap-def}
\end{align}
where $n$ is the number of systems on which the cyclic swap is performed.

In our paper, we make extensive use of tensor-network diagrams. Figure~\ref{fig:lindblad-symb} depicts tensor-network diagrams for some basic operators defined above, such as $\mathsf{SWAP}$ and $|\Gamma\rangle\!\langle\Gamma|$. For more background on tensor-network diagrams, please refer to \cite{Biamonte2017TensorNutshell}. Throughout this paper, we sometimes suppress system labels for ease of notation; however, they should be clear from the context.

\section{Quantum Algorithms for Simulating Markovian Dynamics}

\subsection{Simulating General Lindbladians}\label{sec:gen-lindbladians}

In what follows, we consider the case in which the Lindbladian consists of multiple Lindblad operators, as well as a Hamiltonian. 
To be more precise, we are interested in simulating the action of the dynamics specified by \eqref{eq:lindbladmaster} and the following Lindbladian:
\begin{equation}\label{eq:lindbladmaster-gen}
    \mathcal{L} (\rho) \coloneqq  -i [H, \rho] + \sum_{k=1}^{K} L_{k}\rho L_{k}^{\dagger} - \frac{1}{2} \left \{L_{k}^{\dagger}L_{k}, \rho \right\},
\end{equation}
on a quantum state~$\rho$ for time $t$.
Here, for the Hamiltonian term, we follow the input model of \cite{Kimmel2017HamiltonianComplexity}, assuming that $H$ is provided as a linear combination of program states $\{\sigma_{j}\}_{j=1}^{J}$, i.e.,
\begin{align}
    H  & \coloneqq \sum_{j=1}^{J}c_{j}\sigma_{j},
\end{align}
where $c_{j}\in\mathbb{R}$.
Additionally, let us suppose that each Lindblad operator~$L_{k}$ is encoded in a pure state $\psi_{k}$ as described in~\eqref{eq:L-encode}.
While $L_{k}$ can have an arbitrary Hilbert–Schmidt norm in general, we can only encode its normalized version~$L_{k}/\left\Vert L_{k}\right\Vert_{2}$ into a pure state, so that we set
\begin{equation}
    |\psi_k\rangle \coloneqq \frac{(L_k \otimes I)|\Gamma\rangle}{\left \Vert L_k \right\Vert_2},
    \label{eq:psi_k-L_k-def}
\end{equation}
which implies that $\left\Vert L_{k}\right\Vert_{2}^{2} \psi_k = \left\Vert L_{k}\right\Vert_{2}^{2} |\psi_k\rangle\!\langle \psi_k| =  (L_k \otimes I)|\Gamma\rangle\!\langle\Gamma |(L_k \otimes I)^\dag   $.

\subsubsection{Sampling-based Approach}

Given access to $\{(c_{j}, \sigma_{j})\}_{j=1}^{J}$ and $\{\left(\Vert L_{k}\right\Vert^2_{2}, \psi_{k})\}_{k=1}^{K}$, we prepare the following state:
\begin{equation}
    \omega  \coloneqq \frac{1}{c} \sum_{x=0}^2  \omega^{(x)} \otimes |x\rangle\!\langle x|,
    \label{eq:omega-mul-op-plus-H}
\end{equation}
where
\begin{align}
    \omega^{(0)} & \coloneqq \sum_{j:c_{j}>0}c
_{j}\sigma_{j}\otimes\tau ,\\
    \omega^{(1)} & \coloneqq \sum_{j:c_{j}%
<0}(-c_{j})\sigma_{j}\otimes\tau, \\
\omega^{(2)} & \coloneqq \sum_{k=1}%
^{K}\left\Vert L_{k}\right\Vert _{2}^{2}\psi_{k},
\end{align}
$\tau$ is an arbitrary state, and 
\begin{equation}
c\coloneqq \sum_{j=1}^{J}\left\vert c_{j}\right\vert +\sum_{k=1}^{K}\left\Vert
L_{k}\right\Vert_{2}^{2}.
\label{eq:normalization-c-gen}
\end{equation}
Specifically, we prepare $\omega$ by sampling the state $\sigma_{j} \otimes \tau \otimes |0\rangle\!\langle0|$ with probability $c_{j}/c$ when $c_{j} > 0$, sampling the state $\sigma_{j} \otimes \tau \otimes |1\rangle\!\langle1|$ with probability $(-c_{j})/c$ when $c_{j} < 0$, and sampling the state $\psi_{k} \otimes |2\rangle\!\langle2|$ with probability $\left\Vert L_{k}\right\Vert_{2}^{2}/c$.

We now present a quantum algorithm for simulating the channel $\e^{\mathcal{L}t}$ corresponding to the Lindbladian in \eqref{eq:lindbladmaster-gen}, up to error $\varepsilon$ in diamond distance, using $n$ copies of the state $\omega$. 
Given that we are interested in providing an analysis related to the diamond distance, let us suppose that the channel acts on one system~$S$ of a joint state of two systems.
So, let $\rho \in \mathcal{D}(\mathcal{H}_{R} \otimes \mathcal{H}_{S})$ be an unknown quantum state given as input over the joint system $RS$, where $R$ is a reference system. 
Also, let the $i^{\text{th}}$ copy of the state~$\omega$ be a quantum state of the joint system $P_{i}Q_{i}C_{i}$, with the system ordering specified as $\omega_{PQC} = \frac{1}{c} \sum_{x=0}^2 \omega^{(x)}_{PQ} \otimes |x\rangle\!\langle x|_C$. 
For brevity, let us use $(PQC)_i$ as a shorthand for $P_{i}Q_{i}C_{i}$.

\textbf{Algorithm 1 --} Set $n\in \mathbb{N}$, with a particular choice specified later. Set $i=1$. Given the $i^{\text{th}}$ copy of $\omega$, i.e., $\omega_{(PQC)_{i}}$, perform the following three steps:
\begin{enumerate} 
    \item Adjoin the state $\omega_{(PQC)_{i}}$ to the input state $\rho$.
    \item Let $\mathcal{E}$ be the following channel that acts nontrivially on systems $S(PQC)_i$:
    \begin{equation}
        \mathcal{E}(\cdot) \coloneqq \sum_{x=0}^2 \mathcal{E}^{(x)}_{S(PQ)_i}(\langle x|_{C_i} (\cdot) |x\rangle_{C_i}),
        \label{eq:E-channel-alg-1}
    \end{equation}
    where each channel $\mathcal{E}^{(x)}$ is defined for $x \in \{0,1,2\}$ as
    \begin{align}
        \mathcal{E}^{(0)}(\cdot) & \coloneqq \e^{-i\mathsf{SWAP}_{SP_{i}}\Delta}(\cdot) \e^{i\mathsf{SWAP}_{SP_{i}}\Delta}, \\
        \mathcal{E}^{(1)}(\cdot) & \coloneqq \e^{i\mathsf{SWAP}_{SP_{i}}\Delta}(\cdot) \e^{-i\mathsf{SWAP}_{SP_{i}}\Delta}, \\
        \mathcal{E}^{(2)}(\cdot) & \coloneqq \e^{\mathcal{M}\Delta}(\cdot).
    \end{align}
    In the above, $\Delta$ is some small duration of time; specifically, we set~$\Delta \coloneqq ct/n$.
    Intuitively, the channel $\mathcal{E}$ first measures the classical register~$C_i$, and if outcome $x$ is obtained, it discards $C_i$ and applies the channel~$\mathcal{E}^{(x)}$ to registers $S(PQ)_i$. Here and in what follows, we apply the identity operator $I$ on all systems that are not explicitly mentioned.  
    Moreover, $\e^{\mathcal{M}\Delta}$ is the quantum channel with the Lindbladian $\mathcal{M}$ defined as
\begin{equation}\label{eq:general-fixed-M-master}
        \mathcal{M}\left (\cdot\right) \coloneqq M(\cdot) M^{\dagger} - \frac{1}{2} \left \{M^{\dagger}M, (\cdot) \right\}.
    \end{equation}
    In the above, the Lindblad operator $M$ acts on the joint system $S(PQ)_{i}$, and we define it as
    \begin{equation}
        M \coloneqq \frac{1}{\sqrt{d}} \left(I_S \otimes |\Gamma\rangle\!\langle\Gamma |_{(PQ)_{i}}\right) \left( \mathsf{SWAP}_{SP_{i}} \otimes I_{Q_i}\right).
    \end{equation}

    After applying the channel $\mathcal{E}$ defined in \eqref{eq:E-channel-alg-1} to the joint state $\rho_{RS}\otimes \omega_{(PQC)_{i}}$, the resulting state is as follows:
    \begin{multline}
        \mathcal{E}\left (\rho_{RS}\otimes \omega_{(PQC)_{i}} \right ) \\
        \coloneqq \frac{1}{c}\Bigg[  \left (\e^{-i\mathsf{SWAP}_{SP_{i}}\Delta} \otimes I \right) \left (\rho_{RS}\otimes \omega^{(0)}_{(PQ)_{i}} \right)\left (\e^{i\mathsf{SWAP}_{SP_{i}}\Delta} \otimes I \right) \\
+ \left (\e^{i\mathsf{SWAP}_{SP_{i}}\Delta} \otimes I \right) \left (\rho_{RS}\otimes \omega^{(1)}_{(PQ)_{i}} \right)\left (\e^{-i\mathsf{SWAP}_{SP_{i}}\Delta} \otimes I \right)  \\
+ \e^{\mathcal{M}\Delta}\left (\rho_{RS}\otimes \omega^{(2)}_{(PQ)_{i}} \right) \Bigg].\label{eq:general-cond-channel}
    \end{multline}

    \item Trace out the systems $(PQ)_{i}$.
\end{enumerate}
We repeat the above procedure for each copy of $\omega$, i.e., for all $i \in \{1,\ldots, n\}$.

Theorem~\ref{thm:mul-op-plus-H} below states that the algorithm mentioned above employs $n = O(c^2t^2/\varepsilon)$ copies of $\omega$ for simulating the Lindbladian evolution of $\rho_{RS}$ for time~$t$ according to the Lindbladian in \eqref{eq:lindbladmaster-gen}. 
This results in the normalized trace distance between the final state and the desired target state $\left(\mathcal{I}_{R} \otimes \e^{\mathcal{L}t} \right)\left(\rho_{RS}\right)$ being no larger than $\varepsilon$, for an arbitrary input state $\rho_{RS}$. 
The aforementioned bound on $n$ further implies that on average, the above algorithm uses 
\begin{equation}
    n_{j} = |c_{j}|/c \cdot O(c^2t^2/\varepsilon) = O(|c_{j}|ct^2/\varepsilon)
\end{equation}
copies of the program state $\sigma_{j}$ and 
\begin{equation}
    m_{k} = \left\Vert
L_{k}\right\Vert _{2}^{2}/c \cdot O(c^2t^2/\varepsilon) = O(\left\Vert L_{k}\right\Vert _{2}^{2} ct^2/\varepsilon)
\end{equation}
copies of the program state $\psi_{k}$.

\begin{theorem}\label{thm:mul-op-plus-H}
 Given access to $n$ copies of the state $\omega \in  \mathcal{D}\!\left(\mathcal{H}_{PQC}\right)$, which is defined in~\eqref{eq:omega-mul-op-plus-H} and encodes the Lindblad operators $L_{1}, \ldots, L_{K}$ and the Hamiltonian $H$, there exists a quantum algorithm $\mathcal{A}$ such that the following holds:
    \begin{equation}
        \frac{1}{2}\left \Vert \e^{\mathcal{L}t} - \mathcal{A} \right \Vert_{\diamond} \leq \varepsilon,
    \end{equation}
    with only $n = O(c^2t^2/\varepsilon)$ copies of $\omega$, where $c$ is defined in \eqref{eq:normalization-c-gen}.
    In other words, $\mathcal{A}$ uses only $n = O(c^2t^2/\varepsilon)$ copies of $\omega$ to approximate the channel~$ \e^{\mathcal{L}t}$, defined from  \eqref{eq:Lind-expand} and \eqref{eq:lindbladmaster-gen}, up to $\varepsilon$ error in diamond distance. 
\end{theorem}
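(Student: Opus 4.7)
The plan is to analyze a single iteration carefully and then lift to $n$ iterations by a hybrid/telescoping argument, mirroring the overall strategy outlined in the ``Key Idea'' discussion. Concretely, let $\mathcal{T}_i$ denote the channel that adjoins $\omega_{(PQC)_i}$, applies $\mathcal{E}$, and traces out $(PQC)_i$. The central claim is the single-step estimate
\begin{equation}
    \mathcal{T}_i = \e^{(\mathcal{L}/c)\Delta} + O(\Delta^2),
\end{equation}
in the sense that $\tfrac{1}{2}\Vert \mathcal{T}_i - \e^{(\mathcal{L}/c)\Delta}\Vert_{\diamond} = O(\Delta^2)$, where the implicit constant depends only on $c$. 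The result then follows by composition: the desired algorithm is $\mathcal{A} = \mathcal{T}_n \circ \cdots \circ \mathcal{T}_1$, the target channel factors as $\e^{\mathcal{L}t} = (\e^{(\mathcal{L}/c)\Delta})^{\circ n}$ since $\Delta = ct/n$, and a standard telescoping bound combined with the fact that the diamond norm is subadditive under composition and monotone under channels gives $\tfrac{1}{2}\Vert \e^{\mathcal{L}t} - \mathcal{A}\Vert_\diamond \leq n \cdot O(\Delta^2) = O(c^2 t^2 / n)$. Choosing $n = O(c^2 t^2 / \varepsilon)$ yields the claimed bound.

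To prove the single-step estimate, I split the sampling-mixture channel $\mathcal{E}$ of \eqref{eq:general-cond-channel} into its three conditional branches, compute each after partial trace, and sum with weight $1/c$. For the $x=0$ branch, Taylor-expanding $\e^{\mp i\mathsf{SWAP}_{SP_i}\Delta}$ to second order and invoking the standard swap identity $\operatorname{Tr}_{P_i}[\mathsf{SWAP}_{SP_i}(\rho_S \otimes \sigma_{P_i})] = \sigma \rho$ shows that
\begin{equation}
    \operatorname{Tr}_{(PQ)_i}\!\left[\e^{-i\mathsf{SWAP}_{SP_i}\Delta}(\rho_{RS}\otimes \omega^{(0)}_{(PQ)_i})\e^{i\mathsf{SWAP}_{SP_i}\Delta}\right] = \rho_{RS} - i\Delta\!\left[\sum_{j:c_j>0} c_j \sigma_{j,S}, \rho_{RS}\right] + O(\Delta^2),
\end{equation}
and the $x=1$ branch contributes analogously with the opposite sign, so that summing the two Hamiltonian branches reproduces $-i\Delta[H,\rho]$ precisely. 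For the $x=2$ branch, I expand $\e^{\mathcal{M}\Delta}$ to first order and use linearity: by construction $\omega^{(2)} = \sum_k (L_k \otimes I)|\Gamma\rangle\!\langle\Gamma|(L_k^\dagger \otimes I)$, so $\mathcal{M}(\rho \otimes \omega^{(2)}) = \sum_k \mathcal{M}(\rho \otimes \|L_k\|_2^2 \psi_k)$, and then the single-operator wave matrix Lindbladization identity from the prequel paper applied to each summand gives, after tracing out $(PQ)_i$, exactly $\Delta \sum_k \bigl(L_k \rho L_k^\dagger - \tfrac{1}{2}\{L_k^\dagger L_k, \rho\}\bigr) + O(\Delta^2)$. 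Dividing by $c$ and adding all three pieces produces $\rho + \tfrac{\Delta}{c}\mathcal{L}(\rho) + O(\Delta^2)$, which by Taylor expansion of $\e^{(\mathcal{L}/c)\Delta}$ is the required single-step approximation.

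The main obstacle is ensuring that these estimates are uniform in the reference system $R$, since the theorem demands a diamond-norm guarantee rather than a pointwise trace-norm guarantee. This is handled by noting that all identities above are derived by manipulations local to $S(PQC)_i$ and applied term-wise to the joint state $\rho_{RS}$, so the $O(\Delta^2)$ remainder is bounded in trace norm by a constant (depending only on $c$ and on operator norms of $H$, $L_k$, and of $\mathcal{M}$'s Lindblad operator) times $\Delta^2$, independent of the reference dimension. Combined with the telescoping composition inequality for diamond distance, the desired bound $\tfrac{1}{2}\Vert\e^{\mathcal{L}t}-\mathcal{A}\Vert_\diamond \leq O(c^2 t^2 / n) \leq \varepsilon$ for $n = O(c^2 t^2/\varepsilon)$ follows, completing the proof.
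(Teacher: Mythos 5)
Your proposal is correct and follows essentially the same route as the paper's proof: decompose the conditional channel into its three sampling branches, Taylor-expand each to first order in $\Delta$, invoke the swap identity for the Hamiltonian branches and the single-operator wave-matrix-Lindbladization identity from the prequel for the $x=2$ branch, sum with weight $1/c$ to obtain $\rho + \tfrac{1}{c}\mathcal{L}(\rho)\Delta + O(\Delta^2)$, and then repeat $n = O(c^2t^2/\varepsilon)$ times with a telescoping diamond-norm argument. Your explicit attention to uniformity in the reference system is a presentational refinement of the same argument (the paper defers this to the error analysis in Appendix~B of the prequel), not a different method.
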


\begin{proof}
For clarity and simplicity, we omit system labels, and let us suppose that the input state~$\rho$ does not have a reference system $R$.

We first expand the target state $\e^{\mathcal{L}t}(\rho)$ at $t=0$ using its Taylor series, as in \eqref{eq:Lind-expand}:
\begin{equation}\label{eq:general-desired-state}
    e^{\mathcal{L}t}(\rho) = \rho + \mathcal{L}(\rho) t +  \frac{1}{2} (\mathcal{L}\circ  \mathcal{L})(\rho) t^2 + \cdots \, .
\end{equation}
In the first step of Algorithm~1, we apply the channel $\mathcal{E}$ defined in \eqref{eq:E-channel-alg-1} to~$\rho \otimes \omega$, and then trace out the systems corresponding to~$\omega$. 
The output state obtained after this step is given by tracing out systems 2 and 3 of the state in \eqref{eq:general-cond-channel}:
\begin{align}
& \operatorname{Tr}_{23}\!\left [ \mathcal{E}(\rho \otimes \omega)\right] \notag \\
& = \operatorname{Tr}_{23}\Bigg [ \frac{1}{c}\Bigg[  \left (e^{-i\mathsf{SWAP} \Delta} \otimes I \right) \left (\rho \otimes \omega^{(0)}\right)\left (e^{i\mathsf{SWAP} \Delta} \otimes I \right)  \notag \\
& \hspace{3cm}+ \left (e^{i\mathsf{SWAP}\Delta} \otimes I \right) \left (\rho \otimes \omega^{(1)} \right)\left (e^{-i\mathsf{SWAP} \Delta} \otimes I \right)  \notag\\
& \hspace{7cm} + e^{\mathcal{M}\Delta}\left (\rho \otimes \omega^{(2)} \right) \Bigg] \Bigg]\\
& = \frac{1}{c} \operatorname{Tr}_{23} \Bigg[  \left (e^{-i\mathsf{SWAP} \Delta} \otimes I \right) \left (\rho \otimes \left (\sum_{j:c_{j}>0}c
_{j}\sigma_{j}\otimes\tau \right) \right)\left (e^{i\mathsf{SWAP} \Delta} \otimes I \right)  \notag \\
& \hspace{1cm}+ \left (e^{i\mathsf{SWAP}\Delta} \otimes I \right) \left (\rho \otimes \left (\sum_{j:c_{j}<0} (-c_{j})\sigma_{j}\otimes\tau \right) \right)\left (e^{-i\mathsf{SWAP} \Delta} \otimes I \right)  \notag\\
& \hspace{5cm} + e^{\mathcal{M}\Delta}\left (\rho \otimes \left ( \sum_{k=1}%
^{K}\left\Vert L_{k}\right\Vert _{2}^{2}\psi_{k} \right) \right) \Bigg] \\
& = \frac{1}{c}\Bigg[ \sum_{j:c_{j}>0} c_{j}\operatorname{Tr}_{23}\!\left [  \left (e^{-i\mathsf{SWAP} \Delta} \otimes I \right) \left (\rho \otimes \sigma_{j}\otimes\tau \right)\left (e^{i\mathsf{SWAP} \Delta} \otimes I \right)  \right] \notag \\
& \hspace{1cm}+ \sum_{j:c_{j}<0} (-c_{j})\operatorname{Tr}_{23}\!\left [ \left (e^{i\mathsf{SWAP}\Delta} \otimes I \right) \left (\rho \otimes \sigma_{j}\otimes\tau \right)\left (e^{-i\mathsf{SWAP} \Delta} \otimes I \right)  \right ] \notag\\
& \hspace{5cm} + \sum_{k=1}%
^{K} \left\Vert L_{k}\right\Vert _{2}^{2} \operatorname{Tr}_{23}\!\left [e^{\mathcal{M}\Delta}\left (\rho \otimes \psi_{k}\right) \Bigg] \right]\\
& = \frac{1}{c}\Bigg[ \sum_{j:c_{j}>0} c_{j}\operatorname{Tr}_{2}\left [  e^{-i\mathsf{SWAP} \Delta} \left (\rho \otimes \sigma_{j} \right)e^{i\mathsf{SWAP} \Delta}\right] \notag \\
& \hspace{2cm}+ \sum_{j:c_{j}<0} (-c_{j})\operatorname{Tr}_{2}\left [ e^{i\mathsf{SWAP}\Delta}  \left (\rho \otimes \sigma_{j} \right)e^{-i\mathsf{SWAP} \Delta} \right ] \notag\\
& \hspace{5cm} + \sum_{k=1}%
^{K} \left\Vert L_{k}\right\Vert _{2}^{2} \operatorname{Tr}_{23}\!\left [e^{\mathcal{M}\Delta}\left (\rho \otimes \psi_{k}\right) \right] \Bigg]\\
& = \frac{1}{c}\Bigg[ \sum_{j:c_{j}>0} c_{j} \left(\rho - i \operatorname{Tr}_{2}\left [ \left [\mathsf{SWAP}, \left (\rho \otimes \sigma_{j} \right) \right]\right] \Delta  + O(\Delta^2) \right) \notag \\
& \hspace{2cm}+ \sum_{j:c_{j}<0} (-c_{j})\left(\rho + i \operatorname{Tr}_{2}\left [ \left [\mathsf{SWAP}, \left (\rho \otimes \sigma_{j} \right) \right]\right] \Delta  + O(\Delta^2) \right) \notag\\
& \hspace{3.5cm} + \sum_{k=1}%
^{K} \left\Vert L_{k}\right\Vert _{2}^{2} \left ( \rho + \operatorname{Tr}_{23}\!\left [\mathcal{M}\left (\rho \otimes \psi_{k}\right) \right] \Delta + O(\Delta^2)\right)\Bigg]\tag{Taylor Expansion} \\
& = \rho + \frac{1}{c}\Bigg[ \sum_{j:c_{j}>0} c_{j} \left(- i \operatorname{Tr}_{2}\left [ \left [\mathsf{SWAP}, \left (\rho \otimes \sigma_{j} \right) \right]\right]\right) \notag \\
& \hspace{2cm}+ \sum_{j:c_{j}<0} c_{j}\left(-i \operatorname{Tr}_{2}\left [ \left [\mathsf{SWAP}, \left (\rho \otimes \sigma_{j} \right) \right]\right]\right) \notag\\
& \hspace{3.5cm} + \sum_{k=1}%
^{K} \left\Vert L_{k}\right\Vert _{2}^{2} \operatorname{Tr}_{23}\!\left [\mathcal{M}\left (\rho \otimes \psi_{k}\right) \right] \Bigg] \Delta + O(\Delta^2).
\end{align}
Next, we expand the Lindbladian $\mathcal{M}$ using its definition given by \eqref{eq:general-fixed-M-master} and then employ tensor-network diagrams for further simplifying all the partial-trace terms above. 
Please refer to Figures~3, 5, 6, and 7 of \cite{Patel2023} to simplify these terms (or alternatively, Appendix~A of \cite{Patel2023}) and obtain the following:
\begin{align}
    & \rho + \frac{1}{c}\Bigg[ \sum_{j:c_{j}>0} c_{j} \left(- i  \left [\sigma_{j}, \rho \right] \right) + \sum_{j:c_{j}<0} c_{j}\left(-i \left [\sigma_{j}, \rho \right]\right) \notag\\
& \hspace{1cm} + \sum_{k=1}%
^{K} \left\Vert L_{k}\right\Vert_{2}^{2} \left( \frac{1}{\left\Vert L_{k}\right\Vert_{2}^{2}} \right) \left(L_{k}\rho L_{k}^{\dagger} - \frac{1}{2} \left \{L_{k}^{\dagger}L_{k}, \rho \right\} \right) \Bigg] \Delta + O(\Delta^2)\\
& = \rho + \frac{1}{c}\Bigg( - i  \left [H, \rho \right]  + \sum_{k=1}%
^{K} L_{k}\rho L_{k}^{\dagger} - \frac{1}{2} \left \{L_{k}^{\dagger}L_{k}, \rho \right\}  \Bigg) \Delta + O(\Delta^2)\\
& = \rho + \frac{1}{c} \mathcal{L}(\rho) \Delta + O(\Delta^2)\\
& = e^{(\mathcal{L}/c) \Delta}(\rho) + O(\Delta^2).
\end{align}
The last equality follows from \eqref{eq:general-desired-state}. Substituting $\Delta = c t/n$ and repeating Algorithm~1 for $n = O(c^2 t^2 / \varepsilon)$ times produces a quantum state that is $\varepsilon$-close to the ideal target state $e^{\mathcal{L}t}(\rho)$ in normalized trace distance. 
A detailed error analysis of this claim is similar to that provided in Appendix~B of~\cite{Patel2023}.
\end{proof}

\subsubsection{Trotter-like Approach}

Here, we present another quantum algorithm for simulating the quantum channel $e^{\mathcal{L}t}$ with the Lindbladian $\mathcal{L}$ given by \eqref{eq:lindbladmaster-gen}. 
Before delving into the specifics of this algorithm, we first provide a brief overview of it in order to gain an intuition behind its inner workings.  
Conceptually, one can think of the Lindbladian $\mathcal{L}$ as a linear combination of single-operator Lindbladians~$\mathcal{L}_{1}, \ldots, \mathcal{L}_{r}$, i.e., $\sum_{i=1}^{r}\mathcal{L}_{i}$. 
By single-operator, we mean that each $\mathcal{L}_{i}$ consists of only one term, and this term can be either a Hamiltonian term or a non-Hamiltonian term with a single Lindblad operator. 
Now, if the quantum channels $e^{\mathcal{L}_1 t}, \ldots, e^{\mathcal{L}_r t}$ associated with the single-operator Lindbladians can be simulated efficiently, then the intuition behind this approach is to approximate $e^{\mathcal{L}t}$ by sequentially applying these simple channels for short time steps and repeating this composite sequence multiple times. 
Due to its similarity with Trotter-based approaches in Hamiltonian \cite{Lloyd1996UniversalSimulators} and Lindbladian~\cite{Childs2016EfficientDynamics} simulation, we call it the Trotter-like approach.

With this high-level intuition established, we now present the full algorithm.
Please note that the input state $\rho$ is in registers 1 and 2, where register~1 is a reference system and register~2 is the system of interest on which the Lindbladian is being simulated.
This is to enable analysis related to the diamond distance.

\textbf{Algorithm 2 --} 
\begin{enumerate}
    \item To begin with, repeat the following three steps for all $k$ ranging from $K$ to 1:
\begin{enumerate}
    \item Evolve the joint state $\rho \otimes \psi_{k}$ according to the dynamics realized by the following Lindbladian $\mathcal{M}$ acting on systems 2, 3, and 4, for some small duration of time~$\Delta_{k} \coloneqq \frac{\left \Vert L_{k} \right \Vert_{2}^{2}t}{\left \Vert \mathcal{L} \right \Vert_{\max} n} $:
    \begin{equation}\label{eq:lindblad-fixed-M-master}
        \mathcal{M}\left (\cdot\right) \coloneqq M(\cdot) M^{\dagger} - \frac{1}{2} \left \{M^{\dagger}M, (\cdot) \right\}.
    \end{equation}
    In the above, we define the Lindblad operator $M$ as
    \begin{equation}
        M \coloneqq \frac{1}{\sqrt{d}}\left(I_{2}\otimes |\Gamma\rangle\! \langle \Gamma |_{34}\right) \left( \mathsf{SWAP}_{23} \otimes I_{4}\right),
        \label{eq:lindblad-orig-M-def}
    \end{equation}
    and we define $\left \Vert \mathcal{L} \right \Vert_{\max}$ as
    \begin{equation}
        \left \Vert \mathcal{L} \right \Vert_{\max} \coloneqq \max\left \{|c_{1}|, \ldots, |c_{J}|, \left \Vert L_{1} \right \Vert_{2}^{2}, \ldots, \left \Vert L_{K} \right \Vert_{2}^{2} \right \}.
        \label{eq:L-max-def}
    \end{equation}
    \item Trace out registers 3 and 4.
    \item Set $\rho$ to be the state obtained after performing the above two steps.
\end{enumerate}
\item Then, repeat the following three steps for all $j$ ranging from $J$ to 1:
\begin{enumerate}
    \item Evolve the joint state $\rho \otimes \sigma_{j}$ according to the dynamics realized by the following Lindbladian $\mathcal{N}$ acting on systems 2 and 3, for some small duration of time~$\Delta'_{j} \coloneqq \frac{|c_{j}|t}{\left \Vert \mathcal{L} \right \Vert_{\max} n} $:
    \begin{equation}\label{eq:lindblad-fixed-N-master}
        \mathcal{N}\left (\cdot\right) \coloneqq -i \left [  \mathsf{sign}(c_{j}) \mathsf{SWAP}_{23}, (\cdot) \right ].
    \end{equation}
    \item Trace out register 3.
    \item Set $\rho$ to be the state obtained after performing the above two steps.
\end{enumerate}
\item Repeat Steps~2(a)--2(c) for all $j$ ranging from $1$ to $J$.
\item Repeat Steps~1(a)--1(c) for all $k$ ranging from $1$ to $K$.
\end{enumerate}

\begin{figure}
    \centering
    \includegraphics[scale=0.13]{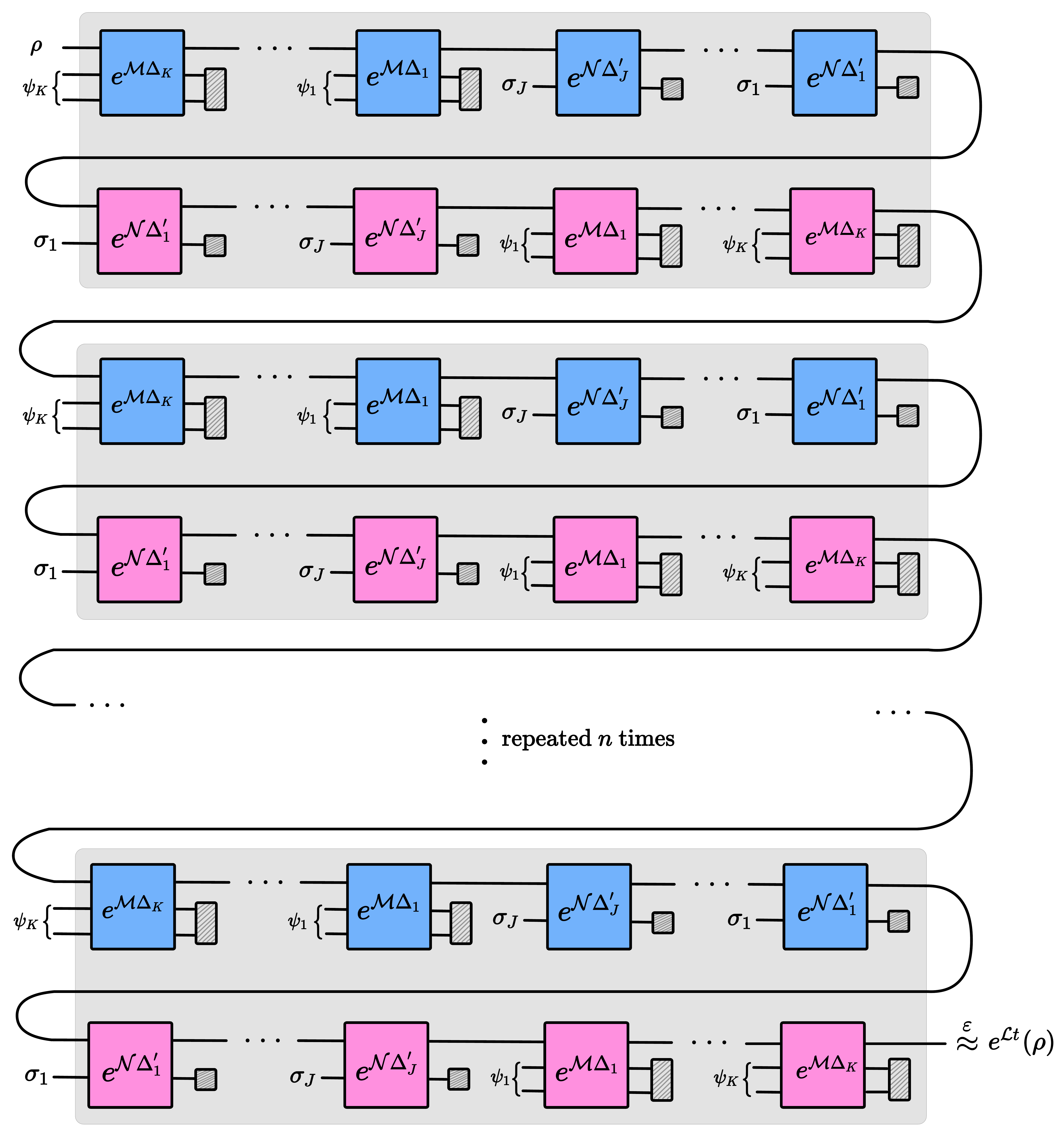}
    \caption{The gray strip at the very top represents the flow of Algorithm~2 and is repeated $n = O((J+K)\left\Vert \mathcal{L}\right\Vert_{\max}^2 t^2/\varepsilon)$ times. The blue boxes represent the flow of Steps 1 and 2 of Algorithm~2, while the pink boxes represent that of Steps 3 and 4 (reverse of Steps 1 and 2). The algorithm  approximates the target quantum channel~$e^{\mathcal{L}t}$ with an approximation error of~$\varepsilon$. Each hatched rectangle represents the trace-out operation. Initially, the state $\rho$ is in register~1. The program states $\psi_{1}, \ldots, \psi_{K}$ are in registers~2 and 3, while the program states $\sigma_{1}, \ldots, \sigma_{J}$ are in register~2. The channels $e^{\mathcal{M}\Delta_{i}}$ and $e^{\mathcal{N}\Delta'_{i}}$ correspond to Lindbladian evolutions with Lindbladians $\mathcal{M}$ and $\mathcal{N}$  defined in \eqref{eq:lindblad-fixed-M-master} and~\eqref{eq:lindblad-fixed-N-master}, respectively.}
    \label{fig:trotter-lindblad}
\end{figure}

We repeat the above procedure $n$ times. The step-by-step flow of this algorithm is illustrated in Figure~\ref{fig:trotter-lindblad}. Let us note that the particular ordering of steps used in Algorithm~2 is inspired by that used in \cite[Proposition~2]{Childs2016EfficientDynamics}, and it has the effect of reducing the dependence of the algorithm's sample complexity on $J+K$.

Theorem~\ref{thm:trotter-approach} below states that the algorithm presented above approximates the channel $e^{\mathcal{L}t}$ up to $\varepsilon$ error in diamond distance using $n = O( (J+K)\left \Vert \mathcal{L} \right \Vert_{\max}^2 t^2/\varepsilon)$ copies of each of the program states $\sigma_{1}, \ldots, \sigma_{J}, \psi_{1}, \ldots, \psi_{K}$, so that the total number of program states used is $O((J+K)^2\left \Vert \mathcal{L} \right \Vert_{\max}^2 t^2/\varepsilon)$. 

\begin{theorem}
\label{thm:trotter-approach}
    Given access to $n$ copies of the program states $\sigma_{1}, \ldots, \sigma_{J} \in \mathcal{D}(\mathcal{H}_{H})$ and $\psi_{1}, \ldots, \psi_{K} \in \mathcal{D}(\mathcal{H}_{P} \otimes \mathcal{H}_{Q})$, there exists an algorithm $\mathcal{A}$ satisfying:
    \begin{equation}
        \frac{1}{2}\left \Vert e^{\mathcal{L}t} - \mathcal{A} \right \Vert_{\diamond} \leq \varepsilon.
    \end{equation}
    This approximation to within $\varepsilon$ in diamond distance is achieved using only $n = O((J+K)\left \Vert \mathcal{L} \right \Vert_{\max}^2 t^2/\varepsilon)$ copies of each provided program state, where~$\left \Vert \mathcal{L} \right \Vert_{\max}^2$ is defined in \eqref{eq:L-max-def}. 
    The total number of program states used is then 
    \begin{equation}
        O\!\left((J+K)^2\left \Vert \mathcal{L} \right \Vert_{\max}^2 t^2/\varepsilon\right ).
    \end{equation}
 \end{theorem}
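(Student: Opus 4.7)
The plan is to view Algorithm~2 as a symmetric (palindromic) second-order Trotter splitting of the single-term decomposition
\begin{equation}
\mathcal{L} \;=\; \sum_{j=1}^{J} \mathcal{L}_{H,j} \;+\; \sum_{k=1}^{K} \mathcal{D}_{L_k}, \qquad \mathcal{L}_{H,j}(\cdot) \coloneqq -i[c_j\sigma_j,\cdot], \qquad \mathcal{D}_{L_k}(\cdot) \coloneqq L_k(\cdot)L_k^{\dagger} - \tfrac{1}{2}\{L_k^{\dagger}L_k,\cdot\},
\end{equation}
in which each single-term exponential $e^{\mathcal{L}_{H,j}\tau}$ and $e^{\mathcal{D}_{L_k}\tau}$ is implemented by one application of the corresponding swap-based subroutine of Step~2(a) or Step~1(a). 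The ordering Steps~1$\to$2$\to$3$\to$4 is exactly the palindromic ordering from Proposition~2 of Childs--Kothari--Somma, and it is this symmetry that will ultimately cut the term-counting prefactor in the sample complexity from the naive $(J+K)^2$ down to linear in $J+K$.

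First I would establish two per-subroutine approximation lemmas. On the dissipative side, the identical tensor-network calculation used in the proof of Theorem~\ref{thm:mul-op-plus-H} (inherited from Theorem~1 of the prequel paper~\cite{Patel2023}) gives
\begin{equation}
\operatorname{Tr}_{34}\!\left[\e^{\mathcal{M}\Delta_k}(\rho\otimes\psi_k)\right] \;=\; \e^{\mathcal{D}_{L_k}\Delta_k/\|L_k\|_2^2}(\rho) + O(\Delta_k^2),
\end{equation}
while on the Hamiltonian side the standard density-matrix-exponentiation identity $\operatorname{Tr}_3[-i[\mathsf{SWAP},\rho\otimes\sigma]] = -i[\sigma,\rho]$, combined with a Taylor expansion of $\e^{\mathcal{N}\Delta'_j}$, yields
\begin{equation}
\operatorname{Tr}_{3}\!\left[\e^{\mathcal{N}\Delta'_j}(\rho\otimes\sigma_j)\right] \;=\; \e^{\mathcal{L}_{H,j}\Delta'_j/|c_j|}(\rho) + O(\Delta'^{\,2}_j),
\end{equation}
where the $\mathsf{sign}(c_j)$ factor in $\mathcal{N}$ together with $\Delta'_j = |c_j|\tau$ reproduces the intended $c_j\sigma_j$ Hamiltonian over a common short step $\tau$. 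The particular choices of $\Delta_k$ and $\Delta'_j$ specified by the algorithm are precisely what normalize every subroutine to simulate one small step $\tau$ of its corresponding single-term generator.

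Second, I would bound the diamond-distance error of one full round of the algorithm. A round is a composition of $2(J+K)$ back-to-back subroutines, whose aggregated subroutine errors add by the triangle inequality and monotonicity of the diamond norm under composition with channels; using $\Delta_k,\Delta'_j \le \|\mathcal{L}\|_{\max}\tau$, this aggregate is of order $(J+K)\|\mathcal{L}\|_{\max}^{2}\tau^{2}$. The discrepancy between the ideal palindromic composition and the exact one-step $\e^{\mathcal{L}\cdot 2\tau}$ is the usual Strang-splitting error, of order $\tau^{3}$ times a nested-commutator prefactor bounded polynomially in $\|\mathcal{L}\|_{\max}$, and is subleading. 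Telescoping over $n$ rounds by the triangle inequality in diamond norm then gives total error $O(n(J+K)\|\mathcal{L}\|_{\max}^{2}\tau^{2})$; choosing $\tau$ of order $t/(\|\mathcal{L}\|_{\max}n)$ so that the simulated time sums to $t$ and requiring this error to be at most $\varepsilon$ yields $n = O((J+K)\|\mathcal{L}\|_{\max}^{2}t^{2}/\varepsilon)$. Since each round consumes exactly one copy of each of the $J+K$ program states, the total number of program states used is $(J+K)n = O((J+K)^{2}\|\mathcal{L}\|_{\max}^{2}t^{2}/\varepsilon)$.

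The main obstacle is the careful separation of the two independent error sources --- the ideal Strang Trotter error that would persist even with perfect access to each $\e^{\mathcal{L}_i\tau}$, and the approximation error committed by each swap-based subroutine --- and verifying that the palindromic ordering is what forces the $J+K$ dependence of the final bound to be linear rather than quadratic. Following the Childs--Kothari--Somma argument, the linear scaling arises from pairing nested-commutator contributions around the center of the palindrome so that many would-be cross terms cancel; without this reorganization, a first-order Trotter analysis would give $n$ scaling as $(J+K)^{2}$, and beating this is exactly what yields the quoted bound.
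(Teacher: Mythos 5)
Your proposal is correct in substance and reaches the stated bound, but it is organized quite differently from the paper's own argument, so a comparison is worthwhile. You split the total error into two independent sources: (i) the ideal Strang-splitting error of the palindromic product of exact single-term channels $\e^{\mathcal{L}_i\tau}$, which you dispatch with standard second-order Trotter theory as a black box, and (ii) the per-subroutine approximation error $O(\Delta_i^2)$ incurred by each swap-based implementation, which you accumulate by telescoping in diamond norm. The paper instead performs one monolithic second-order expansion of the entire $2rn$-step product (Lemma~\ref{lem:trotter-induction-analysis}, proved by induction), and then shows by an explicit combinatorial index-pairing argument (the set $\mathcal{T}$ in \eqref{eq:key-eq-proof-app} and Figures~\ref{fig:proof-idea}--\ref{fig:proof-idea-other}) that the cross terms $\sum_{i<j}\mathcal{L}''_i\circ\mathcal{L}''_j$ reproduce $\tfrac12\sum_{i,j}\mathcal{L}'_i\circ\mathcal{L}'_j\,n^2\Delta^2$ up to $O(rn\Delta^2)$ under the palindromic ordering, with the residual dominated by the $\mathcal{J}_i$ terms --- which are exactly your source (ii). Your modular route is cleaner and correctly identifies the palindrome as what cuts the cross-term error from $O(r^2n\Delta^2)$ to subleading; its advantage is that it reuses known Trotter machinery, while the paper's advantage is that it is self-contained and never needs a quantitative second-order product-formula bound. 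Two points where your sketch is thinner than it should be: first, to certify that the ideal Strang error is genuinely subleading you must track its dependence on $r=J+K$ (the nested-commutator prefactor is polynomial in $r$ as well as in $\left\Vert\mathcal{L}\right\Vert_{\max}$, and subleading-ness then holds only in the regime $\varepsilon\lesssim\left\Vert\mathcal{L}\right\Vert_{\max}t$, which is the regime of interest); second, your normalization $\tau\sim t/(\left\Vert\mathcal{L}\right\Vert_{\max}n)$ makes the per-round times sum to $t/\left\Vert\mathcal{L}\right\Vert_{\max}$ rather than $t$ --- one must rescale to total time $\left\Vert\mathcal{L}\right\Vert_{\max}t$ for the normalized generators (as the paper does at the end of Appendix~\ref{app:error-analysis-gen}) for the $\left\Vert\mathcal{L}\right\Vert_{\max}^2$ factor to appear in $n$ as claimed. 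Neither issue is fatal, but both need to be made explicit for the argument to close.
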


\begin{proof}
The proof is given in Appendix~\ref{app:error-analysis-gen}.
\end{proof}

\begin{remark}
To compare the sample complexity of Algorithms~1 and 2, consider that the following inequality holds:
\begin{equation}
    c \leq (J+K)\left \Vert \mathcal{L} \right \Vert_{\max},
\end{equation}
where $c$ is defined in \eqref{eq:normalization-c-gen}.
As a consequence, it follows that the sample complexity of Algorithm~1 never exceeds that of Algorithm~2. Regardless, for completeness and comparison purposes, we have provided a full analysis of the Trotter-like approach in Appendix~\ref{app:error-analysis-gen}.
\end{remark}

\subsection{Simulating Linear Combinations}

\label{sec:single-op-lin-comb}

In this section, we consider the case of simulating a Lindbladian evolution with a single Lindblad operator $L$ and no Hamiltonian. 
To be more specific, we want to simulate the action of the dynamics on a quantum state $\rho$ for time~$t$ according to the following Lindbladian:
\begin{align}
    \mathcal{L} (\rho) = L\rho L^{\dagger} - \frac{1}{2} \left \{L^{\dagger}L, \rho \right\}.
    \label{eq:lindbladian-single}
\end{align}
Furthermore, in this case, let us suppose that the Lindblad operator $L$ is provided beforehand as a linear combination of encoded operators $L_{1}, \ldots, L_{K}$:
\begin{equation}\label{eq:linear-comb-L}
    L \coloneqq  \sum_{k=1}^{K} c_{k} L_{k},
\end{equation}
where $c_{k} > 0$ and $\left\Vert L_k\right\Vert_2 = 1$, for all $k$ ranging from 1 to $K$, and the operator~$L_{k}$ is encoded in the program state $\psi_{k}$ in the same way that was considered previously in~\eqref{eq:L-encode} (i.e., $|\psi_k\rangle \coloneqq (L_k \otimes I) |\Gamma\rangle$).
Let us further suppose that we have access to unitaries $U_{1}, \ldots, U_{K}$ that prepare these program states:
\begin{equation}
    U_{k}|0\rangle = |\psi_{k}\rangle \qquad \forall k \in \{1,\ldots, K\}.
\end{equation}

Given efficient implementations of the above unitaries, we can efficiently implement a unitary $\text{select-}U$ defined in the following way:
\begin{equation}
    \text{select-}U \coloneqq \sum_{k=1}^{K} |k\rangle\!\langle k| \otimes U_{k}.
\end{equation}
Let us additionally suppose that we have access to a unitary $U_{A}$ that can efficiently prepare the state:
\begin{equation}
    U_A |0\rangle = |A\rangle \coloneqq \frac{1}{\sqrt{\sum_{k=1}^{K} c_{k}}} \sum_{k=1}^{K}\sqrt{c_{k}}|k\rangle.
\end{equation}

Given access to the unitaries $\{U_{k}\}_{k=1}^{K}$ and their Hermitian conjugates $\{U_{k}^{\dagger}\}_{k=1}^{K}$ (for realizing the unitaries select-$U$ and select-$U^{\dagger}$), as well as unitaries $U_{A}$ and $U_{A}^{\dagger}$, we can use the standard linear combination of unitaries (LCU) technique \cite{childs2012} to prepare the following state:
\begin{equation}
    |\phi\rangle \coloneqq \frac{1}{\sqrt{c}}\sum_{k=1}^{K} c_{k}|\psi_{k}\rangle = \frac{1}{\sqrt{c}} (L \otimes I)|\Gamma\rangle,
    \label{eq:program-state-linear}
\end{equation}
where $c \coloneqq \left \Vert L \right \Vert^2_{2}$.
Due to the fact that the LCU technique is inherently probabilistic, we use the  amplitude amplification (AA) method~\cite{brassard2002quantum} to boost the probability of success to one. 
We do not go into specifics here because LCU and AA are both fairly common quantum algorithmic primitives. 
However, for completeness, we show in Figure~\ref{fig:linear-comb} the exact quantum circuit that implements this method for preparing our desired state given by~\eqref{eq:program-state-linear}. 
Please refer to \cite[Section 2.2]{kothari2014efficient} and \cite[Section~II-B]{Chakraborty2023} for more precise information on the LCU method and using AA to boost the success probability. 
Overall, we prepare $n$ copies of this state for our purposes. 

Now, we propose a quantum algorithm for simulating the quantum channel~$e^{\mathcal{L}t}$, which corresponds to the Lindbladian in \eqref{eq:lindbladian-single}, using $n$ copies of the program state $\phi \coloneqq |\phi\rangle\!\langle \phi|$ up to error $\varepsilon$ in diamond distance. 
As before, for providing an analysis related to the diamond distance, the input is an unknown quantum state $\rho \in \mathcal{D}(\mathcal{H}_{R} \otimes \mathcal{H}_{S})$ on joint system $RS$, with $R$ as a reference system. 
Furthermore, let the $i^{\text{th}}$ copy of the program state $\phi$ be a quantum state of the joint system $P_{i}Q_{i}$.

\textbf{Algorithm~3 ---} Set $n\in \mathbb{N}$, with the particular choice specified later. Set $i=1$. Given the $i^{\text{th}}$ copy of $\phi$, i.e., $\phi_{P_{i}Q_{i}}$, perform the following two steps:\label{algo:single-op}
\begin{enumerate}
    \item Evolve the joint quantum state $\rho_{RS}\otimes \phi_{P_{i}Q_{i}}$ under the following Lindbladian $\mathcal{M}$ acting on $SP_i Q_i $, for some small duration of time~$\Delta = ct/n $:
    \begin{equation}\label{eq:single-fixed-L-master-linear}
        \mathcal{M}\left (\cdot\right) \coloneqq M(\cdot) M^{\dagger} - \frac{1}{2} \left \{M^{\dagger}M, (\cdot) \right\}.
    \end{equation}
    In the above, the Lindblad operator $M$ acts on the joint system $SP_{i}Q_{i}$, and we define it as
    \begin{equation}
        M \coloneqq \frac{1}{\sqrt{d}}\left(I_{S}\otimes |\Gamma\rangle\! \langle \Gamma |_{P_{i}Q_{i}}\right) \left( \mathsf{SWAP}_{SP_{i}} \otimes I_{Q_{i}}\right).
        \label{eq:orig-M-def-linear}
    \end{equation}
    \item Trace out systems $P_{i}Q_{i}$.
\end{enumerate}
We repeat the above procedure using each copy of $\phi$, i.e., for all $i$ ranging from $1$ to $n$.

\begin{figure}
    \centering
    \includegraphics[scale=0.1]{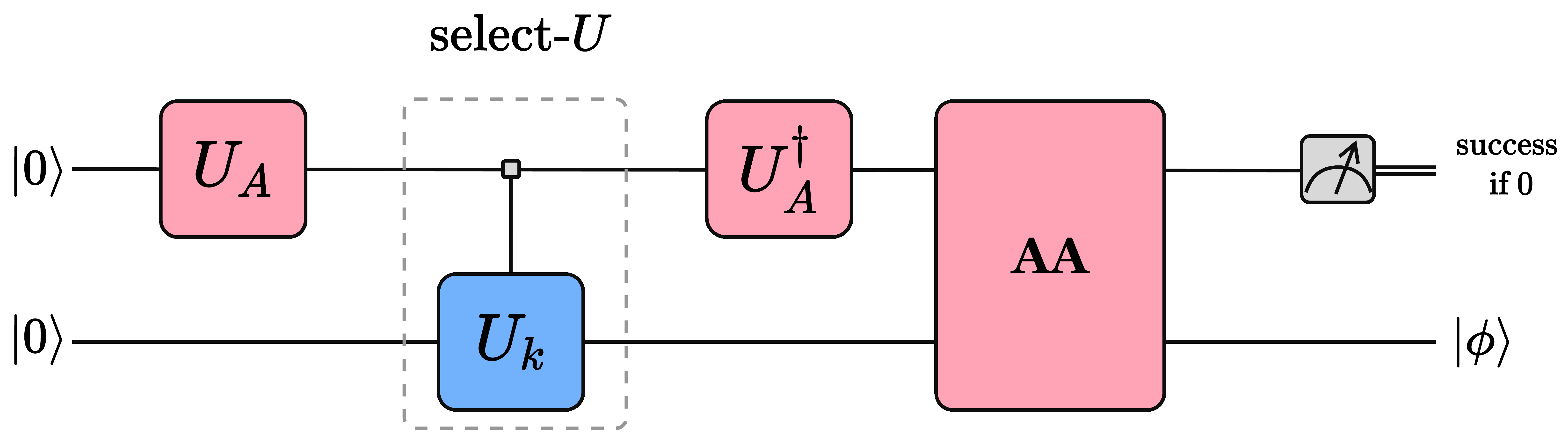}
    \caption{A quantum circuit for preparing $|\phi\rangle$ using the technique of linear combination of unitaries. The rectangle labelled ``AA'' represents multiple rounds of  amplitude amplification.}
    \label{fig:linear-comb}
\end{figure}

The following theorem states that the above algorithm uses $n = O(c^2t^2/\varepsilon)$ copies of $\phi$ to simulate the Lindbladian evolution of $\rho_{RS}$, according to the Lindbladian in \eqref{eq:lindbladian-single}, for time $t$, such that the final state is $\varepsilon$-close in normalized trace distance to the ideal target state $\left(\mathcal{I}_{R} \otimes \e^{\mathcal{L}t} \right)\left(\rho_{RS}\right)$, for an arbitrary input state $\rho_{RS}$. For preparing one copy of $\phi$, we need to make 
\begin{equation}
    O\!\left(\frac{1 }{\sqrt{c}}\sum_{k=1}^{K} c_{k}\right)
\end{equation}
queries to the unitaries $\{U_{k}\}_{k=1}^{K}$, $\{U_{k}^{\dagger}\}_{k=1}^{K}$, $U_{A}$, and $U_{A}^{\dagger}$ \cite[Section~II-B]{Chakraborty2023}. Therefore, for preparing $O(c^2t^2/\varepsilon)$ copies of $\phi$, we need to make
\begin{equation}
    O\!\left(\frac{1}{\varepsilon}\left(\sum_{k=1}^{K} c_{k} \right)c^{3/2}t^2\right)
\end{equation}
queries to these unitaries.

\begin{theorem}\label{thm:lin-comb}
Given access to $n$ copies of the program state $\phi \in \mathcal{D}\!\left(\mathcal{H}_{P} \otimes \mathcal{H}_{Q}\right)$, defined in \eqref{eq:program-state-linear}, each of which can be prepared using unitaries $\{U_{k}\}_{k=1}^{K}$, $\{U_{k}^{\dagger}\}_{k=1}^{K}$, $U_{A}$, and $U_{A}^{\dagger}$, there exists a quantum algorithm~$\mathcal{A}$ such that the following error bound holds:
    \begin{equation}
        \frac{1}{2}\left \Vert \e^{\mathcal{L}t} - \mathcal{A} \right \Vert_{\diamond} \leq \varepsilon,
    \end{equation}
with only $n = O(c^2 t^2/\varepsilon)$ copies of $\phi$, where $c = \left \Vert L \right \Vert^2_{2}$. In other words, $\mathcal{A}$ uses only $n = O(c^2 t^2/\varepsilon)$ copies of $\phi$ to approximate the channel $ \e^{\mathcal{L}t}$ up to $\varepsilon$ error in normalized diamond distance.
\end{theorem}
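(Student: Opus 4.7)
The plan is to observe that Algorithm~3 is essentially an instance of the single-Lindblad-operator case analyzed in the prequel paper~\cite{Patel2023}, with the ``effective'' encoded operator being $L$ itself, which now has Hilbert--Schmidt norm $\sqrt{c} = \|L\|_{2}$ rather than unity. Concretely, observe from \eqref{eq:program-state-linear} that
\begin{equation}
    c\,\phi \;=\; (L\otimes I)\,|\Gamma\rangle\!\langle\Gamma|\,(L^{\dagger}\otimes I),
\end{equation}
so $\phi$ plays exactly the role of the program state $\psi_k$ from \eqref{eq:psi_k-L_k-def} with $L_k \leftrightarrow L$ and $\|L_k\|_2^2 \leftrightarrow c$. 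The single iteration of Algorithm~3 is then structurally identical to the $\mathcal{E}^{(2)}$ branch inside Algorithm~1, so I can recycle the same tensor-network reduction used in the proof of Theorem~\ref{thm:mul-op-plus-H}.

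First, I would Taylor expand the target channel as in \eqref{eq:general-desired-state}. Next, I would write out the state produced by one iteration of Algorithm~3 applied to the joint state $\rho_{RS}\otimes \phi_{P_iQ_i}$, namely $\operatorname{Tr}_{P_iQ_i}[\,e^{\mathcal{M}\Delta}(\rho_{RS}\otimes\phi_{P_iQ_i})\,]$, and expand the exponential to first order in $\Delta$. The reference system $R$ plays no role in this step since $\mathcal{M}$ acts trivially on it. Using the tensor-network identities from Figures~3, 5, 6, and~7 of \cite{Patel2023} (cf.\ the identical simplification carried out in the last display chain of the proof of Theorem~\ref{thm:mul-op-plus-H}), I would verify the key identity
\begin{equation}
    \operatorname{Tr}_{P_iQ_i}\!\bigl[\mathcal{M}(\rho_{RS}\otimes \phi_{P_iQ_i})\bigr]
    \;=\; \tfrac{1}{c}\Bigl(L\rho_S L^{\dagger} - \tfrac{1}{2}\{L^{\dagger}L,\rho_S\}\Bigr)
    \;=\; \tfrac{1}{c}\,\mathcal{L}(\rho_{RS}),
\end{equation}
so that one iteration implements $e^{(\mathcal{L}/c)\Delta}(\rho_{RS}) + O(\Delta^2)$.

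Then I would set $\Delta = ct/n$ so that $n$ compositions produce $e^{\mathcal{L}t}(\rho_{RS})$ at leading order, with a per-step error of $O(\Delta^2)$. Concatenating $n$ channels and using subadditivity of the diamond norm under composition gives an accumulated error of $O(n\Delta^2) = O(c^2 t^2/n)$; choosing $n = O(c^2 t^2/\varepsilon)$ achieves the target diamond-norm accuracy $\varepsilon$. The formal error accounting is identical to that in Appendix~B of~\cite{Patel2023}, so I would simply refer to it rather than reproduce it.

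The only nonroutine step is the identity for $\operatorname{Tr}_{P_iQ_i}[\mathcal{M}(\rho\otimes \phi)]$, and even that is not a real obstacle because it is exactly the same tensor-network calculation already carried out in the prequel, specialized to the operator $L$ with $\|L\|_2^2 = c$. The main ``work'' is therefore just to state the reduction cleanly: confirm that $\phi$ has the encoding form of \eqref{eq:L-encode} up to the overall normalization $1/\sqrt{c}$, invoke the existing tensor-network lemmas, and conclude by the same iteration-and-Taylor argument used for Theorem~\ref{thm:mul-op-plus-H}.
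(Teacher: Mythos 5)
Your proposal is correct and follows essentially the same route as the paper's proof: substitute $\phi = \tfrac{1}{c}(L\otimes I)|\Gamma\rangle\!\langle\Gamma|(L^{\dagger}\otimes I)$, invoke the prequel's tensor-network identities to get $\operatorname{Tr}_{PQ}[\mathcal{M}(\rho\otimes\phi)] = \tfrac{1}{c}\mathcal{L}(\rho)$, and then iterate with $\Delta = ct/n$ and accumulate the $O(\Delta^2)$ per-step errors. The paper likewise defers the formal error accounting to Appendix~B of \cite{Patel2023}, so there is no substantive difference.
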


\begin{proof}
In what follows, we give a brief sketch of the proof. We will not go into detail because the proof follows a similar line of reasoning as the proof of Theorem~1 of \cite{Patel2023}. 
We leave out system labels for clarity and simplicity, and let us also suppose that the input state $\rho$ does not have a reference system~$R$ for the purposes of this proof sketch.

Let us begin by expanding the target state $\e^{\mathcal{L}t}(\rho)$ at $t=0$ using its Taylor series:
\begin{equation}\label{eq:single-desired-state-linear}
    \e^{\mathcal{L}t}(\rho) = \rho + \mathcal{L}(\rho) t +  \frac{1}{2} (\mathcal{L}\circ  \mathcal{L})(\rho) t^2 + \ldots .
\end{equation}
In the first step of Algorithm~3, we simulate the Lindbladian evolution of $\rho \otimes \phi$, given by \eqref{eq:single-fixed-L-master-linear}, with the Lindbladian $\mathcal{M}$ for some small duration of time $\Delta$, and we then trace out $\phi$. The output state obtained after this step is
\begin{equation}\label{eq:single-op-proof-sketch}
    \operatorname{Tr}_{23}\!\left[\e^{\mathcal{M}\Delta}(\rho \otimes \phi)\right] = \rho + \operatorname{Tr}_{23}\!\left[\mathcal{M}(\rho \otimes \phi) \right] \Delta + O(\Delta^2).
\end{equation}
We first expand the second term and rewrite it in the following way:
\begin{multline}
    \operatorname{Tr}_{23}\!\left[\mathcal{M}\left (\rho \otimes \phi \right) \right ] = \operatorname{Tr}_{23}\! \left [M(\rho \otimes \phi) M^{\dagger} \right]\\  - \frac{1}{2}\operatorname{Tr}_{23}
    \!\left [M^{\dagger}M \left(\rho \otimes  \phi\right) \right] - \frac{1}{2}\operatorname{Tr}_{23}
    \!\left [\left(\rho \otimes  \phi\right)M^{\dagger}M  \right].
\end{multline}
Substituting $\phi =  \frac{1}{c}(L \otimes I)|\Gamma\rangle\!\langle \Gamma |(L^{\dag} \otimes I)$ into the above equation, we get
\begin{align}\notag
    \operatorname{Tr}_{23}\!\left[\mathcal{M}\left (\rho \otimes \phi \right) \right ]  
    & = \frac{1}{c}  \Bigg ( \operatorname{Tr}_{23}\! \left [M \left (\rho \otimes  (L \otimes I)|\Gamma\rangle\!\langle \Gamma |(L^{\dag} \otimes I)   \right) M^{\dagger} \right] \notag \\ 
    &\qquad - \frac{1}{2}\operatorname{Tr}_{23}
    \left [M^{\dagger}M \left(\rho \otimes  (L \otimes I)|\Gamma\rangle\!\langle \Gamma |(L^{\dag} \otimes I) \right) \right] \notag \\
    & \qquad - \frac{1}{2}\operatorname{Tr}_{23}
    \left [\left(\rho \otimes   (L \otimes I)|\Gamma\rangle\!\langle \Gamma |(L^{\dag} \otimes I)\right)M^{\dagger}M  \right] \Bigg) \label{eq:thm-2-proof-M}\\
    & = \frac{1}{c} \left ( L \rho  L^{\dagger} - \frac{1}{2} L^{\dagger}L \rho - \frac{1}{2} \rho  L^{\dagger}L \right)
    \label{eq:thm-2-proof-L} \\
    & = \frac{1}{c} \mathcal{L}(\rho).\label{eq:Tr[M]-L-linear}
\end{align}
We use tensor-network diagrams, as shown in Figures 3, 4, 5, and 6 of \cite{Patel2023}, to obtain \eqref{eq:thm-2-proof-L} from \eqref{eq:thm-2-proof-M} (alternatively, see Appendix~A of \cite{Patel2023}).

Using \eqref{eq:Tr[M]-L-linear}, we rewrite \eqref{eq:single-op-proof-sketch} as
\begin{align}
    \operatorname{Tr}_{23}\!\left[\e^{\mathcal{M}\Delta}(\rho \otimes \phi)\right] & = \rho + \frac{1}{c} \mathcal{L}(\rho)\Delta + O(\Delta^2) \\
    & = \e^{(\mathcal{L}/c) \Delta}(\rho) + O(\Delta^2).
\end{align}
Substituting $\Delta = c t/n$ and repeating Algorithm~3 for $n = O(c^2 t^2 / \varepsilon)$ times produces a quantum state that is $\varepsilon$-close to the ideal target state~$\e^{\mathcal{L}t}(\rho)$ in normalized trace distance.
\end{proof}

\subsection{Simulating Lindbladian Polynomials}

\label{sec:single-op-poly}

In this section, we consider the case of simulating a Lindbladian evolution with a single Lindblad operator $L$ and no Hamiltonian term, as in the previous section (see \eqref{eq:lindbladian-single}). Here, we suppose that the Lindblad operator $L$ is represented as a polynomial of linear operators encoded in program states.

To be more specific, let us suppose that $L$ can be decomposed as follows:
\begin{equation}\label{eq:poly-L}
    L \coloneqq \sum_{s\in \mathcal{S}} c_{s} T_{s},
\end{equation}
where $c_{s} > 0$ and each term $T_{s}$, defined as follows, is of degree $|s|$:
\begin{equation}\label{eq:poly-exp}
    T_{s} = L_{s[1]} L_{s[2]} \cdots L_{s[|s|]}.
\end{equation}
Here, we define $\mathcal{S}$ as a set of strings over an alphabet $\{1, 2, \ldots, K\}$. Let the notation $s[i]$ denote the $i^{\text{th}}$ character of the string $s$. Furthermore, we use the notation $|s|$ to denote the length of the string $s$, and we use the notation $D$ to denote the degree of the polynomial given by \eqref{eq:poly-L}, i.e., $D \coloneqq \max_{s \in \mathcal{S}}\{|s|\}$. Finally, let us suppose that the set $\mathcal{S}$ consists of polynomially many (in the number of qubits) strings, i.e., $|\mathcal{S}| = \log d$, where $|\mathcal{S}|$ denotes the cardinality of the set $\mathcal{S}$.
The operators $L_{1}, \ldots, L_{K}$ are encoded in the program states $\psi_{1}, \ldots, \psi_{K}$, respectively, in the same way that was considered previously in~\eqref{eq:L-encode}, so that $\left \Vert L_1 \right \Vert_2 = \cdots = \left \Vert L_K \right \Vert_2 = 1$. For this case as well, let us suppose that we have access to unitaries $U_{1}, \ldots, U_{K}$ that prepare these program states by acting on the maximally entangled state~$|\Phi\rangle$ as follows:
\begin{equation}
    U_{k}|\Phi\rangle = |\psi_{k}\rangle.
\end{equation}

Given efficient implementations of the above unitaries, we can efficiently implement a unitary select-$W$ defined as:
\begin{equation}
    \text{select-}W \coloneqq \sum_{i = 0}^{|\mathcal{S}| - 1} |i\rangle\!\langle i| \otimes W_{s_{i}},
\end{equation}
where $s_{i}$ is the $i^{\text{th}}$ string of the set $\mathcal{S}$ (the order is irrelevant for indexing purposes) and
\begin{align}
    W_{s_{i}} \coloneqq U_{s_{i}[1]} \otimes U_{s_{i}[2]} \otimes \cdots \otimes U_{s_{i}[|s|]} \otimes I^{\otimes D-|s|}.
\end{align}
Moreover, let us suppose that we have access to a unitary $U_{A}$ that can efficiently prepare a pure quantum state defined as follows:
\begin{equation}
    |A\rangle \coloneqq \frac{1}{\sqrt{\sum_{i=0}^{|\mathcal{S}| - 1} c_{s_i}}} \sum_{i=0}^{|\mathcal{S}| - 1}\sqrt{c_{s_i}}|i\rangle.
\end{equation}

Given access to the set $\{U_{k}\}_{k=1}^{K}$ of unitaries  and their Hermitian conjugates~$\{U_{k}^{\dagger}\}_{k=1}^{K}$ (for realizing unitaries select-$W$ and select-$W^{\dagger}$), as well as unitaries $U_{A}$ and $U_{A}^{\dagger}$, we can prepare the following state using the LCU technique:
\begin{align}
    |\phi\rangle \coloneqq \frac{1}{\sqrt{c}}\sum_{s\in \mathcal{S}} c_{s} |\phi_{s}\rangle,
    \label{eq:program-state-poly}
\end{align}
where
\begin{align}
    |\phi_{s}\rangle & \coloneqq |\psi_{s[1]}\rangle |\psi_{s[2]}\rangle\cdots |\psi_{s[|s|]} \rangle\underbrace{|\Phi \rangle  \cdots |\Phi \rangle }_\text{$D-|s|$ times},\\
    c & \coloneqq \left \Vert \sum_{s\in \mathcal{S}} c_{s} |\phi_{s}\rangle \right \Vert_{2}^2 \label{eq:c-def-poly}.
\end{align}
As stated before in the previous section, the LCU technique is inherently probabilistic. 
As a result, we employ the amplitude amplification method to boost its probability of success to nearly one. Figure~\ref{fig:poly} depicts the quantum circuit that implements this technique for preparing our desired state given by \eqref{eq:program-state-poly}. 
Overall, we prepare $n$ copies of this state for our purposes.

We are now in a position to propose a quantum algorithm for simulating the quantum channel $\e^{\mathcal{L}t}$, which corresponds to the Lindbladian with a single Lindblad operator given by \eqref{eq:poly-L}, using $n$ copies of the program state $\phi \coloneqq |\phi\rangle\!\langle \phi|$ up to error $\varepsilon$ in diamond distance. 
As previously stated, in order to provide an analysis of the diamond distance, let $\rho \in \mathcal{D}(\mathcal{H}_{R} \otimes \mathcal{H}_{S})$ be an unknown quantum state given as input over the joint system $RS$, where the system $R$ acts as a reference system. 
Also, let the $i^{\text{th}}$ copy of the program state $\phi$ be a quantum state over a joint system $P^{1}_{i}Q^{1}_{i}\cdots P^{D}_{i}Q^{D}_{i}$, i.e., $\phi \in \mathcal{D}(\mathcal{H}_{P^1_{i}} \otimes \mathcal{H}_{Q^1_{i}} \otimes \cdots \otimes \mathcal{H}_{P^D_{i}} \otimes \mathcal{H}_{Q^D_{i}})$. 
For brevity, we use $(PQ)^{D}_{i}$ as a shorthand for $P^{1}_{i}Q^{1}_{i}\cdots P^{D}_{i}Q^{D}_{i}$.

\begin{figure}
    \centering
    \includegraphics[scale=0.09]{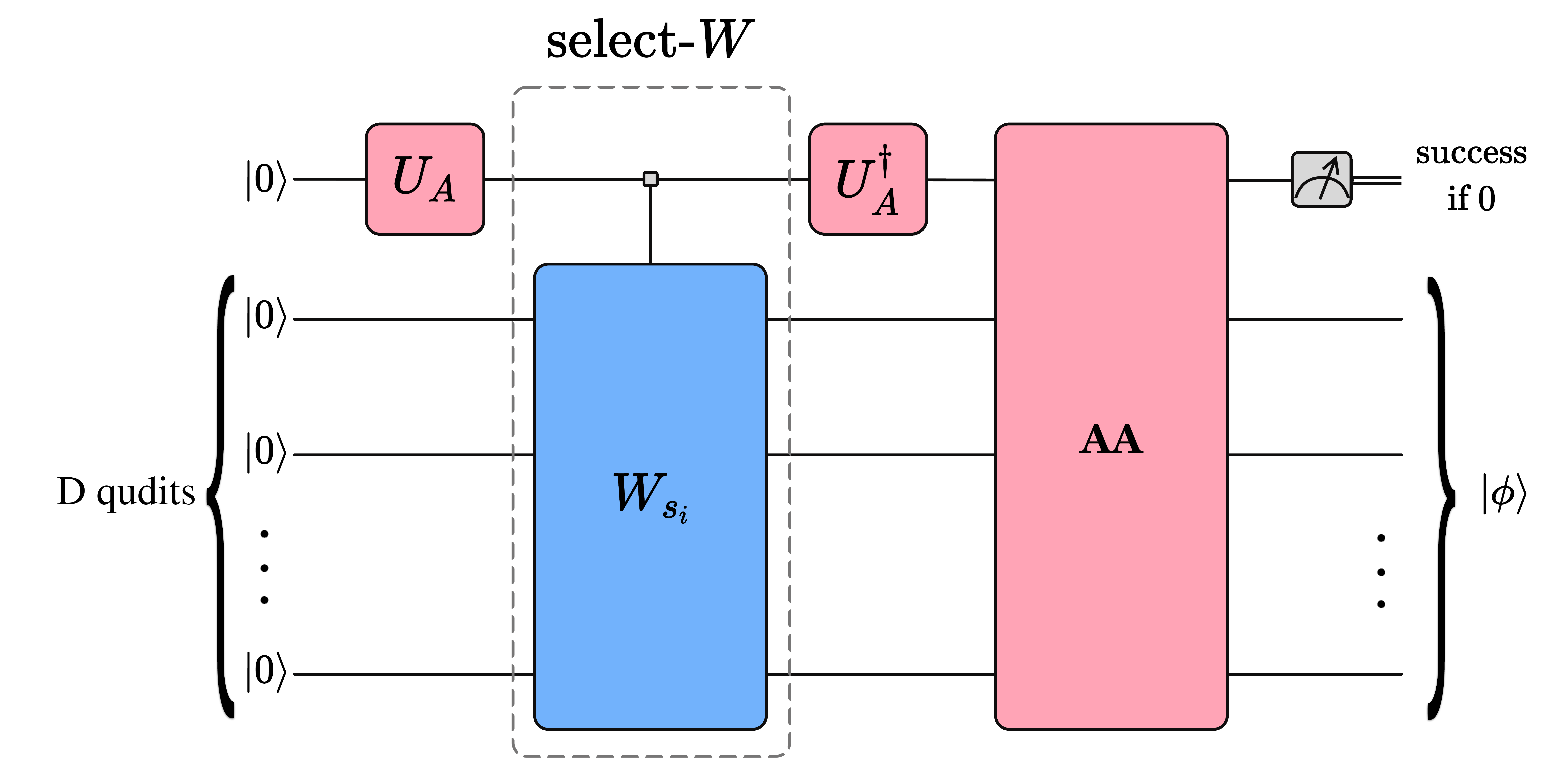}
    \caption{A quantum circuit for preparing $|\phi\rangle$ using the technique of linear combination of unitaries. The rectangle labelled ``AA'' represents multiple rounds of  amplitude amplification.}
    \label{fig:poly}
\end{figure}

\textbf{Algorithm 4 ---} Set $n\in \mathbb{N}$, with a particular choice specified later. Set~$i=1$. Given the $i^{\text{th}}$ copy of $\phi$, denoted by $\phi_{(PQ)^{D}_{i}}$, perform the following two steps:
\begin{enumerate}
    \item Evolve the joint quantum state $\rho_{RS}\otimes \phi_{(PQ)^{D}_{i}}$ according to the following Lindbladian $\mathcal{M}$ acting on systems $S(PQ)_{i}^D$, for some small duration of time $\Delta = ct/n$:
    \begin{equation}\label{eq:poly-fixed-L-master}
        \mathcal{M}(\cdot) \coloneqq M(\cdot) M^{\dagger} - \frac{1}{2} \left \{M^{\dagger}M, (\cdot) \right\},
    \end{equation}
    where the Lindblad operator $M$ acts on the joint system $S(PQ)^{D}_{i}$, and we define it as
    \begin{equation}
        M \coloneqq  \frac{1}{d^{D/2}} \left(I_{S}\otimes\left(|\Gamma\rangle\!\langle\Gamma |^{\otimes D}\right)_{(PQ)^{D}_{i}} \right) \left(  \mathsf{CYCSWAP}_{SP^{1}_{i} \cdots P^{D}_{i}} \otimes I_{Q^{1}_{i} \cdots Q^{D}_{i}}\right).
        \label{eq:M-cyc-swap}
    \end{equation}
    \item Trace out the systems $(PQ)^{D}_{i}$.
\end{enumerate}
We repeat the above procedure with each copy of $\phi$, for $i$ ranging from $1$ to~$n$.

According to the following theorem, the algorithm above uses $n = O(c^2t^2/\varepsilon)$ copies of $\phi$ to simulate the Lindbladian evolution of $\rho_{RS}$, given by \eqref{eq:lindbladian-single}, for time $t$, such that the final state is $\varepsilon$-close to the target state $\left(\mathcal{I}_{R} \otimes \e^{\mathcal{L}t} \right)\left(\rho_{RS}\right)$ in normalized trace distance, for an arbitrary input state $\rho_{RS}$.
For preparing one copy of $\phi$, we need to make 
\begin{equation}
    O\!\left(\frac{1 }{\sqrt{c}}\sum_{i=0}^{|\mathcal{S}| - 1} c_{s_i}\right)
\end{equation} queries to the unitaries $\{U_{k}\}_{k=1}^{K}$, $\{U_{k}^{\dagger}\}_{k=1}^{K}$, $U_{A}$, and $U_{A}^{\dagger}$ \cite[Section~II-B]{Chakraborty2023}. Therefore, for preparing $O(c^2t^2/\varepsilon)$ copies of $\phi$, we need to make
\begin{equation}
O\!\left(\frac{1}{\varepsilon} \left(\sum_{i=0}^{|\mathcal{S}| - 1} c_{s_i} \right)c^{3/2} t^2 \right)
\end{equation} queries to these unitaries.

\begin{figure}
    \centering
    \includegraphics[scale=0.115]{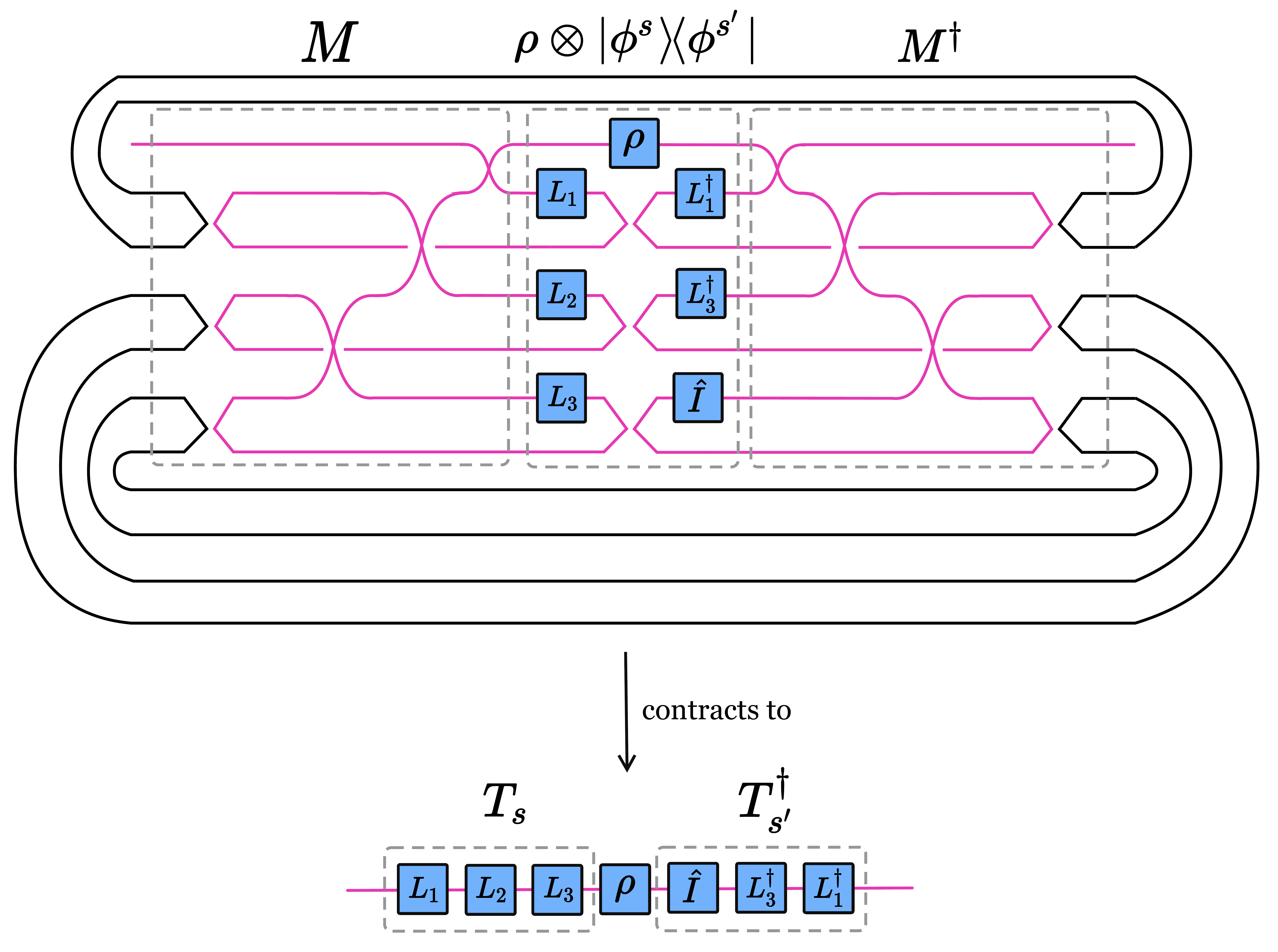}
    \caption{Tensor-network diagram of $\operatorname{Tr}_{2,\ldots,7}\!\left [M \left (\rho \otimes  |\phi^{s}\rangle\!\langle \phi^{s'}|  \right) M^{\dagger}\right]$, where~$s=123$ and $s'=13$. The whole network on the top contracts to the network on the bottom as a result of tracing out registers $2,\ldots,7$. The pink line flowing from the left end of the first system to the right end illustrates how the networks are connected after the partial trace operation.}
    \label{fig:lindblad-first-term}
\end{figure}

\begin{figure}
    \centering
    \includegraphics[scale=0.11]{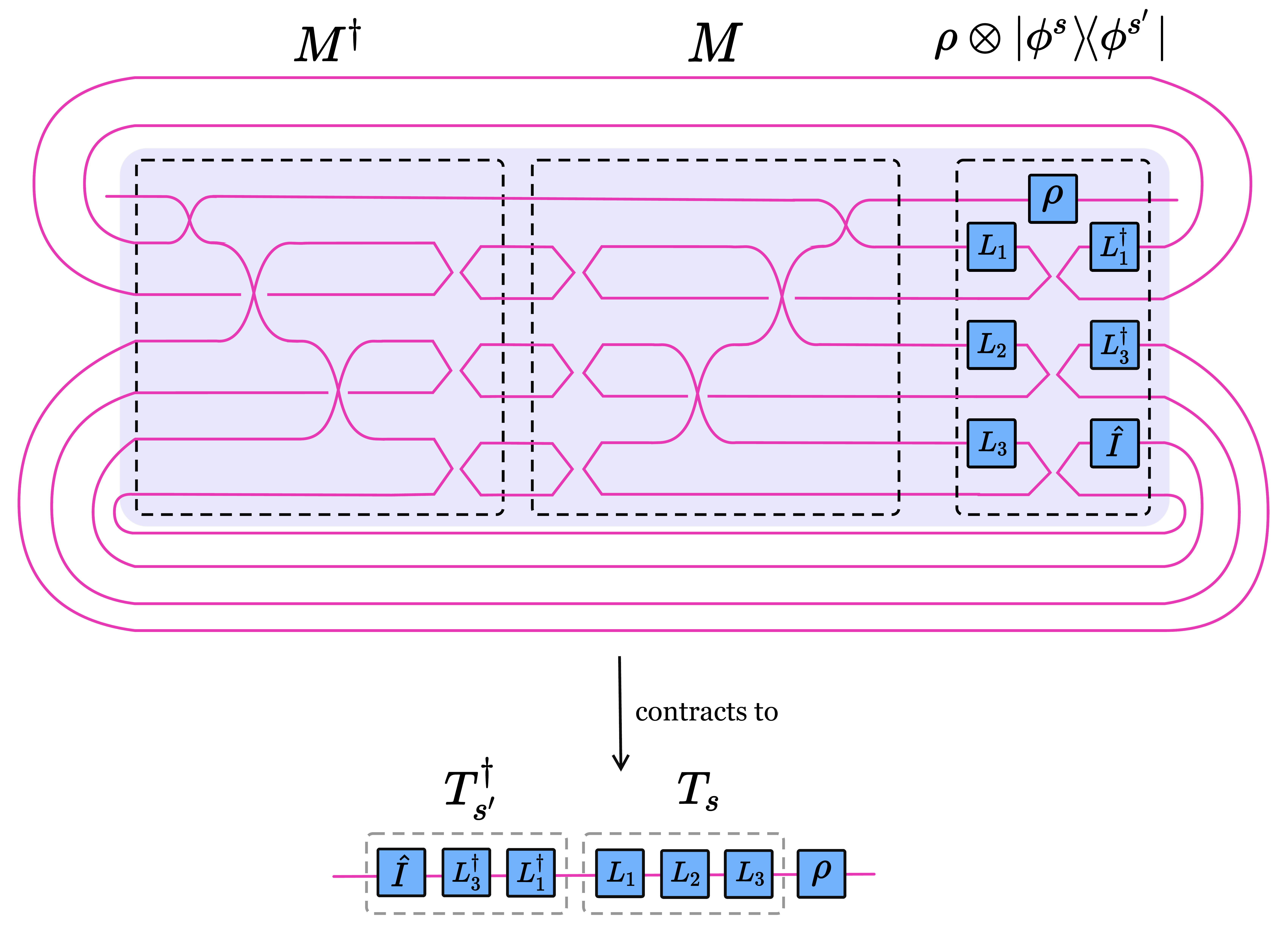}
    \caption{Tensor-network diagram of $\operatorname{Tr}_{2,\ldots,7}\!\left [M^{\dagger}M\left (\rho \otimes  |\phi^{s}\rangle\!\langle \phi^{s'}|  \right) \right]$, where $s=123$ and $s'=13$. 
    The whole network on the top contracts to the network on the bottom as a result of tracing out registers $2,\ldots,7$. 
    The tensor-network diagram enclosed in the light-purple box depicts $M^{\dagger} M\left (\rho \otimes  |\phi^{s}\rangle\!\langle \phi^{s'}|  \right)$. 
    The lines flowing out and back in represent the trace-out operation being performed on this network.
    Finally, in order to visualize how the network on the top contracts to that on the bottom after the trace-out operation, simply follow the pink line flowing from the left end of the first system to its right end.}
    \label{fig:lindblad-second-term}
\end{figure}

\begin{figure}
    \centering
    \includegraphics[scale=0.11]{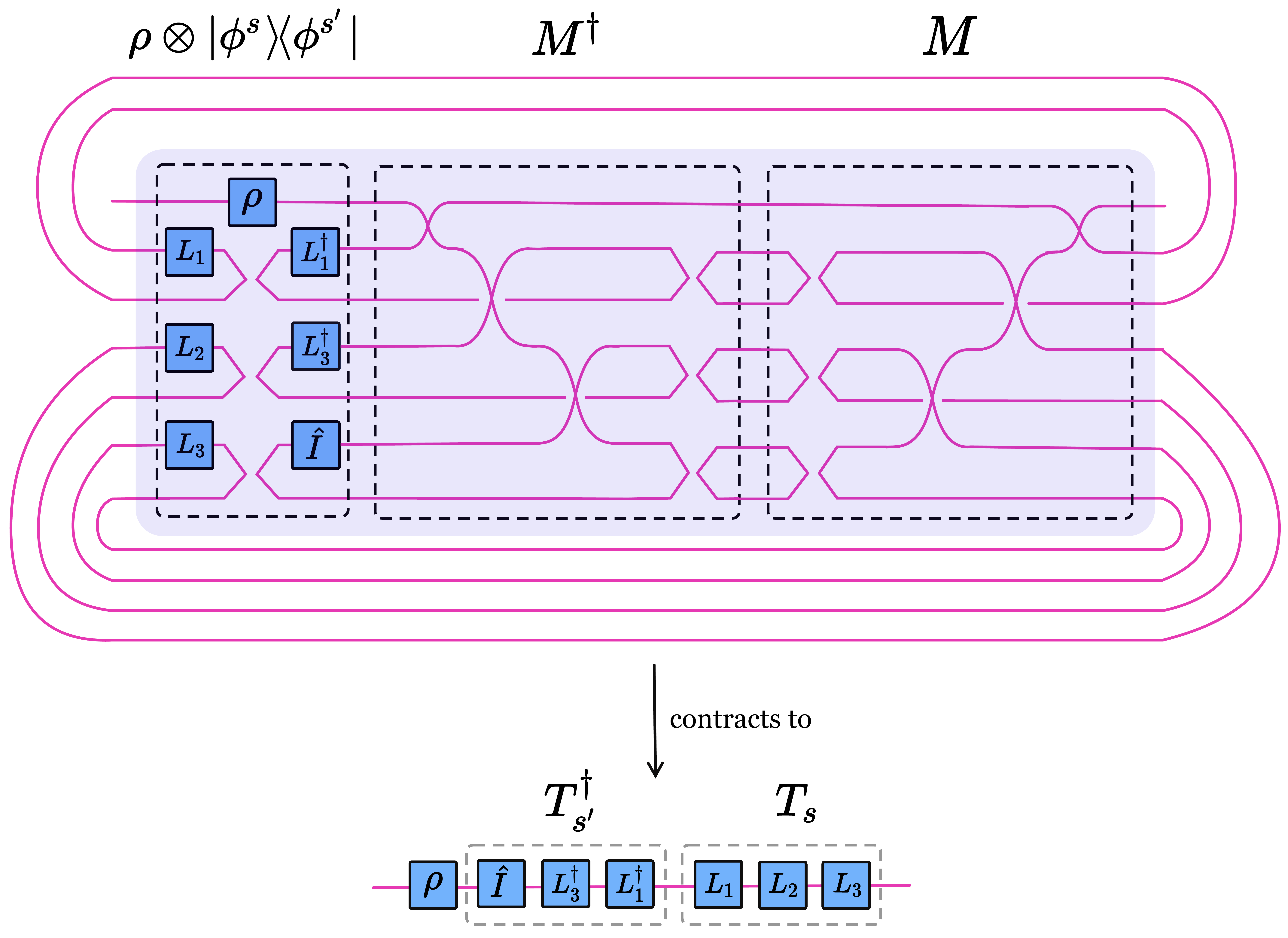}
    \caption{Tensor-network diagram of $\operatorname{Tr}_{2,\ldots,7}\!\left [\left (\rho \otimes  |\phi^{s}\rangle\!\langle \phi^{s'}|  \right) M^{\dagger}M \right]$, where $s=123$ and $s'=13$.
    The whole network on the top contracts to the network on the bottom as a result of tracing out registers $2,\ldots,7$. 
    The tensor-network diagram enclosed in the light-purple box depicts $\left (\rho \otimes  |\phi^{s}\rangle\!\langle \phi^{s'}|  \right) M^{\dagger} M$. 
    The lines flowing out and back in represent the trace-out operation being performed on this network.
    Finally, in order to visualize how the network on the top contracts to that on the bottom after the trace-out operation, simply follow the pink line flowing from the left end of the first system to its right end.}
    \label{fig:lindblad-thirdterm}
\end{figure}

\begin{theorem}
Given access to $n$ copies of the program state $\phi \in \mathcal{D}(\mathcal{H}_{P^1_{i}} \otimes \mathcal{H}_{Q^1_{i}} \otimes \cdots \otimes \mathcal{H}_{P^D_{i}} \otimes \mathcal{H}_{Q^D_{i}})$, each of which can be prepared using unitaries $\{U_{k}\}_{k=1}^{K}$, $\{U_{k}^{\dagger}\}_{k=1}^{K}$, $U_{A}$, and $U_{A}^{\dagger}$, there exists a quantum algorithm~$\mathcal{A}$ such that the following error bound holds:
    \begin{equation}
        \frac{1}{2}\left \Vert \e^{\mathcal{L}t} - \mathcal{A} \right \Vert_{\diamond} \leq \varepsilon,
    \end{equation}
with only $n = O(c^2 t^2/\varepsilon)$ copies of $\phi$, where $c$ is defined in \eqref{eq:c-def-poly}. In other words, $\mathcal{A}$ uses only $n = O(c^2 t^2/\varepsilon)$ copies of $\phi$ to approximate the channel $ \e^{\mathcal{L}t}$ up to $\varepsilon$ error in normalized diamond distance.
\end{theorem}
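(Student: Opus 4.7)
The plan is to mirror the structure of the proof of Theorem~\ref{thm:lin-comb}, with the crucial modification that the swap in $M$ is replaced by a cyclic swap so as to realize the \emph{product} of operators encoded along the chain of program registers. As before, for simplicity I suppress the reference system $R$ and omit system labels; the analysis straightforwardly extends to the case with a nontrivial reference.

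First, I would Taylor expand both the target and the output of a single iteration of Algorithm~4. The target state is $e^{\mathcal{L}t}(\rho) = \rho + \mathcal{L}(\rho)t + O(t^2)$, and after one iteration I obtain
\begin{equation}
\operatorname{Tr}_{(PQ)^{D}_{i}}\!\left[e^{\mathcal{M}\Delta}\!\left(\rho\otimes\phi\right)\right] = \rho + \operatorname{Tr}_{(PQ)^{D}_{i}}\!\left[\mathcal{M}(\rho\otimes\phi)\right]\Delta + O(\Delta^2).
\end{equation}
Hence the whole task reduces to establishing the key identity
\begin{equation}
\operatorname{Tr}_{(PQ)^{D}_{i}}\!\left[\mathcal{M}(\rho\otimes\phi)\right] = \tfrac{1}{c}\,\mathcal{L}(\rho).
\label{eq:key-identity-poly}
\end{equation}

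To prove \eqref{eq:key-identity-poly}, I would expand $\phi = \frac{1}{c}\sum_{s,s'\in\mathcal{S}} c_{s}c_{s'}\,|\phi_{s}\rangle\!\langle\phi_{s'}|$ and the Lindbladian $\mathcal{M}$ into its three summands $M(\cdot)M^{\dagger}$, $-\tfrac{1}{2}M^{\dagger}M(\cdot)$, and $-\tfrac{1}{2}(\cdot)M^{\dagger}M$, and then analyze each summand term by term over pairs $(s,s')$. The heart of the argument, which is the main obstacle, is the following contraction identity for each ordered pair of strings:
\begin{align}
\operatorname{Tr}_{2,\ldots,2D+1}\!\left[M(\rho\otimes|\phi_{s}\rangle\!\langle\phi_{s'}|)M^{\dagger}\right] &= T_{s}\,\rho\,T_{s'}^{\dagger},\\
\operatorname{Tr}_{2,\ldots,2D+1}\!\left[M^{\dagger}M(\rho\otimes|\phi_{s}\rangle\!\langle\phi_{s'}|)\right] &= T_{s'}^{\dagger}T_{s}\,\rho,\\
\operatorname{Tr}_{2,\ldots,2D+1}\!\left[(\rho\otimes|\phi_{s}\rangle\!\langle\phi_{s'}|)M^{\dagger}M\right] &= \rho\,T_{s'}^{\dagger}T_{s}.
\end{align}
This is where the cyclic swap \eqref{eq:cyc-swap-def} appearing in \eqref{eq:M-cyc-swap} does the essential work: it chains the system register~$S$ through the program slots $P^{1},\ldots,P^{D}$ in order, and the subsequent projection onto $|\Gamma\rangle\!\langle\Gamma|^{\otimes D}$ across $(PQ)^{D}$ glues each $(L_{s[\ell]}\otimes I)|\Gamma\rangle$ together via the $|\Gamma\rangle$-identity $\langle\Gamma|(A\otimes I)|\Gamma\rangle = \operatorname{Tr}[A]$-type wire contractions, precisely assembling the product $L_{s[1]}L_{s[2]}\cdots L_{s[|s|]} = T_{s}$. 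The trailing $|\Phi\rangle$ factors in the definition of $|\phi_{s}\rangle$ simply close off with identity wires (since $L=I$ there), which is why the normalization $1/d^{D/2}$ in \eqref{eq:M-cyc-swap} exactly cancels the $d$'s produced by $\langle\Gamma|\Gamma\rangle = d$ at each such contraction. I would verify these three contractions by the tensor-network manipulations illustrated in Figures~\ref{fig:lindblad-first-term}, \ref{fig:lindblad-second-term}, and~\ref{fig:lindblad-thirdterm}, which depict the representative case $s=123$, $s'=13$; the general case is identical in structure.

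Once the three contractions are in hand, summing with weights $c_{s}c_{s'}/c$ and using \eqref{eq:poly-L} yields
\begin{equation}
\operatorname{Tr}_{(PQ)^{D}_{i}}\!\left[\mathcal{M}(\rho\otimes\phi)\right] = \tfrac{1}{c}\!\left(L\rho L^{\dagger} - \tfrac{1}{2}\{L^{\dagger}L,\rho\}\right) = \tfrac{1}{c}\,\mathcal{L}(\rho),
\end{equation}
which establishes \eqref{eq:key-identity-poly}. Substituting this back gives, for one iteration, $e^{(\mathcal{L}/c)\Delta}(\rho) + O(\Delta^2)$. Finally, setting $\Delta = ct/n$ and composing $n$ iterations, the accumulated error is $O(n\Delta^{2}) = O(c^{2}t^{2}/n)$ in normalized trace distance (by the same sub-additivity and diamond-distance argument as in Appendix~B of~\cite{Patel2023}, extended trivially to include a reference system $R$), so the choice $n = O(c^{2}t^{2}/\varepsilon)$ suffices to achieve error $\varepsilon$, as claimed.
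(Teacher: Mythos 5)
Your proposal is correct and follows essentially the same route as the paper: reduce everything to the three partial-trace contraction identities for $M$ built from the cyclic swap (the content of Lemma~\ref{lem:cyclic-swap-lem} in Appendix~\ref{app:cyclic-swap-lem}), sum over pairs $(s,s')$ with weights $c_s c_{s'}/c$ to recover $\tfrac{1}{c}\mathcal{L}(\rho)$, and then repeat $n=O(c^2t^2/\varepsilon)$ times with $\Delta = ct/n$. The only cosmetic difference is that you write the second and third contractions as $T_{s'}^{\dagger}T_{s}\rho$ and $\rho\,T_{s'}^{\dagger}T_{s}$ where the paper writes $T_{s}^{\dagger}T_{s'}\rho$ and $\rho\,T_{s}^{\dagger}T_{s'}$; since the double sum over $s,s'$ is symmetric, both yield $L^{\dagger}L\rho$ and $\rho L^{\dagger}L$, so this is immaterial.
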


\begin{proof}
For ease of notation, we do not explicitly write system labels, and let us suppose that the input state $\rho$ does not have a reference system $R$. 

The key point here is to demonstrate that the following three equalities hold for all $s, s' \in \mathcal{S}$:
\begin{align}
    \operatorname{Tr}_{2,3,\ldots, 2D+1}\left [M \left (\rho \otimes  |\phi^{s}\rangle\!\langle \phi^{s'}|  \right) M^{\dagger}\right] & =  T_{s} \rho  T_{s'}^{\dagger} , \\
    \operatorname{Tr}_{2,3,\ldots, 2D+1}
    \left [M^{\dagger}M \left(\rho \otimes   |\phi^{s}\rangle\!\langle \phi^{s'}| \right) \right] & =  T_{s}^{\dagger}T_{s'} \rho , \\
    \operatorname{Tr}_{2,3,\ldots, 2D+1}
     \left [\left(\rho \otimes  |\phi^{s}\rangle\!\langle \phi^{s'}| \right)M^{\dagger}M  \right] & =  \rho  T_{s}^{\dagger}T_{s'},
\end{align}
for $M$ as defined in \eqref{eq:M-cyc-swap}. Lemma~\ref{lem:cyclic-swap-lem} in Appendix~\ref{app:cyclic-swap-lem} provides a full statement and proof. One can alternatively employ tensor-network diagrams to establish these equalities.
For visualizing these diagrams for a simple case with $s=123$ and $s'=13$, please refer to Figures~\ref{fig:lindblad-first-term}, \ref{fig:lindblad-second-term}, and \ref{fig:lindblad-thirdterm}.

With the above equalities holding for all $s, s' \in \mathcal{S}$, then we can essentially use the same type of reasoning as in the proof of Theorem~\ref{thm:lin-comb} to conclude that
\begin{equation}
\operatorname{Tr}_{2,3,\ldots, 2D+1}\!\left[\mathcal{M}\left (\rho \otimes \phi \right) \right ] = \frac{1}{c} \mathcal{L}(\rho).\label{eq:Tr[M]-L-poly}
\end{equation}
This further implies that
\begin{align}
    \operatorname{Tr}_{2,3,\ldots ,2D+1}\!\left[\e^{\mathcal{M}\Delta}(\rho \otimes \phi)\right] & = \rho + \operatorname{Tr}_{2,3,\ldots, 2D+1}\!\left[\mathcal{M}\left (\rho \otimes \phi \right) \right ]\Delta + O(\Delta^2)\\
    & = \rho + \frac{1}{c} \mathcal{L}(\rho)\Delta + O(\Delta^2) \\
    & = \e^{(\mathcal{L}/c) \Delta}(\rho) + O(\Delta^2).
\end{align}

Substituting $\Delta = c t/n$ and repeating Algorithm~4 for $n = O(c^2 t^2 / \varepsilon)$ times produces a quantum state that is $O(\varepsilon)$-close to the ideal target state~$\e^{\mathcal{L}t}(\rho)$ in normalized trace distance.
\end{proof}

\subsection{Wave Matrix Lindbladization Improves upon Tomography}

\label{sec:WML-vs-WMT}

In this section, we provide a comparison between two different methods of simulating the channel $\e^{\mathcal{L}t}$, either by wave matrix Lindbladization, or by performing tomography to determine a classical description of the Lindbladian~$\mathcal{L}$, followed by simulation from this description. The main finding, summarized in Figure~\ref{fig:wml-tomo}, is that wave matrix Lindbladization is superior to the tomography method, in the sense that its sample complexity has no dimension dependence, while the sample complexity of the tomography method does. Let us note that a similar comparison was made between density matrix exponentiation and state tomography in \cite[Section~2]{Kimmel2017HamiltonianComplexity}, demonstrating the advantage of the former over the latter in terms of the sample complexity required for sample-based Hamiltonian simulation.

To conduct this comparison, let us consider a Lindbladian $\mathcal{L}$ with just one Lindblad operator $L$, similar to what we did in the previous two sections (see~\eqref{eq:lindbladian-single}). 
Our goal is to simulate the quantum channel $\e^{\mathcal{L}t}$ corresponding to~$\mathcal{L}$ for a time $t$. 
Let us suppose that the Lindblad operator $L$ is given upfront by encoding it in a program state $|\psi\rangle$, as in \eqref{eq:L-encode}, and that we have access to $n$~copies of this state.

As mentioned above, one naive way to implement the channel $\e^{\mathcal{L}t}$ is to first obtain the full classical description of the encoded Lindblad operator $L$ by using multiple copies of the state $|\psi\rangle$. 
By classical description, we mean the complete matrix representation of $L$. Once we have such a description of~$L$, we can then use known algorithms \cite{Childs2016EfficientDynamics, Cleve2016EfficientEvolution, Miessen2022QuantumDynamics, Schlimgen2022QuantumOperators, Suri2022Two-UnitarySimulation} to actually implement the channel $\e^{\mathcal{L}t}$ on a quantum computer. 
However, an important point to note here is that obtaining an exact classical description of an arbitrary operator encoded in a pure state may not be computationally efficient in general. 
Furthermore, obtaining an operator that is ``close" to the encoded operator is sufficient for many practical purposes.
As a result, we look into a slightly relaxed version of the above problem, in which the task is to output the full classical description of an operator $\tilde{L}$ that is $\delta$-close to $L$ in the Hilbert--Schmidt distance measure, given $n$ copies of $|\psi\rangle$.  We further require that the operator $\tilde{L}$ satisfies the constraint $\left\Vert \tilde L\right\Vert_{2} = 1$.  We refer to this modified relaxed problem as \textit{wave-matrix tomography}. More formally, the task is to output the full classical description of an operator $\tilde{L}$ such that $\left\Vert \tilde L\right\Vert_{2} = 1$ and
\begin{equation}
    \left \Vert \tilde{L} - L \right \Vert_{2} \leq \delta,
    \label{eq:tildeL-L}
\end{equation}
by using multiple copies of the state $|\psi\rangle$.

Having said that, a natural question in the context of Lindbladian simulation is how well the quantum channel $\e^{\tilde{\mathcal{L}}t}$ approximates the channel $\e^{\mathcal{L}t}$, where~$\tilde{\mathcal{L}}$ is the Lindbladian with a single Lindblad operator $\tilde{L}$. 
The following theorem states that if the two Lindblad operators $\tilde L$ and $L$ are $\delta = O(\varepsilon/t)$-close in Hilbert--Schmidt distance, then the quantum channel $\e^{\tilde{\mathcal{L}}t}$ approximates the target channel $\e^{\mathcal{L}t}$ up to $O(\varepsilon)$ error in diamond distance.

\begin{theorem}
\label{thm:comp-main-thm}
    Let $L$ and $\tilde L$ be two $d \times d$-dimensional linear operators such that $\left \Vert L \right \Vert_{2} = \left \Vert \tilde L \right \Vert_{2} = 1$ and 
    \begin{equation}
    \left \Vert \tilde{L} - L \right \Vert_{2} = O(\varepsilon/t).
    \label{eq:thm-tildeL-L}
\end{equation}
Then, the following holds:
\begin{equation}
     \frac{1}{2}\left\Vert \e^{\tilde{\mathcal{L}}t} -  \e^{\mathcal{L}t}\right\Vert_{\diamond} = O(\varepsilon),
\end{equation}
where $\tilde{\mathcal{L}}$ and $\mathcal{L}$ are Lindbladians with Lindblad operators $\tilde{L}$ and $L$, respectively.
\end{theorem}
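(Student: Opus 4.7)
The plan is to reduce everything to bounding the diamond norm of the difference of Lindbladians, $\|\tilde{\mathcal{L}} - \mathcal{L}\|_\diamond$, and then to control that difference in terms of $\|\tilde{L} - L\|_2$. The first step relies on the standard integral identity
\begin{equation}
    \e^{\tilde{\mathcal{L}}t} - \e^{\mathcal{L}t} = \int_0^t \e^{\tilde{\mathcal{L}}(t-s)} \circ (\tilde{\mathcal{L}} - \mathcal{L}) \circ \e^{\mathcal{L}s}\, ds,
\end{equation}
which, after tensoring with the identity on a reference $R$, taking trace norm, and using the facts that $\e^{\mathcal{L}s}$ and $\e^{\tilde{\mathcal{L}}(t-s)}$ are quantum channels (so $(\mathcal{I}_R \otimes \e^{\mathcal{L}s})$ preserves trace norm on states and contracts it on Hermitian operators), yields $\tfrac{1}{2}\|\e^{\tilde{\mathcal{L}}t} - \e^{\mathcal{L}t}\|_\diamond \leq \tfrac{t}{2}\|\tilde{\mathcal{L}} - \mathcal{L}\|_\diamond$. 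This is a clean, well-known reduction, so I would state it and move on.

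Next I would expand the Lindbladian difference. Writing $\Delta L \coloneqq \tilde{L} - L$, straightforward algebra gives
\begin{align}
    (\tilde{\mathcal{L}} - \mathcal{L})(\rho) &= L\rho (\Delta L)^\dagger + (\Delta L)\rho L^\dagger + (\Delta L)\rho (\Delta L)^\dagger \notag\\
    &\quad - \tfrac{1}{2}\bigl\{L^\dagger \Delta L + (\Delta L)^\dagger L + (\Delta L)^\dagger \Delta L,\, \rho\bigr\}.
\end{align}
For each superoperator piece of the form $\rho \mapsto A \rho B^\dagger$, I would use the standard bound $\|\rho \mapsto A\rho B^\dagger\|_\diamond \leq \|A\|_\infty \|B\|_\infty$ (which follows from H\"older's inequality $\|X Y Z\|_1 \leq \|X\|_\infty \|Y\|_1 \|Z\|_\infty$ applied after tensoring with $\mathcal{I}_R$), and analogously $\|\rho \mapsto A\rho\|_\diamond, \|\rho \mapsto \rho A\|_\diamond \leq \|A\|_\infty$ for the anti-commutator terms. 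Using $\|\cdot\|_\infty \leq \|\cdot\|_2$ and submultiplicativity of $\|\cdot\|_\infty$, together with $\|L\|_2 = \|\tilde L\|_2 = 1$, I would obtain a bound of the form
\begin{equation}
    \|\tilde{\mathcal{L}} - \mathcal{L}\|_\diamond \leq C_1 \|\Delta L\|_2 + C_2 \|\Delta L\|_2^2
\end{equation}
for absolute constants $C_1, C_2$.

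Finally I would combine the two steps. Substituting the hypothesis $\|\Delta L\|_2 = O(\varepsilon/t)$ yields
\begin{equation}
    \tfrac{t}{2}\|\tilde{\mathcal{L}} - \mathcal{L}\|_\diamond \leq \tfrac{t}{2}\bigl(C_1 \|\Delta L\|_2 + C_2 \|\Delta L\|_2^2\bigr) = O(\varepsilon) + O(t) \cdot O(\|\Delta L\|_2^2).
\end{equation}
To tame the quadratic term without a separate case analysis for small~$t$, I would invoke the triangle-inequality bound $\|\Delta L\|_2 \leq \|\tilde L\|_2 + \|L\|_2 = 2$, which gives $t \|\Delta L\|_2^2 \leq 2 \cdot (t \|\Delta L\|_2) = O(\varepsilon)$. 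Combining everything delivers the target bound $\tfrac{1}{2}\|\e^{\tilde{\mathcal{L}}t} - \e^{\mathcal{L}t}\|_\diamond = O(\varepsilon)$. The only mildly delicate step is the diamond-norm bound on superoperators of the form $\rho \mapsto A\rho B^\dagger$ in terms of $\|A\|_2 \|B\|_2$; the rest is bookkeeping. The quadratic-in-$\|\Delta L\|_2$ contribution is the point where one must be slightly careful, but the $\|\Delta L\|_2 \leq 2$ trick absorbs it uniformly in~$t$.
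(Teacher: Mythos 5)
Your proof is correct, and it follows the same overall strategy as the paper: reduce the channel distance to $t\,\lVert\tilde{\mathcal{L}}-\mathcal{L}\rVert_{\diamond}$ and then bound the generator difference by $O(\varepsilon/t)$. The implementations of both steps differ, though, in ways worth noting. For the reduction, you use the Duhamel integral identity together with contractivity of the flanking channels, whereas the paper discretizes $\e^{\mathcal{L}t}$ into $r$ short steps, applies subadditivity of the diamond distance under composition, Taylor-expands each short-time channel, and takes $r\to\infty$; both yield $\lVert\e^{\tilde{\mathcal{L}}t}-\e^{\mathcal{L}t}\rVert_{\diamond}\le t\lVert\tilde{\mathcal{L}}-\mathcal{L}\rVert_{\diamond}$. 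For the generator bound, the paper telescopes each difference (e.g.\ $L\omega L^{\dagger}-\tilde{L}\omega\tilde{L}^{\dagger}=(L-\tilde{L})\omega L^{\dagger}+\tilde{L}\omega(L-\tilde{L})^{\dagger}$) and invokes the generalized H\"older inequality with Schatten-$2$ norms, obtaining a bound that is purely linear in $\lVert\tilde{L}-L\rVert_{2}$ with no quadratic remainder; you instead expand in $\Delta L=\tilde{L}-L$, which produces a quadratic term that you then absorb via $\lVert\Delta L\rVert_{2}\le 2$. Your route has one genuine advantage: by bounding each piece $\rho\mapsto A\rho B^{\dagger}$ through $\lVert A\rVert_{\infty}\lVert B\rVert_{\infty}\le\lVert A\rVert_{2}\lVert B\rVert_{2}$, you work with operator norms that are insensitive to tensoring with $I_{R}$, so the bound is manifestly independent of the reference-system dimension; the paper's direct H\"older application with Schatten-$2$ norms is more delicate on this point, since $\lVert I_{R}\otimes A\rVert_{2}$ scales with $\sqrt{\dim\mathcal{H}_{R}}$, and your version sidesteps that issue cleanly.
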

\begin{proof}
    The proof is provided in Appendix~\ref{app:D}.
\end{proof}

\medskip
We now turn to examining the sample complexity of wave-matrix tomography in order to compare it with our WML approach.
The key question here is: how many copies of the state $|\psi\rangle$ are needed to solve the wave-matrix tomography problem?

To begin with, let us first look at the sample complexity of a related problem: pure-state tomography. 
In pure-state tomography, we are given $n$~copies of a state $|\psi\rangle$, and we want to output a  state $|\tilde{\psi}\rangle$ such that
\begin{equation}
    \frac{1}{2}\left \Vert |\tilde \psi\rangle\!\langle \tilde \psi| - |\psi\rangle\!\langle \psi| \right \Vert_{1} \leq \delta,
    \label{eq:tilde_psi-psi}
\end{equation}
where $\delta \in (0,1]$.
In order to solve the above problem, \cite[Theorem 3]{Haah2017} establishes that it is necessary to use $n$ copies of $|\psi\rangle$, where
\begin{equation}
\label{eq:numofsamplestomo}
    n = \Omega\!\left( \frac{d^2(1-\delta)^2}{\delta^{2}\log(d^2/\delta)}\right).
\end{equation}
In the above expression, $d^2$ is the dimension of the underlying Hilbert space of $|\psi\rangle$, and so we see that the sample complexity is dimension dependent.

Let us now connect pure-state tomography to wave-matrix tomography by reducing the former to the latter. 
In other words, we need to show that we can solve the pure-state tomography problem if we can solve the wave-matrix tomography problem.
The primary reason for showing this reduction is to obtain a lower bound on the sample complexity of the wave-matrix tomography problem.
Note that the input for both problems is the same, i.e., some number of copies of~$|\psi\rangle$.
On the contrary, their outputs are different.
The output of the wave-matrix tomography problem is an operator $\tilde L$ that is $\delta$-close to $L$ in Hilbert--Schmidt distance (see \eqref{eq:tildeL-L}),
while the output of pure-state tomography is a pure state $|\tilde{\psi}\rangle$ that is $\delta$-close to $|\psi\rangle$ in  normalized trace distance (see~\eqref{eq:tilde_psi-psi}).
We now relate these outputs as follows. Let $|\tilde \psi\rangle$ be the following state:
\begin{equation}
    |\tilde \psi\rangle = (\tilde L\otimes I) |\Gamma\rangle.
\end{equation}
Then Lemma~\ref{lem:HS-distance-lem} below states that $|\tilde{\psi}\rangle$ is $\delta$-close to $|\psi\rangle$ if $L$ is $\delta$-close to $\tilde{L}$ in Hilbert--Schmidt distance. As such, $|\tilde{\psi}\rangle$ is a solution to the pure-state tomography problem when the input is $|\psi\rangle$.
This concludes the reduction of pure-state tomography to wave-matrix tomography. 
This further implies that the lower bound on the sample complexity of the pure-state tomography problem, given by \eqref{eq:numofsamplestomo}, is also a lower bound on the sample complexity of the wave-matrix tomography problem.

\begin{lemma}
\label{lem:HS-distance-lem}
 Let $L$ and $\tilde L$ be two $d \times d$-dimensional linear operators such that $\left \Vert L \right \Vert_{2} = \left \Vert \tilde L \right \Vert_{2} = 1$ and 
    \begin{equation}
    \left \Vert \tilde{L} - L \right \Vert_{2} \leq \delta.
    \label{eq:lemma-tildeL-L}
\end{equation}
Then,
\begin{equation}
    \frac{1}{2}\left \Vert |\tilde \psi\rangle\!\langle \tilde \psi| - |\psi\rangle\!\langle \psi| \right \Vert_{1} \leq \delta,
\end{equation}
where $|\tilde \psi\rangle = (\tilde{L} \otimes I) |\Gamma\rangle$ and $|\psi\rangle =  (L \otimes I) |\Gamma\rangle$ are  pure states that encode $\tilde L$ and~$L$, respectively.
\end{lemma}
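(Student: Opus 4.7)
The plan is to reduce the bound on the trace distance between the pure states $|\psi\rangle$ and $|\tilde\psi\rangle$ to the hypothesized Hilbert--Schmidt bound on $L - \tilde L$, using two standard facts: (i) the action of an operator on one half of $|\Gamma\rangle$ directly encodes the Hilbert--Schmidt structure, and (ii) for pure states the trace distance is determined by the inner product.

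First, I would verify that $|\psi\rangle$ and $|\tilde\psi\rangle$ are genuine unit vectors. By the defining property of $|\Gamma\rangle$, we have
\begin{equation}
    \langle \psi | \psi \rangle = \langle \Gamma | (L^\dagger L \otimes I) | \Gamma\rangle = \operatorname{Tr}[L^\dagger L] = \|L\|_2^2 = 1,
\end{equation}
and analogously $\langle\tilde\psi|\tilde\psi\rangle = 1$. Next I would compute the overlap in the same way:
\begin{equation}
    \langle \tilde\psi | \psi \rangle = \langle \Gamma | (\tilde L^\dagger L \otimes I) | \Gamma\rangle = \operatorname{Tr}[\tilde L^\dagger L].
\end{equation}

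Second, I would expand the hypothesis \eqref{eq:lemma-tildeL-L} squared:
\begin{equation}
    \delta^2 \;\geq\; \|\tilde L - L\|_2^2 \;=\; \|\tilde L\|_2^2 + \|L\|_2^2 - 2\operatorname{Re}\operatorname{Tr}[\tilde L^\dagger L] \;=\; 2 - 2 \operatorname{Re}\operatorname{Tr}[\tilde L^\dagger L],
\end{equation}
so $\operatorname{Re}\operatorname{Tr}[\tilde L^\dagger L] \geq 1 - \delta^2/2$. Therefore $|\langle \tilde\psi | \psi\rangle|^2 \geq (\operatorname{Re}\operatorname{Tr}[\tilde L^\dagger L])^2 \geq (1 - \delta^2/2)^2 \geq 1 - \delta^2$.

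Finally, I would invoke the standard identity for pure states,
\begin{equation}
    \tfrac{1}{2}\bigl\| |\tilde\psi\rangle\!\langle\tilde\psi| - |\psi\rangle\!\langle\psi|\bigr\|_1 = \sqrt{1 - |\langle \tilde\psi | \psi\rangle|^2},
\end{equation}
and combine it with the previous bound to obtain $\tfrac{1}{2}\||\tilde\psi\rangle\!\langle\tilde\psi| - |\psi\rangle\!\langle\psi|\|_1 \leq \sqrt{\delta^2} = \delta$, as claimed. There is no real obstacle here; the only point requiring a moment of care is verifying that $|\psi\rangle, |\tilde\psi\rangle$ are normalized so that the pure-state trace-distance formula applies, which follows immediately from the unit Hilbert--Schmidt-norm assumption on $L$ and $\tilde L$.
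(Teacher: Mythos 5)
Your proof is correct and follows essentially the same route as the paper: both compute $\langle\tilde\psi|\psi\rangle = \operatorname{Tr}[\tilde L^\dagger L]$, apply the pure-state trace-distance formula $\sqrt{1-|\langle\tilde\psi|\psi\rangle|^2}$, and relate $1-\operatorname{Re}\operatorname{Tr}[\tilde L^\dagger L]$ to $\tfrac{1}{2}\|\tilde L - L\|_2^2$, differing only in the final algebra (the paper factors $1-|T|^2=(1+|T|)(1-|T|)$ and bounds $1+|T|\le 2$ via Cauchy--Schwarz, whereas you square the lower bound on $\operatorname{Re} T$). One small caveat: the step $(\operatorname{Re}\operatorname{Tr}[\tilde L^\dagger L])^2 \ge (1-\delta^2/2)^2$ is only valid when $1-\delta^2/2\ge 0$, i.e.\ $\delta\le\sqrt{2}$; for $\delta>\sqrt{2}$ the claim is trivial because the normalized trace distance never exceeds $1$, so your argument is complete once this case split is noted.
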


\begin{proof}
    See Appendix~\ref{app:HS-distance-lem}.
\end{proof}
\medskip 

\begin{figure}
    \centering
    \includegraphics[scale=0.16]{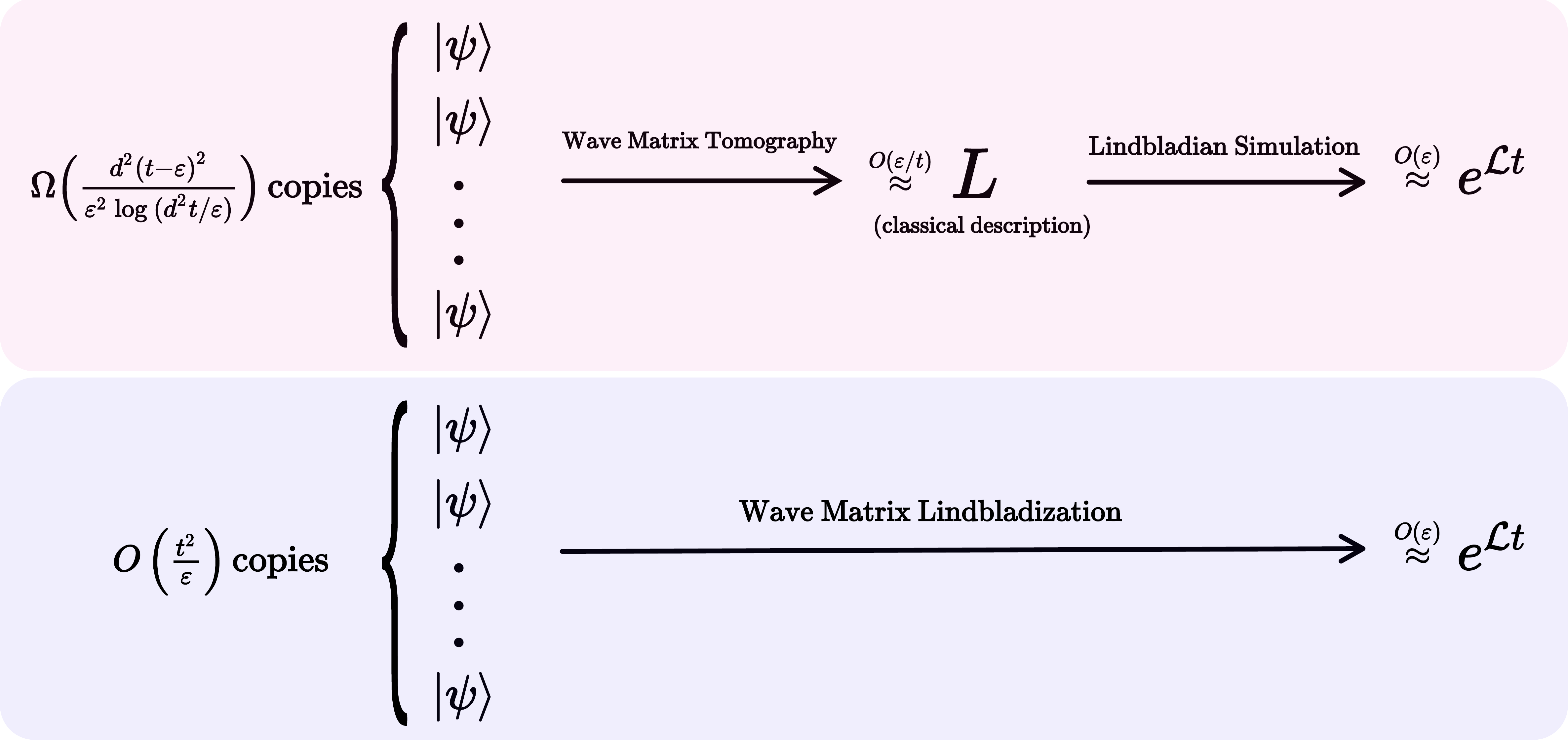}
    \caption{Comparing sample complexities of two different approaches for approximately simulating the channel $e^{\mathcal{L}t}$: (pink box) performing wave-matrix tomography to first obtain the classical description of the Lindblad operator~$L$, followed by simulation from this description and (blue box) wave matrix Lindbladization.}
    \label{fig:wml-tomo}
\end{figure}

According to Theorem~\ref{thm:comp-main-thm}, the quantum channel $\e^{\tilde{\mathcal{L}}t}$ approximates the target channel $\e^{\mathcal{L}t}$ up to $O(\varepsilon)$ error in diamond distance if $\delta = O(\varepsilon/t)$.
Substituting this into \eqref{eq:numofsamplestomo}, we get
\begin{equation}
    n = \Omega\!\left( \frac{d^2(t-\varepsilon)^{2}}{\varepsilon^{2}\log{(d^2t/\varepsilon})}\right).
\end{equation}
It is evident from above that wave-matrix tomography requires more copies of~$|\psi\rangle$ to implement the channel $\e^{\mathcal{L}t}$ than wave matrix Lindbladization. 
As shown in \cite[Theorem 1]{Patel2023}, our approach needs only $O(t^2/\varepsilon)$ copies of $|\psi\rangle$, with no dependence on the dimension $d$. 
Please refer to Figure~\ref{fig:wml-tomo} for a visual representation of the comparison between these two approaches.
This favorable scaling highlights an advantage over wave-matrix tomography, where sample complexity grows with increasing~$d$. 
Our quantum algorithm thus provides an efficient route for implementing~$\e^{\mathcal{L}t}$ without needing full tomography of~$L$.

The development above indicates that a party could send sufficiently many program states to another party in order to simulate the Lindblad dynamics, without revealing details about the encoded operator.
This provides a form of quantum copy-protection, as introduced in~\cite{Aaronson2009}, in which the quantum operation is the Lindbladian evolution of a quantum state.
Our results also offer a potential resolution to the open question posed in \cite{Kimmel2017HamiltonianComplexity} regarding what other quantum operations, besides Hamiltonian simulation, could be encoded in quantum states and executed privately.

\section{Conclusion and Open Problems}

In this paper, we focused on the problem of simulating open system dynamics governed by the Lindblad master equation when the Lindblad operators are provided beforehand through encoding in program states. 
In our prequel paper \cite{Patel2023}, we investigated a relatively simple case in which the Lindbladian consists of a single Lindblad operator and a Hamiltonian. 
In the present sequel paper, we extended this case to simulating general Lindbladians and other situations in which the Lindblad operator can be expressed as a linear combination or polynomial of the operators encoded across multiple program states. 
We proposed quantum algorithms for all these cases and investigated their sample complexities -- the number of copies of program states needed.
Finally, we showed that our quantum algorithms provide a more efficient route to simulating Lindbladian evolution when compared to full tomography of the encoded operators. We demonstrated this by proving that the sample complexity for tomography is dependent on the system dimension, while that for wave matrix Lindbladization is not.

In our prequel paper \cite{Patel2023}, we listed some open questions that we then managed to answer in this sequel. However, some open problems still remain unaddressed. For completeness, we recap those outstanding questions from our prequel paper, which also apply to the algorithms presented here:
\begin{itemize}
    \item An important step common to the quantum algorithms proposed in this paper and our prequel involves simulating Lindbladian evolution for a short time period using a predefined Lindbladian. 
    For instance, this is seen in Algorithm~1 where the Lindbladian $\mathcal{M}$ is simulated in the first loop's Step 1. 
    It is essential to investigate whether efficient implementations can be developed for these Lindbladian channels that need to be simulated as subroutines. 
    This will undoubtedly resolve the time or gate complexities of our algorithms.
    \item Here we analyzed the sample complexities of the proposed quantum algorithms, quantifying the number of copies of program states needed. 
    A natural next step would be investigating whether these sample complexities are optimal, meaning whether there exist matching lower bounds.
\end{itemize}

\section{Acknowledgements}

We are indebted to Andrew Childs, Tongyang Li, and Marina Radulaski for a helpful email exchange regarding \cite[Proposition~2]{Childs2016EfficientDynamics}. MMW is especially grateful to Prof.~Ingemar Bengtsson for the opportunity to have met and discussed research with Prof.~Lindblad in Stockholm, Sweden, during April~2019.

\appendix

\section{Error Analysis for Trotter-like Approach}

\label{app:error-analysis-gen}

The Lindbladian $\mathcal{L}$ given by \eqref{eq:lindbladmaster-gen} can be written as a linear combination of single-operator Lindbladians $\mathcal{L}_{1}, \ldots, \mathcal{L}_{r}$ with $r \coloneqq J+K$. In other words, we have:
\begin{multline}
    \mathcal{L} (\rho) = \underbrace{-i [c_{1}\sigma_{1}, \rho]}_{\mathcal{L}_{1}(\rho)} + \cdots +  \underbrace{(-i [c_{J}\sigma_{J}, \rho])}_{\mathcal{L}_{J}(\rho)} + \underbrace{L_{1}\rho L_{1}^{\dagger} - \frac{1}{2} \left \{L_{1}^{\dagger}L_{1}, \rho \right\}}_{\mathcal{L}_{J+1}(\rho)} + \cdots \\ + \underbrace{L_{K}\rho L_{K}^{\dagger} - \frac{1}{2} \left \{L_{K}^{\dagger}L_{K}, \rho \right\}}_{\mathcal{L}_{J+K}(\rho)}.
\end{multline}

We first consider the two cases below.
\begin{itemize}
    \item \textbf{Case 1:} The Lindbladian $\mathcal{L}_{i}$ consists of the Hamiltonian $c_{i}\sigma_{i}$ where $i \in \{1, \ldots, J\}$. This corresponds to the iteration with program state~$\sigma_{i}$ in the second and third loops of Algorithm~2. The output of this iteration is:
    \begin{align}
    &\operatorname{Tr}_{2}\!\left[\e^{\mathcal{N}\Delta'_{i}}(\rho \otimes \sigma_{i}) \right] \notag\\
    & = \rho + \operatorname{Tr}_{2}\!\left[ \mathcal{N}\left( \rho \otimes \sigma_{i}\right)\right]\Delta'_{i} + \frac{1}{2} \operatorname{Tr}_{2}\!\left[ (\mathcal{N} \circ \mathcal{N})\left( \rho \otimes \sigma_{i}\right)\right]\Delta'^2_{i} +  O(\Delta'^3_{i})\label{eq:app-pf-case-1-1st}\\
    & = \rho + \frac{\mathcal{L}_{i}(\rho)}{|c_{i}|} \Delta'_{i} + \frac{1}{2} \operatorname{Tr}_{2}\!\left[ (\mathcal{N} \circ \mathcal{N})\left( \rho \otimes \sigma_{i}\right)\right]\Delta'^2_{i} +  O(\Delta'^3_{i})\\
    & = \rho + \mathcal{L}'_{i} (\rho)\Delta + \frac{1}{2} \mathcal{J}_{i}(\rho)\Delta^2 + O(\Delta^3)\label{eq:thm-p-trotter-hamil}.
    \end{align}
    For intuition about the second equality, see the tensor-network diagrams in Figure~7 of \cite{Patel2023}, which provide a visual representation. The final equality follows by substituting
    \begin{align}
        \Delta'_{i} & = \frac{|c_{i}|t}{\left \Vert \mathcal{L} \right \Vert_{\max} n} = \frac{|c_{i}|}{\left \Vert \mathcal{L} \right \Vert_{\max} }\Delta,
    \end{align}
    where we set $\Delta \coloneqq t/n$, and defining
    \begin{align}
        \mathcal{L}'_{i} (\rho) & \coloneqq \frac{\mathcal{L}_{i} (\rho)}{\left \Vert \mathcal{L} \right \Vert_{\max}},\\
        \mathcal{J}_{i}(\rho) & \coloneqq  \operatorname{Tr}_{2}\!\left[ (\mathcal{N} \circ \mathcal{N})\left( \rho \otimes \sigma_{i}\right)\right] \frac{|c_{i}|^{2}}{\left \Vert \mathcal{L} \right \Vert_{\max}^2}. 
    \end{align}
    
    \item \textbf{Case 2:} The Lindbladian $\mathcal{L}_{i}$ consists of the Lindblad operator $L_{i - J}$ where $i \in \{J+1, \ldots, J+K\}$. This corresponds to the iteration with program state $\psi_{i-J}$ in the first and fourth loops of Algorithm~2. The output of this iteration is:
    \begin{align}
    &\operatorname{Tr}_{23}\!\left[\e^{\mathcal{M}\Delta_{i-J}}(\rho \otimes \psi_{i-J}) \right] \notag\\
    & = \rho + \operatorname{Tr}_{23}\!\left[ \mathcal{M}\left( \rho \otimes \psi_{i-J}\right)\right]\Delta_{i-J} + \frac{1}{2} \operatorname{Tr}_{23}\!\left[ (\mathcal{M} \circ \mathcal{M})\left( \rho \otimes \psi_{i-J}\right)\right]\Delta^2_{i-J} \notag\\
    & \hspace{8cm}+ O(\Delta^3_{i-J})\label{eq:app-pf-case-2-1st}\\
    & = \rho + \frac{\mathcal{L}_{i} (\rho)}{\left\Vert L_{i-J} \right \Vert_{2}^2}\Delta_{i-J} + \frac{1}{2} \operatorname{Tr}_{23}\!\left[ (\mathcal{M} \circ \mathcal{M})\left( \rho \otimes \psi_{i-J}\right)\right]\Delta^2_{i-J} \notag\\
    & \hspace{8cm}+ O(\Delta^3_{i-J})\\
    & = \rho + \mathcal{L}'_{i} (\rho)\Delta + \frac{1}{2} \mathcal{J}_{i}(\rho)\Delta^2 + O(\Delta^3)\label{eq:thm-p-trotter-lindblad}.
    \end{align}
    For understanding the second equality, refer to the tensor-network diagrams in Figures~3, 5, and 6 of \cite{Patel2023}. The final equality follows by substituting
    \begin{align}
        \Delta_{i-J} & = \frac{\left \Vert L_{i-J} \right \Vert_{2}^{2}t}{\left \Vert \mathcal{L} \right \Vert_{\max} n} = \frac{\left \Vert L_{i-J} \right \Vert_{2}^{2}}{\left \Vert \mathcal{L} \right \Vert_{\max} }\Delta,
    \end{align}
    and defining
    \begin{align}
        \mathcal{L}'_{i} (\rho) & \coloneqq \frac{\mathcal{L}_{i} (\rho)}{\left \Vert \mathcal{L} \right \Vert_{\max}},\\
        \mathcal{J}_{i}(\rho) & \coloneqq  \operatorname{Tr}_{23}\!\left[( \mathcal{M} \circ \mathcal{M})\left( \rho \otimes \psi_{i-J}\right)\right] \frac{\left \Vert L_{i-J} \right \Vert_{2}^{4}}{\left \Vert \mathcal{L} \right \Vert_{\max}^2}. 
        \label{eq:error-an-lindblad-prime}
    \end{align}
\end{itemize}

In what follows and based on the above, we adopt the same notation $\mathcal{L}'_i$ and $\mathcal{J}_i$ for each case, with the index $i \in \{1, \ldots, J+K \}$ keeping track of which case actually applies. From \eqref{eq:thm-p-trotter-hamil}, \eqref{eq:thm-p-trotter-lindblad}, and the flow of Algorithm~2, we can now see that Algorithm~2 applies $\e^{\mathcal{L}_{1}'\Delta}$ approximately in the end, preceded by applying $\e^{\mathcal{L}_{2}'\Delta}$ approximately, which is preceded by applying $\e^{\mathcal{L}_{3}'\Delta}$ approximately, etc. Then this sequential process is reversed and repeated $n$~times.

Let us begin with some initial observations, and we first analyze the approach under the assumption that the order is $r, \ldots, 1, r, \ldots, 1, \ldots, r, \ldots, 1$, for a total of $rn$ steps (this choice of ordering amounts to excluding Steps~3 and 4 in Algorithm~2). After presenting the error analysis under this assumption in \eqref{eq:1st-step-big-proof}--\eqref{eq:last-step-big-proof} below, we then consider the order $r, \ldots, 1, 1, \ldots, r,$ $r, \ldots, 1, 1, \ldots, r,$ $ \ldots, r,\ldots, 1, 1, \ldots, r$, for a total of $2rn$ steps, as presented in Algorithm~2 in the main text, and we observe how the extra symmetry of this order results in a significant cancellation of terms, thus leading to the claimed performance stated in Theorem~\ref{thm:trotter-approach}.

Before beginning the analysis, consider the following general lemma:
\begin{lemma}
\label{lem:trotter-induction-analysis}
Suppose that $\rho$ is an initial state and $f:\{1, \ldots, m\} \to \{1, \ldots, r\}$ is an arbitrary function. Suppose further that $m$ steps of Case~1 or~2 above are applied in the order $f(m), f(m-1), \ldots, f(2), f(1)$, with corresponding superoperators $(\mathcal{L}'_{f(i)})_{i=1}^m$ and $(\mathcal{J}_{f(i)})_{i=1}^m$. Then the resulting state is as follows:
\begin{multline}
    \rho+\sum_{i=1}^{m}\mathcal{L}'_{f(i)}(\rho)\Delta+\frac{1}{2}%
\sum_{i=1}^{m}\mathcal{J}_{f(i)}(\rho)\Delta^{2}\\
+\sum_{i,j=1:i<j}^{m}\left(
\mathcal{L}'_{f(i)}\circ\mathcal{L}'_{f(j)}\right)  (\rho)\Delta^{2}+O(\Delta^{3}).
\label{eq:trotter-induction-analysis}
\end{multline}
\end{lemma}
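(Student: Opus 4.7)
The plan is to prove \eqref{eq:trotter-induction-analysis} by induction on $m$, with the case analysis in \eqref{eq:thm-p-trotter-hamil} and \eqref{eq:thm-p-trotter-lindblad} providing both the base case and the ``one-step update'' used at each inductive step. The base case $m=1$ is immediate: whichever of Cases~1 or~2 applies, the single-step output is $\rho + \mathcal{L}'_{f(1)}(\rho)\Delta + \tfrac{1}{2}\mathcal{J}_{f(1)}(\rho)\Delta^{2} + O(\Delta^{3})$, and the double sum in \eqref{eq:trotter-induction-analysis} is vacuous for $m=1$.

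For the inductive step, assume the claim holds for $m-1$ steps. Let $\rho'$ denote the state obtained after applying the first $m-1$ steps $f(m), f(m-1), \ldots, f(2)$, in that order, to $\rho$. Applying the inductive hypothesis to the ``shifted'' function $i \mapsto f(i+1)$ on $\{1,\ldots,m-1\}$, I would write
\begin{align*}
\rho' = \rho &+ \sum_{i=2}^{m}\mathcal{L}'_{f(i)}(\rho)\Delta + \frac{1}{2}\sum_{i=2}^{m}\mathcal{J}_{f(i)}(\rho)\Delta^{2} \\
&+ \sum_{2 \le i < j \le m}\left(\mathcal{L}'_{f(i)} \circ \mathcal{L}'_{f(j)}\right)(\rho)\Delta^{2} + O(\Delta^{3}).
\end{align*}
The final (outermost) step of the sequence corresponds to $f(1)$, and is itself one instance of Case~1 or~2 with input state $\rho'$; this yields
\begin{equation*}
\rho'' = \rho' + \mathcal{L}'_{f(1)}(\rho')\Delta + \tfrac{1}{2}\mathcal{J}_{f(1)}(\rho')\Delta^{2} + O(\Delta^{3}).
\end{equation*}

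Next I would substitute the expansion of $\rho'$ into this last expression and use linearity of $\mathcal{L}'_{f(1)}$ and $\mathcal{J}_{f(1)}$, keeping only terms up to order $\Delta^{2}$. Two observations close the induction. First, $\mathcal{J}_{f(1)}(\rho')\Delta^{2} = \mathcal{J}_{f(1)}(\rho)\Delta^{2} + O(\Delta^{3})$, so the $\mathcal{J}$-sum simply gains its $i=1$ term. Second, $\mathcal{L}'_{f(1)}(\rho')\Delta$ expands as $\mathcal{L}'_{f(1)}(\rho)\Delta + \sum_{i=2}^{m}\bigl(\mathcal{L}'_{f(1)} \circ \mathcal{L}'_{f(i)}\bigr)(\rho)\Delta^{2} + O(\Delta^{3})$; the first summand fills in the $i=1$ contribution of the single $\mathcal{L}'$-sum, while the second summand merges with the inductive-hypothesis double sum $\sum_{2 \le i < j \le m}$ to produce the full $\sum_{1 \le i < j \le m}$ appearing in \eqref{eq:trotter-induction-analysis}.

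The main obstacle is bookkeeping: I must keep the composition convention consistent with the physical order of operations. Because the steps are applied in the order $f(m), f(m-1), \ldots, f(1)$, the operator $\mathcal{L}'_{f(j)}$ with larger index $j$ acts \emph{before} $\mathcal{L}'_{f(i)}$ with smaller index $i$, and so $i < j$ in the notation $\mathcal{L}'_{f(i)} \circ \mathcal{L}'_{f(j)}$ encodes ``outer composition acts later.'' Verifying that this convention matches at the joining step --- where the new cross terms $\mathcal{L}'_{f(1)} \circ \mathcal{L}'_{f(i)}$ (with $i \ge 2$) slot into the sum with $i=1$ as the strictly smaller index --- is the only delicate point, but it agrees because $f(1)$ is applied last. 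All remaining contributions generated by higher-order Taylor terms in the one-step updates of Cases~1 and~2 are $O(\Delta^{3})$ and can be absorbed into the error.
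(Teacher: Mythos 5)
Your proof is correct and follows essentially the same route as the paper: induction on the number of steps, using the one-step expansions \eqref{eq:thm-p-trotter-hamil} and \eqref{eq:thm-p-trotter-lindblad} together with linearity of $\mathcal{L}'_{f(1)}$ and $\mathcal{J}_{f(1)}$ to generate the cross terms $\mathcal{L}'_{f(1)}\circ\mathcal{L}'_{f(j)}$ that complete the double sum. The only (immaterial) difference is that you append the last-applied step $f(1)$ to an $(m-1)$-step composite, whereas the paper prepends the earliest-applied step $f(m+1)$ by expanding the input state; the bookkeeping and the resulting ordering convention ($i<j$ meaning $\mathcal{L}'_{f(i)}$ acts later) agree in both.
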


\begin{proof}
We employ a proof by induction, starting with the base step. Suppose that the state at step~$k$ is $\rho^{(k)} \approx e^{\mathcal{L}_{f(1)}'\Delta}(\rho^{k-1})$. Then it follows from \eqref{eq:app-pf-case-1-1st}--\eqref{eq:thm-p-trotter-hamil} and \eqref{eq:app-pf-case-2-1st}--\eqref{eq:thm-p-trotter-lindblad} that
\begin{equation}
\rho^{(k)}=\rho^{(k-1)}+\mathcal{L}'_{f(1)}(\rho^{(k-1)})\Delta+\frac{1}%
{2}\mathcal{J}_{f(1)}(\rho^{(k-1)})\Delta^{2}+O(\Delta^{3}).
\label{eq:1st-step-big-proof}
\end{equation}
Then considering that%
\begin{equation}
\rho^{(k-1)}=\rho^{(k-2)}+\mathcal{L}'_{f(2)}(\rho^{(k-2)})\Delta+\frac{1}%
{2}\mathcal{J}_{f(2)}(\rho^{(k-2)})\Delta^{2}+O(\Delta^{3}),
\end{equation}
we find that%
\begin{multline}
    \rho^{(k)}=\rho^{(k-2)}+\sum_{i=1}^{2}\mathcal{L}'_{f(i)}(\rho^{(k-2)}%
)\Delta+\frac{1}{2}\sum_{i=1}^{2}\mathcal{J}_{f(i)}(\rho^{(k-2)})2\Delta
^{2}\\ +\left(  \mathcal{L}'_{f(1)}\circ\mathcal{L}'_{f(2)}\right)  (\rho^{(k-2)}%
)\Delta^{2}+O(\Delta^{3}),
\end{multline}
because
\begin{align}
& \rho^{(k)} \notag \\
&  =\rho^{(k-1)}+\mathcal{L}'_{f(1)}(\rho^{(k-1)})\Delta+\frac{1}%
{2}\mathcal{J}_{f(1)}(\rho^{(k-1)})\Delta^{2}+O(\Delta^{3})\\
&  =\rho^{(k-2)}+\mathcal{L}'_{f(2)}(\rho^{(k-2)})\Delta+\frac{1}{2}%
\mathcal{J}_{f(2)}(\rho^{(k-2)})\Delta^{2}+O(\Delta^{3})\notag\\
&  \qquad+\mathcal{L}'_{f(1)}\left(  \rho^{(k-2)}+\mathcal{L}'_{f(2)}(\rho
^{(k-2)})\Delta+\frac{1}{2}\mathcal{J}_{f(2)}(\rho^{(k-2)})\Delta^{2}%
+O(\Delta^{3})\right)  \Delta\notag\\
&  \qquad+\frac{1}{2}\mathcal{J}_{f(1)}\left(  \rho^{(k-2)}+\mathcal{L}'_{f(2)}%
(\rho^{(k-2)})\Delta+\frac{1}{2}\mathcal{J}_{f(2)}(\rho^{(k-2)})\Delta
^{2}+O(\Delta^{3})\right)  \Delta^{2}\\
&  =\rho^{(k-2)}+\mathcal{L}'_{f(2)}(\rho^{(k-2)})\Delta+\frac{1}{2}%
\mathcal{J}_{f(2)}(\rho^{(k-2)})\Delta^{2}\notag\\
&  \qquad+\mathcal{L}'_{f(1)}\left(  \rho^{(k-2)}\right)  \Delta+(\mathcal{L}'%
_{f(1)}\circ\mathcal{L}'_{f(2)})(\rho^{(k-2)})\Delta^{2}\notag \\
& \qquad +\frac{1}{2}\mathcal{J}%
_{f(1)}\left(  \rho^{(k-2)}\right)  \Delta^{2}+O(\Delta^{3})\\
&  =\rho^{(k-2)}+\sum_{i=1}^{2}\mathcal{L}'_{f(i)}(\rho^{(k-2)})\Delta+\frac{1}%
{2}\sum_{i=1}^{2}\mathcal{J}_{f(i)}(\rho^{(k-2)})\Delta^{2}
\notag\\
& \hspace{4.6cm}+(\mathcal{L}'_{f(1)}%
\circ\mathcal{L}'_{f(2)})(\rho^{(k-2)})\Delta^{2}+O(\Delta^{3}).
\end{align}
We can repeat this again to find that%
\begin{multline}
\rho^{(k)}=\rho^{(k-3)}+\sum_{i=1}^{3}\mathcal{L}'_{f(i)}(\rho^{(k-3)}%
)\Delta+\frac{1}{2}\sum_{i=1}^{2}\mathcal{J}_{f(i)}(\rho^{(k-3)})\Delta^{2}%
\\
+\sum_{i,j=1:i<j}^{3}\left(  \mathcal{L}'_{f(i)}\circ\mathcal{L}'_{f(j)}\right)
(\rho^{(k-3)})\Delta^{2}+O(\Delta^{3}),
\end{multline}
because%
\begin{align}
& \rho^{(k)} \notag \\
&  =\rho^{(k-2)}+\sum_{i=1}^{2}\mathcal{L}'_{f(i)}(\rho^{(k-2)}%
)\Delta+\frac{1}{2}\sum_{i=1}^{2}\mathcal{J}_{f(i)}(\rho^{(k-2)})\Delta
^{2}\notag \\
& \hspace{4cm} +\left(  \mathcal{L}'_{f(1)}\circ\mathcal{L}'_{f(2)}\right)  (\rho^{(k-2)}%
)\Delta^{2}+O(\Delta^{3})\\
&  =\rho^{(k-3)}+\mathcal{L}'_{f(3)}(\rho^{(k-3)})\Delta+\frac{1}{2}%
\mathcal{J}_{f(3)}(\rho^{(k-3)})\Delta^{2}+O(\Delta^{3})\notag\\
&  \quad+\sum_{i=1}^{2}\mathcal{L}'_{f(i)}\left(  \rho^{(k-3)}+\mathcal{L}'%
_{f(3)}(\rho^{(k-3)})\Delta+\frac{1}{2}\mathcal{J}_{f(3)}(\rho^{(k-3)})\Delta
^{2}+O(\Delta^{3})\right)  \Delta\notag\\
&  \quad+\frac{1}{2}\sum_{i=1}^{2}\mathcal{J}_{f(i)}\left(  \rho^{(k-3)}%
+\mathcal{L}'_{f(3)}(\rho^{(k-3)})\Delta+\frac{1}{2}\mathcal{J}_{f(3)}(\rho
^{(k-3)})\Delta^{2}+O(\Delta^{3})\right)  \Delta^{2}\notag\\
&  \quad+\left(  \mathcal{L}'_{f(1)}\circ\mathcal{L}'_{f(2)}\right)  \left(
\rho^{(k-3)}+\mathcal{L}'_{f(3)}(\rho^{(k-3)})\Delta+\frac{1}{2}\mathcal{J}%
_{f(3)}(\rho^{(k-3)})\Delta^{2}+O(\Delta^{3})\right)  \Delta^{2}\\
&  =\rho^{(k-3)}+\mathcal{L}'_{f(3)}(\rho^{(k-3)})\Delta+\frac{1}{2}%
\mathcal{J}_{f(3)}(\rho^{(k-3)})\Delta^{2}\notag\\
&  \quad+\sum_{i=1}^{2}\mathcal{L}'_{f(i)}\left(  \rho^{(k-3)}\right)
\Delta+\sum_{i=1}^{2}\left(  \mathcal{L}'_{f(i)}\circ\mathcal{L}'_{f(3)}\right)
(\rho^{(k-3)})\Delta^{2}\notag\\
&  \quad+\frac{1}{2}\sum_{i=1}^{2}\mathcal{J}_{f(i)}\left(  \rho^{(k-3)}\right)
\Delta^{2}+\left(  \mathcal{L}'_{f(1)}\circ\mathcal{L}'_{f(2)}\right)  \left(
\rho^{(k-3)}\right)  \Delta^{2}+O(\Delta^{3})\\
&  =\rho^{(k-3)}+\sum_{i=1}^{3}\mathcal{L}'_{f(i)}(\rho^{(k-3)})\Delta+\frac{1}%
{2}\sum_{i=1}^{2}\mathcal{J}_{f(i)}(\rho^{(k-3)})\Delta^{2}\notag\\
& \hspace{3cm}+\sum_{i,j=1:i<j}%
^{3}\left(  \mathcal{L}'_{f(i)}\circ\mathcal{L}'_{f(j)}\right)  (\rho^{(k-3)}%
)\Delta^{2}+O(\Delta^{3}).
\end{align}

Having established the base step, we now consider the inductive step and claim that after applying in the order $f(m), f(m-1), \ldots, f(2), f(1)$ (as in the lemma statement), the state after $m$
steps is given by 
\begin{multline}
\rho^{(k)}=\rho^{(k-m)}+\sum_{i=1}^{m}\mathcal{L}'_{f(i)}(\rho^{(k-m)}%
)\Delta+\frac{1}{2}\sum_{i=1}^{m}\mathcal{J}_{f(i)}(\rho^{(k-m)})\Delta^{2}%
\\ +\sum_{i,j=1:i<j}^{m}\left(  \mathcal{L}'_{f(i)}\circ\mathcal{L}'_{f(j)}\right)
(\rho^{(k-m)})\Delta^{2}+O(\Delta^{3})
\label{eq:inductive-step-to-prove}.
\end{multline}
To establish the inductive step, suppose that
\begin{equation}
\rho^{(k-m)}=\rho^{(k-m-1)}+\mathcal{L}'_{f(m+1)}(\rho^{(k-m-1)})\Delta+\frac
{1}{2}\mathcal{J}_{f(m+1)}(\rho^{(k-m-1)})\Delta^{2}+O(\Delta^{3}).
\end{equation}
Then, by starting with \eqref{eq:inductive-step-to-prove}, we find that
\begin{align}
& \rho^{(k)}   \notag \\
& =\rho^{(k-m)}+\sum_{i=1}^{m}\mathcal{L}'_{f(i)}(\rho^{(k-m)}%
)\Delta+\frac{1}{2}\sum_{i=1}^{m}\mathcal{J}_{f(i)}(\rho^{(k-m)})\Delta^{2}%
\notag \\
& \quad +\sum_{i,j=1:i<j}^{m}\left(  \mathcal{L}'_{f(i)}\circ\mathcal{L}'_{f(j)}\right)
(\rho^{(k-m)})\Delta^{2} +O(\Delta^{3}) \\
&  =\rho^{(k-m-1)}+\mathcal{L}'_{f(m+1)}(\rho^{(k-m-1)})\Delta+\frac{1}%
{2}\mathcal{J}_{f(m+1)}(\rho^{(k-m-1)})\Delta^{2} +O(\Delta^{3})\notag\\
&  \quad+\sum_{i=1}^{m}\mathcal{L}'_{f(i)}\left(  \rho^{(k-m-1)}+\mathcal{L}'%
_{f(m+1)}(\rho^{(k-m-1)})\Delta+\frac{1}{2}\mathcal{J}_{f(m+1)}(\rho^{(k-m-1)}%
)\Delta^{2}\right)  \Delta \notag\\
&  \quad+\frac{1}{2}\sum_{i=1}^{m}\mathcal{J}_{f(i)}\left(  \rho^{(k-m-1)}%
+\mathcal{L}'_{f(m+1)}(\rho^{(k-m-1)})\Delta+\frac{1}{2}\mathcal{J}_{f(m+1)}%
(\rho^{(k-m-1)})\Delta^{2}\right)  \Delta^{2}\notag\\
&  \,  +\sum_{\substack{i,j=1:\\i<j}}^{m}\left(  \mathcal{L}'_{f(i)}\circ\mathcal{L}''%
_{j}\right)  \left(  \rho^{(k-m-1)}+\mathcal{L}'_{f(m+1)}(\rho^{(k-m-1)}%
)\Delta+\frac{1}{2}\mathcal{J}_{f(m+1)}(\rho^{(k-m-1)})\Delta^{2}\right)
\Delta^{2} \\
&  =\rho^{(k-m-1)}+\mathcal{L}'_{f(m+1)}(\rho^{(k-m-1)})\Delta+\frac{1}%
{2}\mathcal{J}_{f(m+1)}(\rho^{(k-m-1)})\Delta^{2}\notag\\
&  \quad+\sum_{i=1}^{m}\mathcal{L}'_{f(i)}\left(  \rho^{(k-m-1)}\right)
\Delta+\sum_{i=1}^{m}\left(  \mathcal{L}'_{f(i)}\circ\mathcal{L}'_{f(m+1)}\right)
(\rho^{(k-m-1)})\Delta^{2}\notag\\
&  \quad+\frac{1}{2}\sum_{i=1}^{m}\mathcal{J}_{f(i)}\left(  \rho^{(k-m-1)}%
\right)  \Delta^{2}\notag \\
& \quad +\sum_{i,j=1:i<j}^{m}\left(  \mathcal{L}'_{f(i)}\circ
\mathcal{L}'_{f(j)}\right)  \left(  \rho^{(k-m-1)}\right)  \Delta^{2}+O(\Delta
^{3})\\
&  =\rho^{(k-m-1)}+\sum_{i=1}^{m+1}\mathcal{L}'_{f(i)}(\rho^{(k-m)})\Delta
+\frac{1}{2}\sum_{i=1}^{m+1}\mathcal{J}_{f(i)}(\rho^{(k-m)})\Delta^{2}%
\notag \\
& \hspace{3cm} +\sum_{i,j=1:i<j}^{m+1}\left(  \mathcal{L}'_{f(i)}\circ\mathcal{L}'_{f(j)}\right)
(\rho^{(k-m)})\Delta^{2}+O(\Delta^{3}).
\end{align}
This concludes the proof of the inductive step.
Setting $k=m$ and $\rho^{(0)}=\rho$ in \eqref{eq:inductive-step-to-prove} gives the claim in \eqref{eq:trotter-induction-analysis}.
\end{proof}
\medskip

Now we apply Lemma~\ref{lem:trotter-induction-analysis} to the simulation order $r,\ldots,1,r,\ldots
,1,r,\ldots,1$, with the goal of approximating the following evolution:
\begin{equation}
\e^{\mathcal{L}'n\Delta}(\rho)=\rho+\sum_{i=1}^{r}\mathcal{L}'_{i}(\rho
)n\Delta+\frac{1}{2}\sum_{i,j=1}^{r}\left(  \mathcal{L}'_{i}\circ
\mathcal{L}'_{j}\right)  \left(  \rho\right)  n^{2}\Delta^{2}+\ldots \, .
\label{eq:trotter-sim-target-pf}
\end{equation}
Let us suppose that $L_{0} \coloneqq L_{r}$ for convenience. 
Also, let $\mathcal{L}''_{i} \coloneqq \mathcal{L}'_{i\!\Mod{r}} $ and $\mathcal{J}''_{i} \coloneqq \mathcal{J}_{i\!\Mod{r}} $ for all $i \in \{1, \ldots, rn\}$.
Then by applying Lemma~\ref{lem:trotter-induction-analysis}, we find that the state after
$rn$ steps of the simulation is given by
\begin{equation}
\rho+\sum_{i=1}^{rn}\mathcal{L}''_{i}(\rho)\Delta+\frac{1}{2}%
\sum_{i=1}^{rn}\mathcal{J}''_{i}(\rho)\Delta^{2}+\sum_{i,j=1:i<j}^{rn}\left(
\mathcal{L}''_{i}\circ\mathcal{L}''_{j}\right)  (\rho)\Delta^{2}+O(\Delta^{3}).
\label{eq:expansion-rho-n-app}
\end{equation}
As such, the zeroth and first order terms in \eqref{eq:trotter-sim-target-pf} and \eqref{eq:expansion-rho-n-app} are equal and thus cancel when subtracting \eqref{eq:expansion-rho-n-app} from \eqref{eq:trotter-sim-target-pf}.
To determine what happens when subtracting the second-order terms, we note that
\begin{equation}
    \sum_{i=1}^{rn}\mathcal{J}''_{i}(\rho)\Delta^{2} = O(rn\Delta^2),
    \label{eq:J-term-order-pf}
\end{equation}
and thus it remains to analyze the term $\sum_{i,j=1:i<j}^{rn}
\mathcal{L}''_{i}\circ\mathcal{L}''_{j} $ in \eqref{eq:expansion-rho-n-app}, with this choice of ordering, and compare it to the second-order term in \eqref{eq:trotter-sim-target-pf}.
In what follows, we prove that
\begin{equation}
\sum_{i,j=1:i<j}^{rn}\left(  \mathcal{L}''_{i}\circ\mathcal{L}''_{j}\right)
(\rho)\Delta^{2}-\left(  \frac{1}{2}\sum_{i,j=1}^{r}\left(  \mathcal{L}'%
_{i}\circ\mathcal{L}'_{j}\right)  \left(  \rho\right)  n^{2}\Delta^{2}\right)
=O(r^2n\Delta^{2})\label{eq:i<j-n2-terms}.
\end{equation}

After substituting $\mathcal{L}''_{i} = \mathcal{L}'_{i\!\Mod{r}} $ for all $i \in \{1, \ldots, rn\}$ in the first term of the above equation, we decompose it in the following way:
\begin{multline}
     \sum_{i,j=1:i<j}^{rn}\left(  \mathcal{L}'_{i\!\Mod{r}}\circ\mathcal{L}'_{j\!\Mod{r}}\right)
(\rho)\Delta^{2}\\
 = \sum_{i,j=1}^{rn}\left(  \mathcal{L}'_{i\!\Mod{r}}\circ\mathcal{L}'_{j\!\Mod{r}}\right)
(\rho)\Delta^{2} - \sum_{i,j=1:i>j}^{rn}\left(  \mathcal{L}'_{i\!\Mod{r}}\circ\mathcal{L}'_{j\!\Mod{r}}\right)
(\rho)\Delta^{2} \\ 
 - \sum_{i,j=1:i=j}^{rn}\left( \mathcal{L}'_{i\!\Mod{r}}\circ\mathcal{L}'_{j\!\Mod{r}}\right)
(\rho)\Delta^{2}.\label{eq:error-gen-decompose}
\end{multline}
The first term on the right-hand side of the above equation can be rewritten as:
\begin{align}
    & \sum_{i,j=1}^{rn}\left(  \mathcal{L}'_{i\!\Mod{r}}\circ\mathcal{L}'_{j\!\Mod{r}}\right)
(\rho)\Delta^{2}\notag \\
& = \sum_{i=1}^{r} \Bigg( \sum_{j=1}^{r} \left(  \mathcal{L}'_{i\!\Mod{r}}\circ\mathcal{L}'_{j\!\Mod{r}}\right)(\rho)\Delta^{2}  + \cdots \\
& \hspace{4cm}+ \sum_{j=r(n-1)+1}^{rn} \left(  \mathcal{L}'_{i\!\Mod{r}}\circ\mathcal{L}'_{j\!\Mod{r}}\right)(\rho)\Delta^{2} \Bigg) + \cdots \notag \\
& \qquad  + \sum_{i=r(n-1)+1}^{rn} \Bigg( \sum_{j=1}^{r} \left(  \mathcal{L}'_{i\!\Mod{r}}\circ\mathcal{L}'_{j\!\Mod{r}}\right) (\rho)\Delta^{2}  + \cdots \notag \\
& \hspace{4cm }+ \sum_{j=r(n-1)+1}^{rn} \left(  \mathcal{L}'_{i\!\Mod{r}}\circ\mathcal{L}'_{j\!\Mod{r}}\right) (\rho)\Delta^{2} \Bigg)\\
& = \sum_{i=1}^{r} \left(n \sum_{j=1}^{r} \left(  \mathcal{L}'_{i\!\Mod{r}}\circ\mathcal{L}'_{j\!\Mod{r}}\right) (\rho)\Delta^{2} \right) + \cdots \notag \\
& \hspace{2.2cm}  + \sum_{i=r(n-1)+1}^{rn} \left(n \sum_{j=1}^{r} \left(  \mathcal{L}'_{i\!\Mod{r}}\circ\mathcal{L}'_{j\!\Mod{r}}\right) (\rho)\Delta^{2} \right)\\
& =  n \sum_{j=1}^{r} \Bigg( \sum_{i=1}^{r} \left(  \mathcal{L}'_{i\!\Mod{r}}\circ\mathcal{L}'_{j\!\Mod{r}}\right) (\rho)\Delta^{2} + \cdots \notag \\
& \hspace{4cm}+  \sum_{i=r(n-1)+1}^{rn}  \left( \mathcal{L}'_{i\!\Mod{r}}\circ\mathcal{L}'_{j\!\Mod{r}}\right) (\rho)\Delta^{2} \Bigg)\\
& =  n \sum_{j=1}^{r} \left( n \sum_{i=1}^{r} \left(  \mathcal{L}'_{i\!\Mod{r}}\circ\mathcal{L}'_{j\!\Mod{r}}\right) (\rho)\Delta^{2} \right)\\
& =   \sum_{i, j=1}^{r}\left(  \mathcal{L}'_{i\!\Mod{r}}\circ\mathcal{L}'_{j\!\Mod{r}}\right) (\rho) n^2\Delta^{2}\label{eq:first-term-simplify} .
\end{align}
Now, the second term on the right-hand side of \eqref{eq:error-gen-decompose} and the expression on its left-hand side has the following relation:
\begin{align}
    & \sum_{\substack{i,j=1:\\i<j}}^{rn}\left(  \mathcal{L}'_{i\!\Mod{r}}\circ\mathcal{L}'_{j\!\Mod{r}}\right)
(\rho)\Delta^{2} +\sum_{\substack{i,j=1 :\\(j,i)\in \mathcal{T}}}^{rn} \left(  \mathcal{L}'_{i\!\Mod{r}}\circ\mathcal{L}'_{j\!\Mod{r}}\right)
(\rho)\Delta^{2} \notag \\
& = \sum_{\substack{i,j=1:\\i>j}}^{rn}\left(  \mathcal{L}'_{i\!\Mod{r}}\circ\mathcal{L}'_{j\!\Mod{r}}\right)
(\rho)\Delta^{2} + \sum_{\substack{i,j=1:\\(i,j)\in \mathcal{T}}}^{rn}\left(  \mathcal{L}'_{i\!\Mod{r}}\circ\mathcal{L}'_{j\!\Mod{r}}\right)
(\rho)\Delta^{2}  ,
\label{eq:key-eq-proof-app}
\end{align}
where
\begin{equation}
\mathcal{T}  \coloneqq \{(i,j):  0\leq j - i < r \wedge
\lfloor (i-1)/r \rfloor = \lfloor (j-1)/r \rfloor \}.
\end{equation}
Figure~\ref{fig:proof-idea} depicts an example with $r=3$ and $n=4$ in order to illustrate the main idea behind the equality in \eqref{eq:key-eq-proof-app}. 
\begin{figure}
    \centering
    \includegraphics[scale=0.5]{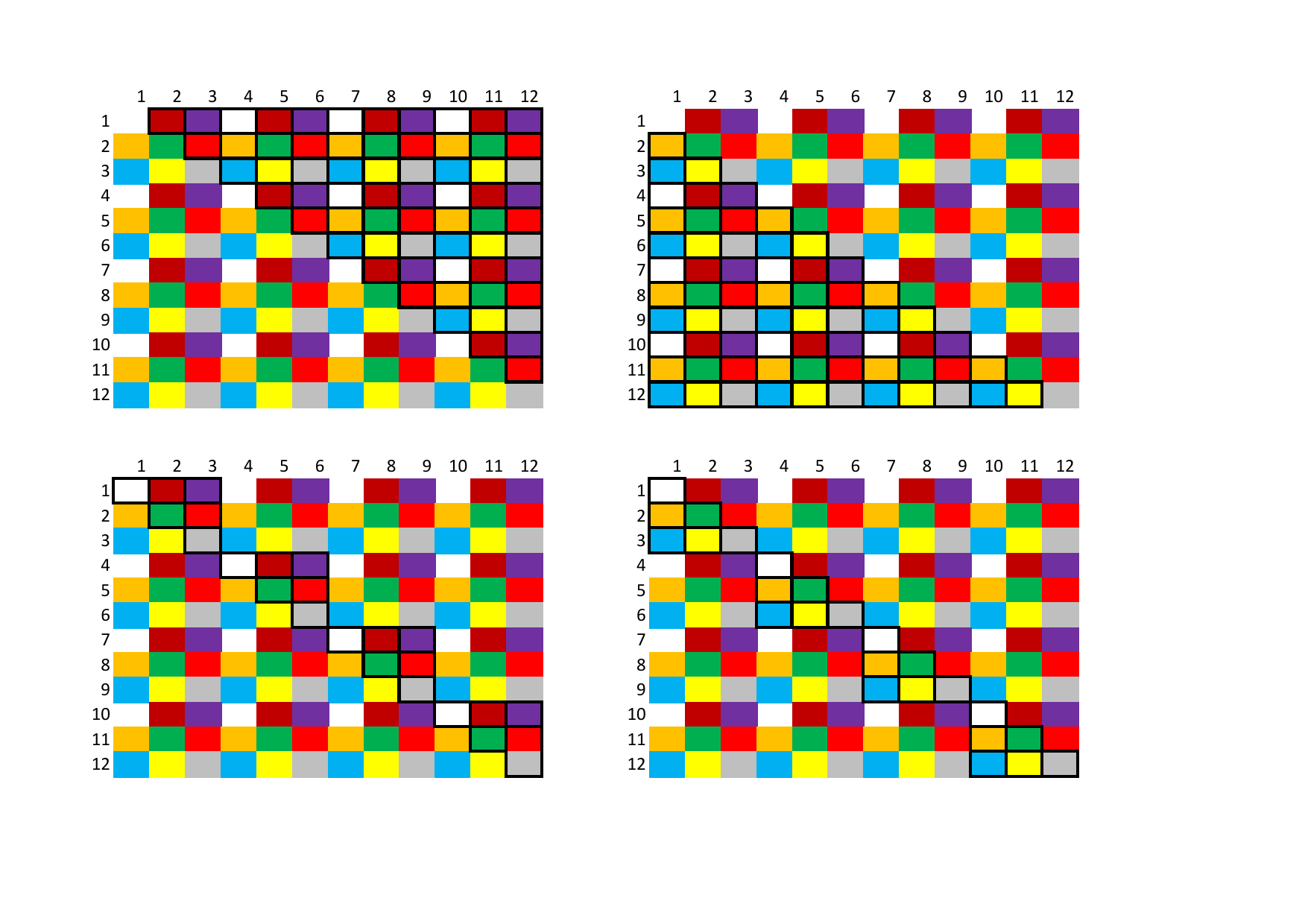}
    \caption{Depiction of the main idea behind the equality in \eqref{eq:key-eq-proof-app} for the case $r=3$ and $n=4$. A unique color is assigned to each pair $(i\!\Mod{r}, j\!\Mod{r})$ for all values of $i, j \in \{1, \ldots, rn\}$. The top left diagram depicts the first term in \eqref{eq:key-eq-proof-app}, with all terms included in that sum depicted with thick borders. Similarly, the bottom right diagram depicts the second term in \eqref{eq:key-eq-proof-app}, the top right diagram depicts the third term in \eqref{eq:key-eq-proof-app}, and the bottom left diagram depicts the fourth term in \eqref{eq:key-eq-proof-app}.}
    \label{fig:proof-idea}
\end{figure}
This implies that
\begin{align}
    & \sum_{\substack{i,j=1:\\i<j}}^{rn}\left(  \mathcal{L}'_{i\!\Mod{r}}\circ\mathcal{L}'_{j\!\Mod{r}}\right)
(\rho)\Delta^{2} - \sum_{\substack{i,j=1:\\(i,j)\in \mathcal{T}}}^{rn}\left(  \mathcal{L}'_{i\!\Mod{r}}\circ\mathcal{L}'_{j\!\Mod{r}}\right)
(\rho)\Delta^{2} \notag \\
& = \sum_{\substack{i,j=1:\\i>j}}^{rn}\left(  \mathcal{L}'_{i\!\Mod{r}}\circ\mathcal{L}'_{j\!\Mod{r}}\right)
(\rho)\Delta^{2} - \sum_{\substack{i,j=1:\\(j,i)\in \mathcal{T}}}^{rn}\left(  \mathcal{L}'_{i\!\Mod{r}}\circ\mathcal{L}'_{j\!\Mod{r}}\right)
(\rho)\Delta^{2} .
\end{align}
Using the above equation and \eqref{eq:first-term-simplify}, we can rewrite \eqref{eq:error-gen-decompose} in the following way:
\begin{multline}
      \sum_{i,j=1:i<j}^{rn}\left(  \mathcal{L}'_{i\!\Mod{r}}\circ\mathcal{L}'_{j\!\Mod{r}}\right)
(\rho)\Delta^{2}  \\
   \!\!\!\! = \frac{1}{2}\Bigg(\sum_{i, j=1}^{r}\left( \mathcal{L}'_{i\!\Mod{r}}\circ\mathcal{L}'_{j\!\Mod{r}}\right) (\rho) n^2\Delta^{2} 
 \\
 \qquad + \sum_{\substack{i,j=1:(i,j)\in \mathcal{T}}}^{rn}\left(  \mathcal{L}'_{i\!\Mod{r}}\circ\mathcal{L}'_{j\!\Mod{r}}\right)
(\rho)\Delta^{2} \\
 \qquad - \sum_{\substack{i,j=1:(j,i)\in \mathcal{T}}}^{rn}\left(  \mathcal{L}'_{i\!\Mod{r}}\circ\mathcal{L}'_{j\!\Mod{r}}\right)
(\rho)\Delta^{2} \\
- \sum_{i,j=1:i=j}^{rn}\left( \mathcal{L}'_{i\!\Mod{r}}\circ\mathcal{L}'_{j\!\Mod{r}}\right)
(\rho)\Delta^{2}\Bigg).\label{eq:n2-n-terms}
\end{multline}
Now, let us simplify the second, third, and fourth terms of the right-hand side of the preceding equation, one by one. The second term can be simplified in the following way: 
\begin{align}
    & \sum_{i,j=1:(i,j)\in \mathcal{T}}^{rn}\left(  \mathcal{L}'_{i\!\Mod{r}}\circ\mathcal{L}'_{j\!\Mod{r}}\right)
(\rho)\Delta^{2} \\
& = \underbrace{ \sum_{i=1}^{r} \left(\sum_{j=1:(i,j)\in \mathcal{T}}^{r}\mathcal{L}'_{i\!\Mod{r}}\circ\mathcal{L}'_{j\!\Mod{r}}\right) (\rho)\Delta^{2}}_\text{$ \frac{r(r+1)}{2} $ terms}  \notag \\
&\hspace{1cm} + \cdots + \underbrace{\sum_{i=r(n-1)+1}^{rn} \left(\sum_{j=r(n-1)+1:(i,j)\in \mathcal{T}}^{rn} \mathcal{L}'_{i\!\Mod{r}}\circ\mathcal{L}'_{j\!\Mod{r}}\right) (\rho)\Delta^{2} }_\text{$ \frac{r(r+1)}{2} $ terms}\\
& = \underbrace{ \sum_{i=1}^{r}\left(\sum_{j=1:(i,j)\in \mathcal{T}}^{r} \mathcal{L}'_{i\!\Mod{r}}\circ\mathcal{L}'_{j\!\Mod{r}} \right) (\rho)}_\text{$ \frac{r(r+1)}{2} $ terms} n\Delta^{2}
\label{eq:sec-n2-n-terms}
\end{align}
Similarly, the third term on the right-hand side of \eqref{eq:n2-n-terms} can be simplified and written as 
\begin{multline}
    \sum_{i,j=1:(j,i)\in \mathcal{T}}^{rn}\left(  \mathcal{L}'_{i\!\Mod{r}}\circ\mathcal{L}'_{j\!\Mod{r}}\right)
(\rho)\Delta^{2} \\
= \underbrace{\sum_{i=1}^{r} \left(\sum_{j=1:(j,i)\in \mathcal{T}}^{r} \mathcal{L}'_{i\!\Mod{r}}\circ\mathcal{L}'_{j\!\Mod{r}}\right)(\rho)}_\text{$ \frac{r(r+1)}{2} $ terms} n\Delta^{2}  \label{eq:third-n2-n-terms}
\end{multline}
All that remains is to simplify the fourth term of \eqref{eq:n2-n-terms}. As we can see, it is possible to write it like this:
\begin{equation}
    \sum_{i,j=1:i=j}^{rn}\left( \mathcal{L}'_{i\!\Mod{r}}\circ\mathcal{L}'_{j\!\Mod{r}}\right)
(\rho)\Delta^{2} = \sum_{i=1}^{r}\left( \mathcal{L}'_{i\!\Mod{r}}\circ\mathcal{L}'_{i\!\Mod{r}}\right)
(\rho)n \Delta^{2}.
\label{eq:fourth-n2-n-terms}
\end{equation}
Substituting \eqref{eq:sec-n2-n-terms}, \eqref{eq:third-n2-n-terms}, and \eqref{eq:fourth-n2-n-terms} into \eqref{eq:n2-n-terms}, we get
\begin{align}
     & \sum_{i,j=1:i<j}^{rn}\left(  \mathcal{L}'_{i\!\Mod{r}}\circ\mathcal{L}'_{j\!\Mod{r}}\right)
(\rho)\Delta^{2}\notag \\
 & = \frac{1}{2}\Bigg(\sum_{i, j=1}^{r}\left( \mathcal{L}'_{i\!\Mod{r}}\circ\mathcal{L}'_{j\!\Mod{r}}\right) (\rho) n^2\Delta^{2} \notag \\
 &\qquad + \sum_{i=1}^{r} \left(\sum_{j=1:(i,j)\in \mathcal{T}}^{r}\mathcal{L}'_{i\!\Mod{r}}\circ\mathcal{L}'_{j\!\Mod{r}} \right) (\rho)n\Delta^{2} \notag \\ \notag
 & \qquad - \sum_{i=1}^{r} \left(\sum_{j=1:(j,i)\in \mathcal{T}}^{r} \mathcal{L}'_{i\!\Mod{r}}\circ\mathcal{L}'_{j\!\Mod{r}} \right) (\rho)n\Delta^{2} \notag \\
 & \qquad - \sum_{i=1}^{r}\left( \mathcal{L}'_{i\!\Mod{r}}\circ\mathcal{L}'_{i\!\Mod{r}}\right)
(\rho)n \Delta^{2}\Bigg).
\label{eq:n2-n-terms-mod}
\end{align}
Furthermore, substituting the above equation into \eqref{eq:i<j-n2-terms}, we finally obtain the following: 
\begin{align}
    & \sum_{i,j=1:i<j}^{rn}\left(  \mathcal{L}'_{i\!\Mod{r}}\circ\mathcal{L}'_{j\!\Mod{r}}\right)
(\rho)\Delta^{2}-\left(  \frac{1}{2}\sum_{i,j=1}^{r}\left(  \mathcal{L}'%
_{i}\circ\mathcal{L}'_{j}\right)  \left(  \rho\right)  n^{2}\Delta^{2}\right) \notag  \\
& \notag = \frac{1}{2}\Bigg(\sum_{i, j=1}^{r}\left( \mathcal{L}'_{i}\circ\mathcal{L}'_{j}\right) (\rho) n^2\Delta^{2} + \sum_{i=1}^{r} \left(\sum_{j=1:(i,j)\in \mathcal{T}}^{r} \mathcal{L}'_{i}\circ\mathcal{L}'_{j} \right) (\rho)n\Delta^{2}\\ \notag
 &\hspace{2cm} - \sum_{i=1}^{r} \left(\sum_{j=1:(j,i)\in\mathcal{T}}^{r} \mathcal{L}'_{i}\circ\mathcal{L}'_{j} \right) (\rho)n\Delta^{2} - \sum_{i=1}^{r}\left( \mathcal{L}'_{i}\circ\mathcal{L}'_{i}\right)
(\rho)n \Delta^{2}\Bigg)\\
&\hspace{5.5cm} - \left(  \frac{1}{2}\sum_{i,j=1}^{r}\left(  \mathcal{L}'%
_{i}\circ\mathcal{L}'_{j}\right)  \left(  \rho\right)  n^{2}\Delta^{2}\right) \\
& = \frac{1}{2}\Bigg( \sum_{i=1}^{r} \left(\sum_{j=1:(i,j)\in \mathcal{T}}^r \mathcal{L}'_{i}\circ\mathcal{L}'_{j} \right) (\rho)n\Delta^{2} - \sum_{i=1}^{r} \left(\sum_{j=1:(j,i)\in \mathcal{T}}^r \mathcal{L}'_{i}\circ\mathcal{L}'_{j} \right) (\rho)n\Delta^{2} \notag \\ 
 &\hspace{7cm}  - \sum_{i=1}^{r}\left( \mathcal{L}'_{i}\circ\mathcal{L}'_{i}\right)
(\rho)n \Delta^{2}\Bigg)
\label{eq:penultimate-step-big-proof}\\
& = O(r^2 n\Delta^2). 
\label{eq:last-step-big-proof}
\end{align}

Substituting $\Delta=t/n$ and repeating Algorithm~2 (but without Steps~3 and 4 there)
\begin{equation}
n = O(r^2t^2/\varepsilon) =O((J+K)^2t^2/\varepsilon)     
\end{equation}
times approximates the channel $e^{\mathcal{L}'t}$ with an approximation error $O(\varepsilon)$.
In order to simulate our desired channel $e^{\mathcal{L}t}$, we simulate the above channel for time $t' = \left \Vert \mathcal{L} \right \Vert_{\max} t$, so that
\begin{align}
    e^{\mathcal{L}'t'} & = e^{\frac{\mathcal{L}}{\left \Vert \mathcal{L} \right \Vert_{\max}} \left \Vert \mathcal{L} \right \Vert_{\max}t} = e^{\mathcal{L}t}.
\end{align}
This implies that we need $n = O((J+K)^2\left \Vert \mathcal{L} \right \Vert_{\max}^2 t^2/\varepsilon)$ repetitions of Algorithm~2 (without its Steps~3 and 4), and for this, we need $n = O((J+K)^2\left \Vert \mathcal{L} \right \Vert_{\max}^2 t^2/\varepsilon)$ copies of each program state in  $\{\sigma_{1}, \ldots, \sigma_{J}, \psi_{1}, \ldots, \psi_{K}\}$.
The total number of program states used is then $O\!\left((J+K)^3\left \Vert \mathcal{L} \right \Vert_{\max}^2 t^2/\varepsilon\right )$.
As the aforementioned argument holds for an arbitrary input state $\rho$, the error bound holds more generally for the diamond distance.

As mentioned at the beginning of this proof before \eqref{eq:1st-step-big-proof}, we can instead adopt the order $r, \ldots, 1, 1, \ldots, r,$ $r, \ldots, 1, 1, \ldots, r,$ $ \ldots, r,\ldots, 1, 1, \ldots, r$, for a total of $2rn$ steps. This implies that the set $\mathcal{T}$ (appropriately modified and enlarged for this modified ordering) has the symmetry $(i,j) \in \mathcal{T}$ if and only if $(j,i) \in \mathcal{T}$. As such, by inspecting the first line of \eqref{eq:penultimate-step-big-proof} above, the difference of the two terms there is equal to zero for this modified ordering, implying that the resulting error term is $O(rn\Delta^2)$ and matches the order of the term in~\eqref{eq:J-term-order-pf}. Thus, repeating the argument in the last paragraph above for this case, we conclude that we can set $n = O(r t^2/\varepsilon) =O((J+K)t^2/\varepsilon) $    
to  approximate the channel $e^{\mathcal{L}'t}$ with an error $O(\varepsilon)$. The total number of program states used with this modified ordering is then $O\!\left((J+K)^2\left \Vert \mathcal{L} \right \Vert_{\max}^2 t^2/\varepsilon\right )$. Figure~\ref{fig:proof-idea-other} depicts the same example from Figure~\ref{fig:proof-idea} but with this modified ordering. It demonstrates how the second and fourth terms of \eqref{eq:key-eq-proof-app} contain the same terms, ultimately leading to the cancellation in the first line of~\eqref{eq:penultimate-step-big-proof}.

\begin{figure}
    \centering
    \includegraphics[scale=0.4]{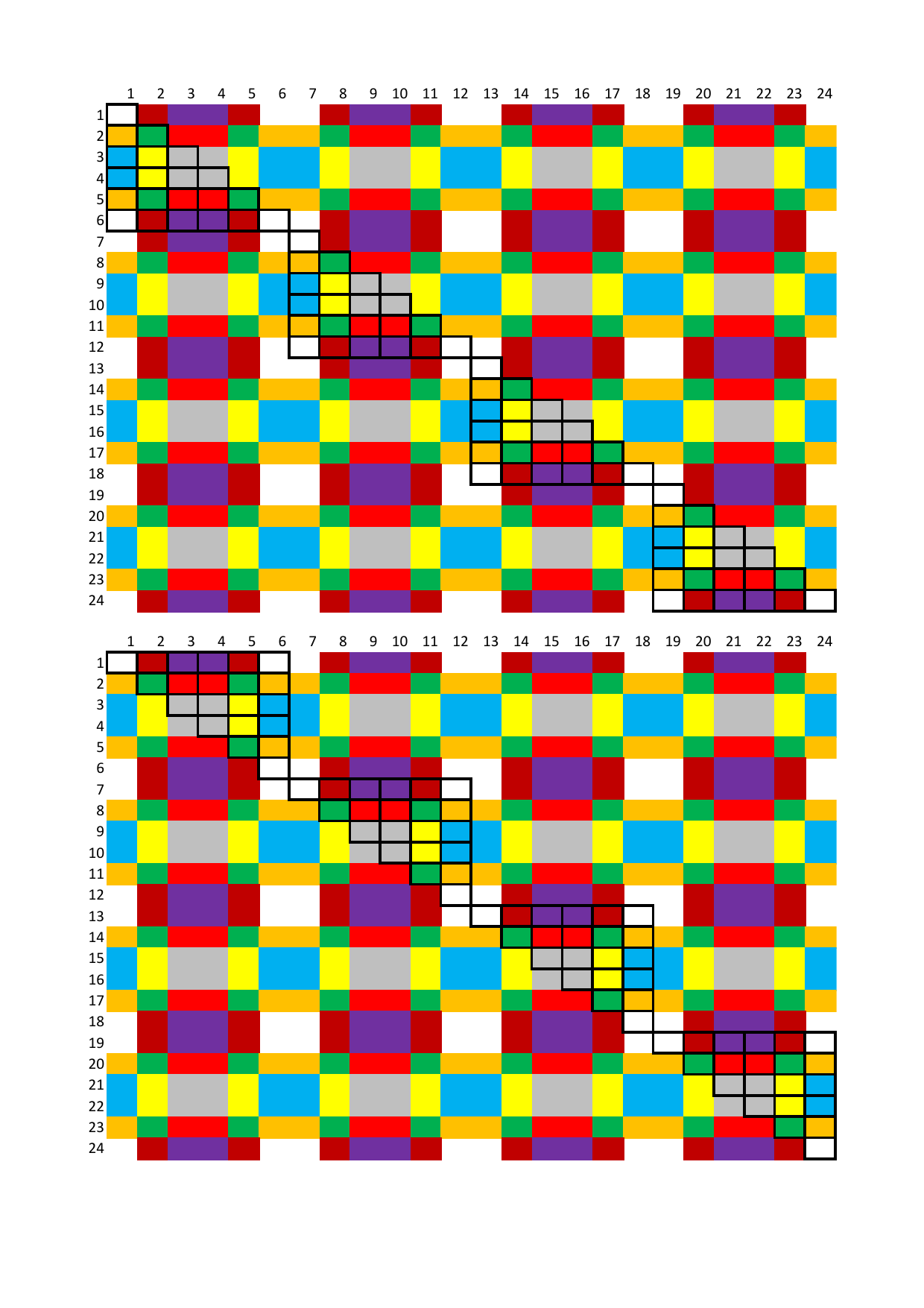}
    \caption{For the case $r=3$ and $n=4$, depiction of how the second and fourth terms of \eqref{eq:key-eq-proof-app} contain the same terms, when using the order $r, \ldots, 1, 1, \ldots, r,$ $r, \ldots, 1, 1, \ldots, r,$ $ \ldots, r,\ldots, 1, 1, \ldots, r$ for Trotter steps. A unique color is assigned to each pair for all values of $i, j \in \{1, \ldots, 2rn\}$. The top diagram depicts the second term in \eqref{eq:key-eq-proof-app}, using the modified ordering, with all terms included in that sum depicted with thick borders. The bottom diagram depicts the fourth term in \eqref{eq:key-eq-proof-app}, using the modified ordering, with all terms included in that sum depicted with thick borders.}
    \label{fig:proof-idea-other}
\end{figure}

\section{Lemma~\ref{lem:cyclic-swap-lem} Statement and Proof}

\label{app:cyclic-swap-lem}

For the purposes of our development in this appendix, we suppose that the system of
interest is labeled by $0$, and the program systems are labeled by
$1,\ldots,2D$, with system $1$ entangled with $D$, system $2$ entangled with
$D+1$, \ldots, and system $D$ entangled with $2D$. We also employ the shorthand $\Gamma_{m,n}\equiv |\Gamma\rangle\!\langle\Gamma|_{m,n}$ throughout to denote the maximally entangled vector for systems $m$ and $n$.

\begin{lemma}
\label{lem:cyclic-swap-lem}
For
\begin{equation}
    M  \coloneqq \frac{1}{d^{D/2}}\left(  I_{0}\otimes\bigotimes\limits_{\ell=1}%
^{D}|\Gamma\rangle\!\langle\Gamma|_{\ell,D+\ell}\right)  \left(  \mathsf{CYCSWAP}%
_{0,1,\ldots,D}\otimes I_{D+1,\ldots,2D}\right),
\end{equation}
the following equalities hold:
\begin{multline}
\operatorname{Tr}_{1,\ldots,2D}\!\left[M^{\dag}M\left(  \rho_{0}\otimes L_{1}%
\Gamma_{1,D+1}L_{1}^{\prime}\otimes L_{2}\Gamma_{2,D+2}L_{2}^{\prime}%
\otimes\cdots\otimes L_{D}\Gamma_{D,2D}L_{D}^{\prime}\right)
\right]\label{eq:long-proof-cycswap-1}\\
=\left(  L_{D}^{\prime}L_{D-1}^{\prime}\cdots L_{3}^{\prime}L_{2}^{\prime
}L_{1}^{\prime}L_{1}L_{2}L_{3}\cdots L_{D-1}L_{D}\rho\right)  _{0}\ ,
\end{multline}
\begin{multline}
\operatorname{Tr}_{1,\ldots,2D}\!\left[\left(  \rho_{0}\otimes L_{1}\Gamma
_{1,D+1}L_{1}^{\prime}\otimes L_{2}\Gamma_{2,D+2}L_{2}^{\prime}\otimes
\cdots\otimes L_{D}\Gamma_{D,2D}L_{D}^{\prime}\right)  M^{\dag}M\right]\\
=\left(  \rho L_{D}^{\prime}L_{D-1}^{\prime}\cdots L_{3}^{\prime}L_{2}%
^{\prime}L_{1}^{\prime}L_{1}L_{2}L_{3}\cdots L_{D-1}L_{D}\right)  _{0}\ ,
\label{eq:long-proof-cycswap-2}
\end{multline}
\begin{multline}
\operatorname{Tr}_{1,\ldots,2D}\!\left[  M\left(  \rho_{0}\otimes L_{1}%
\Gamma_{1,D+1}L_{1}^{\prime}\otimes L_{2}\Gamma_{2,D+2}L_{2}^{\prime}%
\otimes\cdots\otimes L_{D}\Gamma_{D,2D}L_{D}^{\prime}\right)  M^{\dag}\right]
\label{eq:long-proof-cycswap-other}\\
=\left(  L_{1}L_{2}\cdots L_{D-1}L_{D}\rho L_{D}^{\prime}L_{D-1}^{\prime
}\cdots L_{2}^{\prime}L_{1}^{\prime}\right)  _{0},
\end{multline}
where $\rho_0$ is a density operator acting on system $0$, the operators $L_1, \ldots, L_D$, $L_1', \ldots, L_D'$ are arbitrary square linear operators, and identity operators for systems $D+1, \ldots, 2D$ are implicit in each of the first lines above.
\end{lemma}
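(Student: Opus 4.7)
My plan is to prove all three identities \eqref{eq:long-proof-cycswap-1}, \eqref{eq:long-proof-cycswap-2}, and \eqref{eq:long-proof-cycswap-other} by a direct computation that uses two elementary ingredients. First, $(|\Gamma\rangle\!\langle\Gamma|)^{2} = d\,|\Gamma\rangle\!\langle\Gamma|$ lets the $d^D$ produced by squaring the Bell projectors cancel against the $d^{-D}$ normalization of $M^\dag M$ or $M M^\dag$. Second, conjugation by the cyclic swap permutes the position label of a single-system operator: $\mathsf{CYCSWAP}\,X_\ell\,\mathsf{CYCSWAP}^\dag = X_{\ell'}$ for an appropriate cyclic shift $\ell \mapsto \ell'$ of $\{0, 1, \ldots, D\}$. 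Writing $M = d^{-D/2}\,A\,B$ with $A \coloneqq I_0 \otimes \bigotimes_{\ell=1}^{D}|\Gamma\rangle\!\langle\Gamma|_{\ell, D+\ell}$ (which is Hermitian) and $B \coloneqq \mathsf{CYCSWAP}_{0,\ldots,D}\otimes I_{D+1,\ldots,2D}$ (which is unitary), the first identity collapses $M^\dag M$ to $B^\dag A B$, which the conjugation rule rewrites as an explicit tensor product of Bell projectors on disjoint system pairs together with an identity on one of the active systems.

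To prove \eqref{eq:long-proof-cycswap-1} and \eqref{eq:long-proof-cycswap-2}, I would substitute this explicit form of $M^\dag M$ into the partial trace and expand every Bell projector $|\Gamma\rangle\!\langle\Gamma|$ and every input factor $L_\ell\Gamma_{\ell,D+\ell}L_\ell'$ in the computational basis. The partial trace over each of the $2D$ traced-out systems then reduces to one of two elementary scalar traces: a ``cap'' of two basis projectors, $\operatorname{Tr}[|a\rangle\!\langle b||e\rangle\!\langle f|]=\delta_{be}\delta_{af}$, or a single-system trace of the form $\operatorname{Tr}[|c\rangle\!\langle d|L_\ell|e\rangle\!\langle f|L_\ell'] = \langle f|L_\ell'|c\rangle\langle d|L_\ell|e\rangle$. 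The Kronecker deltas link the free matrix indices into a single chain dictated by the cyclic structure of the swap, and summing the chain collapses the product to the single matrix element $\langle a|L_D'L_{D-1}'\cdots L_1' L_1 L_2 \cdots L_D|b\rangle$. Combining this with the surviving $\sum_{a,b}|a\rangle\!\langle b|\rho$ (respectively $\sum_{a,b}\rho|a\rangle\!\langle b|$ for \eqref{eq:long-proof-cycswap-2}) yields the claimed identity.

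Equation~\eqref{eq:long-proof-cycswap-other} requires a slightly different setup, since $M M^\dag = A$ and naively ``collapsing'' would discard the $L_\ell$'s. Instead I would write $M X M^\dag = d^{-D}\,A\,(B X B^\dag)\,A$ and first conjugate the input $X$ by $B$, which cyclically permutes the active-system operators; sandwiching between two copies of $A$ and taking the partial trace produces the analogous delta chain, but now with the $L_\ell$'s attached on the left of $\rho$ and the $L_\ell'$'s on the right, yielding $L_1 L_2 \cdots L_D\,\rho\,L_D'\cdots L_1'$. The main obstacle in all three cases is the careful index bookkeeping needed to verify that the delta chain contracts in exactly the order dictated by the cyclic swap, producing the specific forward/reverse product structures claimed. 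A cleaner alternative would be induction on $D$ via the recursion $\mathsf{CYCSWAP}_{0,\ldots,D} = \mathsf{SWAP}_{0,D}(\mathsf{CYCSWAP}_{0,\ldots,D-1}\otimes I_D)$, which peels off one program state per step and reduces to the $D=1$ case proved in the prequel paper; however, the direct basis computation outlined above is conceptually identical to the tensor-network manipulations pictured in Figures~\ref{fig:lindblad-first-term}--\ref{fig:lindblad-thirdterm}, which provide a pictorial sanity check.
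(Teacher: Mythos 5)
Your proposal is correct and follows essentially the same route as the paper's proof: obtain the explicit product-of-Bell-projectors form of $M^{\dag}M$ (the paper gets $I_{1}\otimes\Gamma_{0,2D}\otimes\Gamma_{2,D+1}\otimes\cdots\otimes\Gamma_{D,2D-1}$ by direct basis multiplication, whereas you reach the same operator slightly more slickly via $A^{2}=d^{D}A$ and conjugation by the cyclic permutation), then expand everything in the computational basis and contract the resulting Kronecker-delta chain to read off the ordered products $L_{D}'\cdots L_{1}'L_{1}\cdots L_{D}$, with the third identity handled by a separate direct computation of $M(\cdot)M^{\dag}$. The only cosmetic divergence is that the paper derives \eqref{eq:long-proof-cycswap-2} by taking the Hermitian conjugate of \eqref{eq:long-proof-cycswap-1} rather than redoing the trace computation.
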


\begin{proof}
Let us begin with the
following observation:%
\begin{align}
M  & =\frac{1}{d^{D/2}}\left(  I_{0}\otimes\bigotimes\limits_{\ell=1}%
^{D}|\Gamma\rangle\!\langle\Gamma|_{\ell,D+\ell}\right)  \left(  \mathsf{CYCSWAP}%
_{0,1,\ldots,D}\otimes I_{D+1,\ldots,2D}\right)  \\
& =\frac{1}{d^{D/2}}\left(
\begin{array}
[c]{c}%
\sum_{k}|k\rangle\!\langle k|_{0}\otimes\sum_{\substack{i_{1},j_{1}%
,\\i_{2},j_{2},\\\ldots,i_{D},j_{D}}}|i_{1}\rangle\!\langle j_{1}|_{1}%
\otimes|i_{2}\rangle\!\langle j_{2}|_{2}\otimes\cdots\otimes|i_{D}\rangle\!\langle
j_{D}|_{D}\otimes\\
|i_{1}\rangle\!\langle j_{1}|_{D+1}\otimes|i_{2}\rangle\!\langle j_{2}%
|_{D+2}\otimes\cdots\otimes|i_{D}\rangle\!\langle j_{D}|_{2D}%
\end{array}
\right)  \nonumber\\
& \qquad\times\left(  \mathsf{CYCSWAP}_{0,1,\ldots,D}\otimes I_{D+1,\ldots
,2D}\right)  \\
& =\frac{1}{d^{D/2}}\left(
\begin{array}
[c]{c}%
\sum_{\substack{k,i_{1},j_{1},\\i_{2},j_{2},\\\ldots,i_{D},j_{D}}%
}|k\rangle\!\langle j_{D}|_{0}\otimes|i_{1}\rangle\!\langle k|_{1}\otimes
|i_{2}\rangle\!\langle j_{1}|_{2}\otimes\cdots\otimes|i_{D-1}\rangle\!\langle
j_{D-2}|_{D-1}\\
\otimes|i_{D}\rangle\!\langle j_{D-1}|_{D}\otimes|i_{1}\rangle\!\langle
j_{1}|_{D+1}\otimes|i_{2}\rangle\!\langle j_{2}|_{D+2}\otimes\cdots\otimes
|i_{D}\rangle\!\langle j_{D}|_{2D}%
\end{array}
\right)  .
\end{align}
Then we find that%
\begin{align}
& M^{\dag}M\nonumber\\
& =\frac{1}{d^{D/2}}\left(
\begin{array}
[c]{c}%
\sum_{\substack{k^{\prime},i_{1}^{\prime},j_{1}^{\prime},\\i_{2}^{\prime
},j_{2}^{\prime},\\\ldots,i_{D}^{\prime},j_{D}^{\prime}}}|j_{D}^{\prime
}\rangle\!\langle k^{\prime}|_{0}\otimes|k^{\prime}\rangle\!\langle i_{1}^{\prime
}|_{1}\otimes|j_{1}^{\prime}\rangle\!\langle i_{2}^{\prime}|_{2}\otimes
\cdots\otimes|j_{D-2}^{\prime}\rangle\!\langle i_{D-1}^{\prime}|_{D-1}\\
\otimes|j_{D-1}^{\prime}\rangle\!\langle i_{D}^{\prime}|_{D}\otimes
|j_{1}^{\prime}\rangle\!\langle i_{1}^{\prime}|_{D+1}\otimes|j_{2}^{\prime
}\rangle\!\langle i_{2}^{\prime}|_{D+2}\otimes\cdots\otimes|j_{D}^{\prime
}\rangle\!\langle i_{D}^{\prime}|_{2D}%
\end{array}
\right)  \nonumber\\
& \quad\times\frac{1}{d^{D/2}}\left(
\begin{array}
[c]{c}%
\sum_{\substack{k,i_{1},j_{1},\\i_{2},j_{2},\\\ldots,i_{D},j_{D}}%
}|k\rangle\!\langle j_{D}|_{0}\otimes|i_{1}\rangle\!\langle k|_{1}\otimes
|i_{2}\rangle\!\langle j_{1}|_{2}\otimes\cdots\otimes|i_{D-1}\rangle\!\langle
j_{D-2}|_{D-1}\\
\otimes|i_{D}\rangle\!\langle j_{D-1}|_{D}\otimes|i_{1}\rangle\!\langle
j_{1}|_{D+1}\otimes|i_{2}\rangle\!\langle j_{2}|_{D+2}\otimes\cdots\otimes
|i_{D}\rangle\!\langle j_{D}|_{2D}%
\end{array}
\right)  \\
& =\frac{1}{d^{D}}\sum_{\substack{k^{\prime},i_{1}^{\prime},j_{1}^{\prime
},\\i_{2}^{\prime},j_{2}^{\prime},\\\ldots,i_{D}^{\prime},j_{D}^{\prime}}%
}\sum_{\substack{k,i_{1},j_{1},\\i_{2},j_{2},\\\ldots,i_{D},j_{D}}%
}|j_{D}^{\prime}\rangle\!\langle k^{\prime}|k\rangle\!\langle j_{D}|_{0}%
\otimes|k^{\prime}\rangle\!\langle i_{1}^{\prime}|i_{1}\rangle\!\langle
k|_{1}\otimes|j_{1}^{\prime}\rangle\!\langle i_{2}^{\prime}|i_{2}\rangle\!\langle
j_{1}|_{2}\otimes\cdots\nonumber\\
& \qquad\otimes|j_{D-2}^{\prime}\rangle\!\langle i_{D-1}^{\prime}|i_{D-1}%
\rangle\!\langle j_{D-2}|_{D-1}\otimes|j_{D-1}^{\prime}\rangle\!\langle
i_{D}^{\prime}|i_{D}\rangle\!\langle j_{D-1}|_{D}\nonumber\\
& \qquad\otimes|j_{1}^{\prime}\rangle\!\langle i_{1}^{\prime}|i_{1}%
\rangle\!\langle j_{1}|_{D+1}\otimes|j_{2}^{\prime}\rangle\!\langle i_{2}^{\prime
}|i_{2}\rangle\!\langle j_{2}|_{D+2}\otimes\cdots\otimes|j_{D}^{\prime}%
\rangle\!\langle i_{D}^{\prime}|i_{D}\rangle\!\langle j_{D}|_{2D}\\
& =\sum_{\substack{j_{1}^{\prime},i_{2}^{\prime},j_{2}^{\prime},\\\ldots
,i_{D}^{\prime},j_{D}^{\prime}}}\sum_{\substack{k,i_{1},j_{1},i_{2}%
,j_{2},\\\ldots,i_{D},j_{D}}}|j_{D}^{\prime}\rangle\!\langle j_{D}|_{0}%
\otimes|k\rangle\!\langle k|_{1}\otimes|j_{1}^{\prime}\rangle\!\langle j_{1}%
|_{2}\otimes\cdots\otimes|j_{D-2}^{\prime}\rangle\!\langle j_{D-2}%
|_{D-1}\nonumber\\
& \qquad\otimes|j_{D-1}^{\prime}\rangle\!\langle j_{D-1}|_{D}\otimes
|j_{1}^{\prime}\rangle\!\langle j_{1}|_{D+1}\otimes|j_{2}^{\prime}\rangle\!\langle
j_{2}|_{D+2}\otimes\cdots\otimes|j_{D}^{\prime}\rangle\!\langle j_{D}|_{2D}\\
& =I_{1}\otimes\Gamma_{0,2D}\otimes\Gamma_{2,D+1}\otimes\Gamma_{3,D+2}%
\otimes\cdots\otimes\Gamma_{D-1,2D-2}\otimes\Gamma_{D,2D-1}.
\end{align}

We first prove \eqref{eq:long-proof-cycswap-1}. To this end, consider that%
\begin{align}
& M^{\dag}M\left(  \rho_{0}\otimes L_{1}\Gamma_{1,D+1}L_{1}^{\prime}\otimes
L_{2}\Gamma_{2,D+2}L_{2}^{\prime}\otimes\cdots\otimes L_{D}\Gamma_{D,2D}%
L_{D}^{\prime}\right)  \nonumber\\
& =\left[
\begin{array}
[c]{c}%
\left(  I_{1}\otimes\Gamma_{0,2D}\otimes\Gamma_{2,D+1}\otimes\Gamma
_{3,D+2}\otimes\cdots\otimes\Gamma_{D-1,2D-2}\otimes\Gamma_{D,2D-1}\right)  \\
\left(  \rho_{0}\otimes L_{1}\Gamma_{1,D+1}L_{1}^{\prime}\otimes L_{2}%
\Gamma_{2,D+2}L_{2}^{\prime}\otimes\cdots\otimes L_{D}\Gamma_{D,2D}%
L_{D}^{\prime}\right)
\end{array}
\right]  \\
& =\sum_{\substack{i_{1},j_{1},i_{2},j_{2},\\i_{3},j_{3},i_{D-1}%
,j_{D-1},\\i_{D},j_{D},k_{1},\ell_{1},\\k_{2},\ell_{2},\ldots,k_{D},\ell_{D}%
}}\left[
\begin{array}
[c]{c}%
\left(
\begin{array}
[c]{c}%
I_{1}\otimes|i_{1}\rangle\!\langle j_{1}|_{0}\otimes|i_{1}\rangle\!\langle
j_{1}|_{2D}\otimes|i_{2}\rangle\!\langle j_{2}|_{2}\otimes|i_{2}\rangle\!\langle
j_{2}|_{D+1}\\
\otimes|i_{3}\rangle\!\langle j_{3}|_{3}\otimes|i_{3}\rangle\!\langle j_{3}%
|_{D+2}\otimes\cdots\otimes|i_{D-1}\rangle\!\langle j_{D-1}|_{D-1}\\
\otimes|i_{D-1}\rangle\!\langle j_{D-1}|_{2D-2}\otimes|i_{D}\rangle\!\langle
j_{D}|_{D}\otimes|i_{D}\rangle\!\langle j_{D}|_{2D-1}%
\end{array}
\right)  \\
\left(
\begin{array}
[c]{c}%
\rho_{0}\otimes L_{1}|k_{1}\rangle\!\langle\ell_{1}|_{1}L_{1}^{\prime}%
\otimes|k_{1}\rangle\!\langle\ell_{1}|_{D+1}\otimes L_{2}|k_{2}\rangle
\langle\ell_{2}|_{2}L_{2}^{\prime}\\
\otimes|k_{2}\rangle\!\langle\ell_{2}|_{D+2}\otimes\cdots\otimes L_{D}%
|k_{D}\rangle\!\langle\ell_{D}|_{D}L_{D}^{\prime}\otimes|k_{D}\rangle\!\langle
\ell_{D}|_{2D}%
\end{array}
\right)
\end{array}
\right]  \\
& =\sum_{\substack{i_{1},j_{1},i_{2},j_{2},\\i_{3},j_{3},i_{D-1}%
,j_{D-1},\\i_{D},j_{D},k_{1},\ell_{1},\\k_{2},\ell_{2},k_{3},\ell_{3}%
,\ldots,\\k_{D-2},\ell_{D-2},k_{D-1},\\\ell_{D-1},k_{D},\ell_{D}}}\left(
\begin{array}
[c]{c}%
L_{1}|k_{1}\rangle\!\langle\ell_{1}|_{1}L_{1}^{\prime}\otimes|i_{1}%
\rangle\!\langle j_{1}|_{0}\rho_{0}\otimes|i_{1}\rangle\!\langle j_{1}|_{2D}%
|k_{D}\rangle\!\langle\ell_{D}|_{2D}\\
\otimes|i_{2}\rangle\!\langle j_{2}|_{2}L_{2}|k_{2}\rangle\!\langle\ell_{2}%
|_{2}L_{2}^{\prime}\otimes|i_{2}\rangle\!\langle j_{2}|_{D+1}|k_{1}%
\rangle\!\langle\ell_{1}|_{D+1}\\
\otimes|i_{3}\rangle\!\langle j_{3}|_{3}L_{3}|k_{3}\rangle\!\langle\ell_{3}%
|_{3}L_{3}^{\prime}\otimes|i_{3}\rangle\!\langle j_{3}|_{D+2}|k_{2}%
\rangle\!\langle\ell_{2}|_{D+2}\otimes\cdots\\
\otimes|i_{D-1}\rangle\!\langle j_{D-1}|_{D-1}L_{D-1}|k_{D-1}\rangle\!\langle
\ell_{D-1}|_{D}L_{D-1}^{\prime}\\
\otimes|i_{D-1}\rangle\!\langle j_{D-1}|_{2D-2}|k_{D-2}\rangle\!\langle\ell
_{D-2}|_{2D-2}\\
\otimes|i_{D}\rangle\!\langle j_{D}|_{D}L_{D}|k_{D}\rangle\!\langle\ell_{D}%
|_{D}L_{D}^{\prime}\\
\otimes|i_{D}\rangle\!\langle j_{D}|_{2D-1}|k_{D-1}\rangle\!\langle\ell
_{D-1}|_{2D-1}%
\end{array}
\right)  .
\end{align}
Taking the partial trace over systems $1,\ldots,2D$ of the last line above
gives%
\begin{align}
& \sum_{\substack{i_{1},j_{1},i_{2},j_{2},\\i_{3},j_{3},i_{D-1},j_{D-1}%
,\\i_{D},j_{D},k_{1},\ell_{1},\\k_{2},\ell_{2},k_{3},\ell_{3},\ldots
,\\k_{D-2},\ell_{D-2},k_{D-1},\\\ell_{D-1},k_{D},\ell_{D}}}\left(
\begin{array}
[c]{c}%
|i_{1}\rangle\!\langle j_{1}|_{0}\rho_{0}~\langle\ell_{1}|_{1}L_{1}^{\prime
}L_{1}|k_{1}\rangle\ \langle j_{1}|k_{D}\rangle\!\langle\ell_{D}|i_{1}\rangle\\
\langle j_{2}|_{2}L_{2}|k_{2}\rangle\!\langle\ell_{2}|L_{2}^{\prime}%
|i_{2}\rangle\ \langle j_{2}|k_{1}\rangle\!\langle\ell_{1}|i_{2}\rangle\\
\langle j_{3}|_{3}L_{3}|k_{3}\rangle\!\langle\ell_{3}|_{3}L_{3}^{\prime}%
|i_{3}\rangle\ \langle j_{3}|k_{2}\rangle\!\langle\ell_{2}|i_{3}\rangle\cdots\\
\langle j_{D-1}|_{D-1}L_{D-1}|k_{D-1}\rangle\!\langle\ell_{D-1}|_{D}%
L_{D-1}^{\prime}|i_{D-1}\rangle\\
\ \langle j_{D-1}|k_{D-2}\rangle\!\langle\ell_{D-2}|i_{D-1}\rangle\\
\langle j_{D}|_{D}L_{D}|k_{D}\rangle\!\langle\ell_{D}|_{D}L_{D}^{\prime}%
|i_{D}\rangle\ \langle j_{D}|k_{D-1}\rangle\!\langle\ell_{D-1}|i_{D}\rangle
\end{array}
\right)  \nonumber\\
& =\sum_{\substack{i_{1},j_{1},i_{2},j_{2},\\i_{3},j_{3},i_{D-1}%
,j_{D-1},\\i_{D},j_{D}}}\left(
\begin{array}
[c]{c}%
|i_{1}\rangle\!\langle j_{1}|_{0}\rho_{0}~\langle i_{2}|_{1}L_{1}^{\prime}%
L_{1}|j_{2}\rangle\ \\
\langle j_{2}|_{2}L_{2}|j_{3}\rangle\!\langle i_{3}|L_{2}^{\prime}|i_{2}%
\rangle\ \\
\langle j_{3}|_{3}L_{3}|j_{4}\rangle\!\langle i_{4}|_{3}L_{3}^{\prime}%
|i_{3}\rangle\ \cdots\\
\langle j_{D-1}|_{D-1}L_{D-1}|j_{D}\rangle\!\langle i_{D}|_{D}L_{D-1}^{\prime
}|i_{D-1}\rangle\ \\
\langle j_{D}|_{D}L_{D}|j_{1}\rangle\!\langle i_{1}|_{D}L_{D}^{\prime}%
|i_{D}\rangle
\end{array}
\right)  \\
& =\sum_{\substack{i_{1},j_{1},i_{2},j_{2},\\i_{3},j_{3},i_{D-1}%
,j_{D-1},\\i_{D},j_{D}}}\left(
\begin{array}
[c]{c}%
|i_{1}\rangle\!\langle j_{1}|_{0}\rho_{0}~\langle i_{2}|_{1}L_{1}^{\prime}%
L_{1}|j_{2}\rangle\ \langle j_{2}|_{2}L_{2}|j_{3}\rangle\!\langle i_{3}%
|L_{2}^{\prime}|i_{2}\rangle\ \\
\langle j_{3}|_{3}L_{3}|j_{4}\rangle\!\langle i_{4}|_{3}L_{3}^{\prime}%
|i_{3}\rangle\ \cdots\langle j_{D-1}|_{D-1}L_{D-1}|j_{D}\rangle\\
\langle i_{D}|_{D}L_{D-1}^{\prime}|i_{D-1}\rangle\ \langle j_{D}|_{D}%
L_{D}|j_{1}\rangle\!\langle i_{1}|_{D}L_{D}^{\prime}|i_{D}\rangle
\end{array}
\right)  \\
& =\sum_{\substack{i_{1},j_{1},i_{2},j_{2},\\i_{3},j_{3},i_{D-1}%
,j_{D-1},\\i_{D},j_{D}}}\left(
\begin{array}
[c]{c}%
|i_{1}\rangle\!\langle i_{1}|_{D}L_{D}^{\prime}|i_{D}\rangle\!\langle i_{D}%
|_{D}L_{D-1}^{\prime}|i_{D-1}\rangle\cdots\langle i_{4}|_{3}L_{3}^{\prime
}|i_{3}\rangle\\
\langle i_{3}|L_{2}^{\prime}|i_{2}\rangle\!\langle i_{2}|_{1}L_{1}^{\prime}%
L_{1}|j_{2}\rangle\!\langle j_{2}|_{2}L_{2}|j_{3}\rangle\\
\langle j_{3}|_{3}L_{3}|j_{4}\rangle\ \cdots\langle j_{D-1}|_{D-1}%
L_{D-1}|j_{D}\rangle\\
\langle j_{D}|_{D}L_{D}|j_{1}\rangle\ \langle j_{1}|_{0}\rho_{0}%
\end{array}
\right)  \\
& =\left(  L_{D}^{\prime}L_{D-1}^{\prime}\cdots L_{3}^{\prime}L_{2}^{\prime
}L_{1}^{\prime}L_{1}L_{2}L_{3}\cdots L_{D-1}L_{D}\rho\right)  _{0}\ .
\end{align}
This completes the proof of \eqref{eq:long-proof-cycswap-1}.

Using the Hermitian conjugate of \eqref{eq:long-proof-cycswap-1} along with
some substitutions then gives \eqref{eq:long-proof-cycswap-2}.

Now we finally prove \eqref{eq:long-proof-cycswap-other}.
To this end, consider that%
\begin{align}
& M\left(  \rho_{0}\otimes L_{1}\Gamma_{1,D+1}L_{1}^{\prime}\otimes
L_{2}\Gamma_{2,D+2}L_{2}^{\prime}\otimes\cdots\otimes L_{D}\Gamma_{D,2D}%
L_{D}^{\prime}\right)  M^{\dag}\nonumber\\
& =\frac{1}{d^{D}}\left(
\begin{array}
[c]{c}%
\sum_{\substack{k,i_{1},j_{1},\\i_{2},j_{2},\\\ldots,i_{D},j_{D}}%
}|k\rangle\!\langle j_{D}|_{0}\otimes|i_{1}\rangle\!\langle k|_{1}\otimes
|i_{2}\rangle\!\langle j_{1}|_{2}\otimes\cdots\otimes|i_{D-1}\rangle\!\langle
j_{D-2}|_{D-1}\\
\otimes|i_{D}\rangle\!\langle j_{D-1}|_{D}\otimes|i_{1}\rangle\!\langle
j_{1}|_{D+1}\otimes|i_{2}\rangle\!\langle j_{2}|_{D+2}\otimes\cdots\otimes
|i_{D}\rangle\!\langle j_{D}|_{2D}%
\end{array}
\right)  \nonumber\\
& \quad\times\left(  \rho_{0}\otimes L_{1}\Gamma_{1,D+1}L_{1}^{\prime}\otimes
L_{2}\Gamma_{2,D+2}L_{2}^{\prime}\otimes\cdots\otimes L_{D}\Gamma_{D,2D}%
L_{D}^{\prime}\right)  \nonumber\\
& \quad\times\left(
\begin{array}
[c]{c}%
\sum_{\substack{k^{\prime},i_{1}^{\prime},j_{1}^{\prime},\\i_{2}^{\prime
},j_{2}^{\prime},\\\ldots,i_{D}^{\prime},j_{D}^{\prime}}}|j_{D}^{\prime
}\rangle\!\langle k^{\prime}|_{0}\otimes|k^{\prime}\rangle\!\langle i_{1}^{\prime
}|_{1}\otimes|j_{1}^{\prime}\rangle\!\langle i_{2}^{\prime}|_{2}\otimes
\cdots\otimes|j_{D-2}^{\prime}\rangle\!\langle i_{D-1}^{\prime}|_{D-1}\\
\otimes|j_{D-1}^{\prime}\rangle\!\langle i_{D}^{\prime}|_{D}\otimes
|j_{1}^{\prime}\rangle\!\langle i_{1}^{\prime}|_{D+1}\otimes|j_{2}^{\prime
}\rangle\!\langle i_{2}^{\prime}|_{D+2}\otimes\cdots\otimes|j_{D}^{\prime
}\rangle\!\langle i_{D}^{\prime}|_{2D}%
\end{array}
\right)  \\
& =\frac{1}{d^{D}}\sum_{\substack{k,i_{1},j_{1},\\i_{2},j_{2},\\\ldots
,i_{D},j_{D}}}\sum_{\substack{k^{\prime},i_{1}^{\prime},j_{1}^{\prime}%
,\\i_{2}^{\prime},j_{2}^{\prime},\\\ldots,i_{D}^{\prime},j_{D}^{\prime}%
}}|k\rangle\!\langle j_{D}|_{0}\rho_{0}|j_{D}^{\prime}\rangle\!\langle k^{\prime
}|_{0}\nonumber\\
& \quad\otimes\left(  |i_{1}\rangle\!\langle k|_{1}\otimes|i_{1}\rangle\!\langle
j_{1}|_{D+1}\right)  L_{1}\Gamma_{1,D+1}L_{1}^{\prime}\left(  |k^{\prime
}\rangle\!\langle i_{1}^{\prime}|_{1}\otimes|j_{1}^{\prime}\rangle\!\langle
i_{1}^{\prime}|_{D+1}\right)  \nonumber\\
& \quad\otimes\left(  |i_{2}\rangle\!\langle j_{1}|_{2}\otimes|i_{2}%
\rangle\!\langle j_{2}|_{D+2}\right)  L_{2}\Gamma_{2,D+2}L_{2}^{\prime}\left(
|j_{1}^{\prime}\rangle\!\langle i_{2}^{\prime}|_{2}\otimes|j_{2}^{\prime}%
\rangle\!\langle i_{2}^{\prime}|_{D+2}\right)  \nonumber\\
& \quad\otimes\cdots\otimes\nonumber\\
& \quad\left(  |i_{D-1}\rangle\!\langle j_{D-2}|_{D-1}\otimes|i_{D-1}%
\rangle\!\langle j_{D-1}|_{2D-1}\right)  L_{D-1}\Gamma_{D-1,2D-1}\times
\nonumber\\
& \quad L_{D-1}^{\prime}\left(  |j_{D-2}^{\prime}\rangle\!\langle i_{D-1}%
^{\prime}|_{D-1}\otimes|j_{D-1}^{\prime}\rangle\!\langle i_{D-1}^{\prime
}|_{2D-1}\right)  \nonumber\\
& \quad\otimes\left(  |i_{D}\rangle\!\langle j_{D-1}|_{D}\otimes|i_{D}%
\rangle\!\langle j_{D}|_{2D}\right)  L_{D}\Gamma_{D,2D}L_{D}^{\prime}\left(
|j_{D-1}^{\prime}\rangle\!\langle i_{D}^{\prime}|_{D}\otimes|j_{D}^{\prime
}\rangle\!\langle i_{D}^{\prime}|_{2D}\right)  .
\end{align}
Now taking a partial trace over systems $1,\ldots,2D$ in the last line gives
the following:
\begin{align}
& \frac{1}{d^{D}}\sum_{\substack{k,i_{1},j_{1},\\i_{2},j_{2},\\\ldots
,i_{D},j_{D}}}\sum_{\substack{k^{\prime},i_{1}^{\prime},j_{1}^{\prime}%
,\\i_{2}^{\prime},j_{2}^{\prime},\\\ldots,i_{D}^{\prime},j_{D}^{\prime}%
}}|k\rangle\!\langle j_{D}|_{0}\rho_{0}|j_{D}^{\prime}\rangle\!\langle k^{\prime
}|_{0}\nonumber\\
& \quad\times\left(  \langle k|_{1}\otimes\langle j_{1}|_{D+1}\right)
L_{1}\Gamma_{1,D+1}L_{1}^{\prime}\left(  |k^{\prime}\rangle_{1}\langle
i_{1}^{\prime}|i_{1}\rangle\otimes|j_{1}^{\prime}\rangle_{D+1}\langle
i_{1}^{\prime}|i_{1}\rangle\right)  \nonumber\\
& \quad\times\left(  \langle j_{1}|_{2}\otimes\langle j_{2}|_{D+2}\right)
L_{2}\Gamma_{2,D+2}L_{2}^{\prime}\left(  |j_{1}^{\prime}\rangle_{2}\langle
i_{2}^{\prime}|i_{2}\rangle\otimes|j_{2}^{\prime}\rangle_{D+2}\langle
i_{2}^{\prime}|i_{2}\rangle\right)  \cdots\nonumber\\
& \quad\times\left(  \langle j_{D-2}|_{D-1}\otimes\langle j_{D-1}%
|_{2D-1}\right)  L_{D-1}\Gamma_{D-1,2D-1}\nonumber\\
& \quad\times L_{D-1}^{\prime}\left(  |j_{D-2}^{\prime}\rangle_{D-1}\langle
i_{D-1}^{\prime}|i_{D-1}\rangle\otimes|j_{D-1}^{\prime}\rangle_{2D-1}\langle
i_{D-1}^{\prime}|i_{D-1}\rangle\right)  \nonumber\\
& \quad\times\left(  \langle j_{D-1}|_{D}\otimes\langle j_{D}|_{2D}\right)
L_{D}\Gamma_{D,2D}L_{D}^{\prime}\left(  |j_{D-1}^{\prime}\rangle_{D}\langle
i_{D}^{\prime}|i_{D}\rangle\otimes|j_{D}^{\prime}\rangle_{2D}\langle
i_{D}^{\prime}|i_{D}\rangle\right)  \nonumber\\
& =\sum_{\substack{k,j_{1},j_{2},\\ \ldots,j_{D}}%
}\sum_{\substack{k^{\prime},j_{1}^{\prime},j_{2}^{\prime},\\\ldots
,j_{D}^{\prime}}}|k\rangle\!\langle j_{D}|_{0}\rho_{0}|j_{D}^{\prime}%
\rangle\!\langle k^{\prime}|_{0}\nonumber\\
& \quad\times\left(  \langle k|_{1}\otimes\langle j_{1}|_{D+1}\right)
L_{1}\Gamma_{1,D+1}L_{1}^{\prime}\left(  |k^{\prime}\rangle_{1}\otimes
|j_{1}^{\prime}\rangle_{D+1}\right)  \nonumber\\
& \quad\times\left(  \langle j_{1}|_{2}\otimes\langle j_{2}|_{D+2}\right)
L_{2}\Gamma_{2,D+2}L_{2}^{\prime}\left(  |j_{1}^{\prime}\rangle_{2}%
\otimes|j_{2}^{\prime}\rangle_{D+2}\right)  \cdots\nonumber\\
& \quad\times\left(  \langle j_{D-2}|_{D-1}\otimes\langle j_{D-1}%
|_{2D-1}\right)  L_{D-1}\Gamma_{D-1,2D-1}L_{D-1}^{\prime}\left(
|j_{D-2}^{\prime}\rangle_{D-1}\otimes|j_{D-1}^{\prime}\rangle_{2D-1}\right)
\nonumber\\
& \quad\times\left(  \langle j_{D-1}|_{D}\otimes\langle j_{D}|_{2D}\right)
L_{D}\Gamma_{D,2D}L_{D}^{\prime}\left(  |j_{D-1}^{\prime}\rangle_{D}%
\otimes|j_{D}^{\prime}\rangle_{2D}\right)  \\
& =\sum_{\substack{k,j_{1},j_{2},\\\ldots,j_{D}}%
}\sum_{\substack{k^{\prime},j_{1}^{\prime},j_{2}^{\prime},\\\ldots
,j_{D}^{\prime}}}|k\rangle\!\langle j_{D}|\rho_{0}|j_{D}^{\prime}\rangle\!\langle
k^{\prime}|_{0}\nonumber\\
& \quad\times\langle k|L_{1}|j_{1}\rangle\ \langle j_{1}^{\prime}%
|L_{1}^{\prime}|k^{\prime}\rangle\nonumber\\
& \quad\times\langle j_{1}|L_{2}|j_{2}\rangle\ \langle j_{2}^{\prime}%
|L_{2}^{\prime}|j_{1}^{\prime}\rangle\ \cdots\nonumber\\
& \quad\times\langle j_{D-2}|L_{D-1}|j_{D-1}\rangle\ \langle j_{D-1}^{\prime
}|L_{D-1}^{\prime}|j_{D-2}^{\prime}\rangle\nonumber\\
& \quad\times\langle j_{D-1}|L_{D}|j_{D}\rangle\ \langle j_{D}^{\prime}%
|L_{D}^{\prime}|j_{D-1}^{\prime}\rangle\\
& =\sum_{\substack{k,j_{1},j_{2},\\\ldots,j_{D}}%
}\sum_{\substack{k^{\prime},j_{1}^{\prime},j_{2}^{\prime},\\\ldots
,j_{D}^{\prime}}}|k\rangle\!\langle k|L_{1}|j_{1}\rangle\!\langle j_{1}%
|L_{2}|j_{2}\rangle\cdots\nonumber\\
& \quad\times\langle j_{D-2}|L_{D-1}|j_{D-1}\rangle\!\langle j_{D-1}|L_{D}%
|j_{D}\rangle\!\langle j_{D}|\rho_{0}|j_{D}^{\prime}\rangle\nonumber\\
& \quad\times\langle j_{D}^{\prime}|L_{D}^{\prime}|j_{D-1}^{\prime}%
\rangle\!\langle j_{D-1}^{\prime}|L_{D-1}^{\prime}|j_{D-2}^{\prime}\rangle
\cdots\langle j_{2}^{\prime}|L_{2}^{\prime}|j_{1}^{\prime}\rangle\!\langle
j_{1}^{\prime}|L_{1}^{\prime}|k^{\prime}\rangle\!\langle k^{\prime}|\\
& =\left(  L_{1}L_{2}\cdots L_{D-1}L_{D}\rho L_{D}^{\prime}L_{D-1}^{\prime
}\cdots L_{2}^{\prime}L_{1}^{\prime}\right)  _{0}\ .
\end{align}
This completes the proof of \eqref{eq:long-proof-cycswap-other}.
\end{proof}

\section{Proof of Theorem~\ref{thm:comp-main-thm}}

\label{app:D}

To begin with, we have
\begin{align}
    & \left\Vert \e^{\tilde{\mathcal{L}}t} -  \e^{\mathcal{L}t}\right\Vert_{\diamond} \notag \\
    & = \lim_{r \rightarrow \infty} \left\Vert (\e^{\tilde{\mathcal{L}}t/r})^{r} -  (\e^{\mathcal{L}t/r})^{r}\right\Vert_{\diamond}
    \label{eq:first-step-proof-app}\\
    & \leq  \lim_{r \rightarrow \infty} r \left\Vert \e^{\tilde{\mathcal{L}}t/r} -  \e^{\mathcal{L}t/r} \right\Vert_{\diamond}\\
    & = \lim_{r \rightarrow \infty} r \sup_{\omega \in \mathcal{D}(\mathcal{H}_{RS})} \left\Vert \e^{\tilde{\mathcal{L}}t/r}(\omega) -  \e^{\mathcal{L}t/r} (\omega)\right\Vert_{1}\\
    & \leq \lim_{r \rightarrow \infty} r \left( \sup_{\omega \in \mathcal{D}(\mathcal{H}_{RS})} \left\Vert \omega + \mathcal{L}(\omega) t/r -  \omega - \tilde{\mathcal{L}}(\omega)t/r \right\Vert_{1} + O(t^{2}/r^{2}) \right)\\
    & = \lim_{r \rightarrow \infty} r \left (\sup_{\omega \in \mathcal{D}(\mathcal{H}_{RS})} \left\Vert \mathcal{L}(\omega)- \tilde{\mathcal{L}}(\omega) \right\Vert_{1}t/r + O(t^{2}/r^{2}) \right)\\
    & = \lim_{r \rightarrow \infty} \sup_{\omega \in \mathcal{D}(\mathcal{H}_{RS})} \left\Vert \mathcal{L}(\omega)- \tilde{\mathcal{L}}(\omega) \right\Vert_{1}t + O(t^{2}/r) \\
    & = \sup_{\omega \in \mathcal{D}(\mathcal{H}_{RS})} \left\Vert \mathcal{L}(\omega)- \tilde{\mathcal{L}}(\omega) \right\Vert_{1}t \\
    & =  \left\Vert \mathcal{L}- \tilde{\mathcal{L}} \right\Vert_{\diamond} t.\label{eq:limsuptomo}
\end{align}
The first inequality employs the subadditivity of diamond distance under channel composition (the proof of this latter statement employs the triangle inequality, in a way similar to the proof of \cite[Eq.~(4.63)]{nielsen2010quantum}. For clarity, the identity channel's action is implicit in the second equality and thereafter. The second inequality applies a Taylor expansion and the triangle inequality. 

Now, for bounding $\left\Vert \mathcal{L}- \tilde{\mathcal{L}} \right\Vert_{\diamond}$, let $\omega \in \mathcal{D}(\mathcal{H}_{RS})$ and, again with implicit identities acting on $R$, consider that
\begin{align}
    & \left \Vert\mathcal{L}(\omega) - \tilde{\mathcal{L}}(\omega) \right \Vert_{1} \notag \\
    & = \left \Vert L\omega L^{\dagger} - \frac{1}{2} \left \{L^{\dagger}L, \omega \right\} - \tilde{L}\omega \tilde{L}^{\dagger} + \frac{1}{2} \left \{\tilde{L}^{\dagger}\tilde{L}, \omega \right\} \right \Vert_{1} \\
    & = \left \Vert L\omega L^{\dagger} - \frac{1}{2} \left (L^{\dagger}L \omega +  \omega L^{\dagger}L  \right) - \tilde{L}\omega \tilde{L}^{\dagger} + \frac{1}{2} \left (\tilde{L}^{\dagger}\tilde{L} \omega +  \omega \tilde{L}^{\dagger}\tilde{L}  \right) \right \Vert_{1} \\
    & = \left \Vert L\omega L^{\dagger} - \tilde{L}\omega \tilde{L}^{\dagger} - \frac{1}{2} \left (L^{\dagger}L \omega - \tilde{L}^{\dagger}\tilde{L} \omega  \right)  - \frac{1}{2} \left (\omega L^{\dagger}L  -  \omega \tilde{L}^{\dagger}\tilde{L}  \right) \right \Vert_{1} \\
    & \leq \left \Vert L\omega L^{\dagger} - \tilde{L}\omega \tilde{L}^{\dagger} \right \Vert_{1} + \frac{1}{2} \left \Vert  L^{\dagger}L \omega - \tilde{L}^{\dagger}\tilde{L} \omega   \right \Vert_{1}  + \frac{1}{2} \left \Vert \omega L^{\dagger}L  -  \omega \tilde{L}^{\dagger}\tilde{L}   \right \Vert_{1} \\
    & = \left \Vert L\omega L^{\dagger} - \tilde{L}\omega L^{\dagger} + \tilde{L}\omega L^{\dagger} - \tilde{L}\omega \tilde{L}^{\dagger} \right \Vert_{1} \notag \\
    & \qquad + \frac{1}{2} \left \Vert  L^{\dagger}L \omega - \tilde{L}^{\dagger}L \omega + \tilde{L}^{\dagger}L \omega - \tilde{L}^{\dagger}\tilde{L} \omega   \right \Vert_{1} \notag \\
    & \qquad + \frac{1}{2} \left \Vert \omega L^{\dagger}L - \omega \tilde{L}^{\dagger}L + \omega \tilde{L}^{\dagger}L -  \omega \tilde{L}^{\dagger}\tilde{L}   \right \Vert_{1}\\
    & \leq \left \Vert L\omega L^{\dagger} - \tilde{L}\omega L^{\dagger} \right \Vert_{1} + \left \Vert\tilde{L}\omega L^{\dagger} - \tilde{L}\omega \tilde{L}^{\dagger} \right \Vert_{1} \notag\\
    & \qquad + \frac{1}{2} \left \Vert  L^{\dagger}L \omega - \tilde{L}^{\dagger}L \omega \right \Vert_{1} + \frac{1}{2} \left \Vert \tilde{L}^{\dagger}L \omega - \tilde{L}^{\dagger}\tilde{L} \omega   \right \Vert_{1} \notag \\
    & \qquad + \frac{1}{2} \left \Vert   \omega L^{\dagger}L -  \omega\tilde{L}^{\dagger}L \right \Vert_{1}  + \frac{1}{2}  \left \Vert \omega \tilde{L}^{\dagger}L  -  \omega \tilde{L}^{\dagger}\tilde{L}  \right \Vert_{1} \\
    & \leq \left \Vert L - \tilde{L} \right \Vert_{2} \left \Vert \omega \right \Vert \left \Vert L^{\dagger} \right \Vert_{2}  +   \left \Vert \tilde{L} \right \Vert_{2} \left \Vert \omega \right \Vert \left \Vert L^{\dagger} - \tilde{L}^{\dagger} \right \Vert_{2}\notag \\
    & \qquad + \left \Vert \omega\right \Vert \left \Vert L \right \Vert_{2} \left \Vert L^{\dagger} - \tilde{L}^{\dagger} \right \Vert_{2} + \left \Vert \omega\right \Vert \left \Vert \tilde{L}^{\dagger} \right \Vert_{2} \left \Vert L - \tilde{L} \right \Vert_{2}\\
    & = O(\varepsilon/t) .
    \label{eq:lem2-first-term}
\end{align}
The first and second inequalities follow from the triangle inequality. The  last inequality follows from the generalized H\"older's inequality (see, e.g., \cite[Eq.~(8)]{beigi2013sandwiched}). The final equality follows from the assumption in the theorem statement. We thus conclude that
\begin{equation}
    \left\Vert \mathcal{L}- \tilde{\mathcal{L}} \right\Vert_{\diamond} = O(\varepsilon/t),
\end{equation}
and finally, by combining with \eqref{eq:first-step-proof-app}--\eqref{eq:limsuptomo} that
\begin{equation}
    \frac{1}{2}\left\Vert \e^{\tilde{\mathcal{L}}t} -  \e^{\mathcal{L}t}\right\Vert_{\diamond} \leq \left\Vert \mathcal{L}- \tilde{\mathcal{L}} \right\Vert_{\diamond} t = O(\varepsilon/t) t = O(\varepsilon).
\end{equation}


\section{Proof of Lemma~\ref{lem:HS-distance-lem}}

\label{app:HS-distance-lem}

The steps are essentially the same as in the proof of \cite[Lemma~I.4]{dupuis2010decoupling}, but here we are considering bipartite vectors and their associated operators. We give the proof for convenience. Consider the following:
\begin{align}
    & \!\!\!\!\frac{1}{2}\left \Vert |\tilde \psi\rangle\!\langle \tilde \psi| - |\psi\rangle\!\langle \psi| \right \Vert_{1} \notag \\
    & = \frac{1}{2} \left \Vert (\tilde L \otimes I )|\Gamma\rangle\!\langle\Gamma| (\tilde L^\dagger \otimes I ) - (L \otimes I )|\Gamma\rangle\!\langle\Gamma| (L^\dagger \otimes I ) \right\Vert_{1}\\
    &  = \sqrt{1 - \left |\langle\Gamma| \left ( \tilde L^\dagger  L \otimes I \right) |\Gamma\rangle\right |^2 }\\
    & = \sqrt{1 - \left |\sum_{i, j}\langle i| \langle i| \left ( \tilde L^\dagger  L \otimes I \right) |j\rangle |j\rangle\right |^2 }\\
    & = \sqrt{1 - \left |\sum_{i, j}\langle i| \left ( \tilde L^\dagger  L  \right)|j\rangle  \otimes \langle i |j\rangle\right |^2 }\\
    & = \sqrt{1 - \left |\sum_{i}\langle i| \left ( \tilde L^\dagger  L  \right)|i\rangle \right |^2 }\\
    & = \sqrt{1 - \left |\operatorname{Tr}\!\left [\tilde L^\dagger L \right] \right|^2 }\\
    & = \sqrt{1 + \left |\operatorname{Tr}\!\left [\tilde L^\dagger L \right] \right| } \sqrt{1 - \left |\operatorname{Tr}\!\left [\tilde L^\dagger L \right] \right| }\\
    & \leq \sqrt{2 \left (1 - \left |\operatorname{Tr}\!\left [\tilde L^\dagger L \right] \right| \right)}\\
    & \leq \sqrt{2 \left (1 - \operatorname{Re}\!\left [\operatorname{Tr}\!\left [\tilde L^\dagger L \right] \right] \right)}\\
    & = \left \Vert \tilde{L} - L \right \Vert_{2}\\
    & \leq \delta.
\end{align}
The second equality follows from the fact that the trace distance between two pure quantum states can be expressed in terms of their inner product (see proof of \cite[Theorem~9.3.1]{wilde_2017}); i.e., for quantum states $|\psi\rangle$ and $|\phi\rangle$, we have 
\begin{equation}
    \frac{1}{2}\left \Vert |\phi\rangle\!\langle\phi| - |\psi\rangle\!\langle \psi| \right \Vert_{1} = \sqrt{1 - \left |\langle\psi|\phi\rangle\right |^2 }.
\end{equation}
The third-to-last inequality follows because $\left |\operatorname{Tr}\!\left [\tilde L^\dagger L \right] \right| \leq 1$, which is a consequence of the Cauchy--Schwarz inequality for the Hilbert--Schmidt inner product and the assumption that $\left \Vert L \right \Vert_{2} = \left \Vert \tilde L \right \Vert_{2} = 1$.
The last equality comes from the following chain of equalities:
\begin{align}
    \left \Vert \tilde{L} - L \right \Vert_{2} & = \sqrt{\operatorname{Tr}\!\left[\left ( \tilde{L}^{\dagger} - L^{\dagger} \right) \left( \tilde{L} - L \right)\right]}\\
    & = \sqrt{\operatorname{Tr}\!\left[\tilde{L}^{\dagger} \tilde L - \tilde{L}^{\dagger} L - L^{\dagger} \tilde L + L^{\dagger} L \right]}\\
    & = \sqrt{\operatorname{Tr}\!\left[\tilde{L}^{\dagger} \tilde L \right] - \operatorname{Tr}\!\left[ \tilde{L}^{\dagger} L \right ] - \operatorname{Tr}\!\left[L^{\dagger} \tilde L \right] + \operatorname{Tr}\!\left[L^{\dagger} L \right]}\\
   & = \sqrt{\left \Vert \tilde L \right \Vert_{2}^{2} - 2\operatorname{Re}\!\left [\operatorname{Tr}\!\left [\tilde L^\dagger L \right] \right] + \left \Vert L \right \Vert_{2}^{2}}\\
   & =\sqrt{2 \left (1 - \operatorname{Re}\!\left [\operatorname{Tr}\!\left [\tilde L^\dagger L \right] \right] \right)},
\end{align}
where the first equality follows from the definition of Hilbert--Schmidt norm (see \eqref{eq:schatten-norm-def}).

\bibliography{references}

\begin{thebibliography}{10}

\bibitem{Aaronson2009}
Scott Aaronson.
\newblock Quantum copy-protection and quantum money.
\newblock In {\em 2009 24th Annual IEEE Conference on Computational
  Complexity}, pages 229--242, Paris, France, July 2009.

\bibitem{beigi2013sandwiched}
Salman Beigi.
\newblock Sandwiched {R\'e}nyi divergence satisfies data processing inequality.
\newblock {\em Journal of Mathematical Physics}, 54(12):122202, 2013.

\bibitem{Berry2013ExponentialHamiltonians}
Dominic~W. Berry, Andrew~M. Childs, Richard Cleve, Robin Kothari, and
  Rolando~D. Somma.
\newblock {Exponential improvement in precision for simulating sparse
  Hamiltonians}.
\newblock {\em Proceedings of the Forty-Sixth Annual ACM Symposium on Theory of
  Computing}, pages 283--292, December 2014.

\bibitem{Berry2014SimulatingSeries}
Dominic~W. Berry, Andrew~M. Childs, Richard Cleve, Robin Kothari, and
  Rolando~D. Somma.
\newblock {Simulating Hamiltonian dynamics with a truncated Taylor series}.
\newblock {\em Physical Review Letters}, 114(9):090502, December 2014.

\bibitem{Berry2015HamiltonianParameters}
Dominic~W. Berry, Andrew~M. Childs, and Robin Kothari.
\newblock Hamiltonian simulation with nearly optimal dependence on all
  parameters.
\newblock In {\em 2015 IEEE 56th Annual Symposium on Foundations of Computer
  Science}, pages 792--809, October 2015.

\bibitem{Berry2012Gate-efficientAlgorithms}
Dominic~W. Berry, Richard Cleve, and Sevag Gharibian.
\newblock Gate-efficient discrete simulations of continuous-time quantum query
  algorithms.
\newblock {\em Quantum Information and Computation}, 14(1--2):1--30, November
  2014.

\bibitem{Biamonte2017TensorNutshell}
Jacob Biamonte and Ville Bergholm.
\newblock Tensor networks in a nutshell.
\newblock arXiv:1708.00006, August 2017.

\bibitem{brassard2002quantum}
Gilles Brassard, Peter Hoyer, Michele Mosca, and Alain Tapp.
\newblock Quantum amplitude amplification and estimation.
\newblock {\em Contemporary Mathematics}, 305:53--74, 2002.

\bibitem{breuer2002theory}
Heinz-Peter Breuer and Francesco Petruccione.
\newblock {\em The Theory of Open Quantum Systems}.
\newblock Oxford University Press, 2002.

\bibitem{campbell2019}
Earl Campbell.
\newblock Random compiler for fast {H}amiltonian simulation.
\newblock {\em Physical Review Letters}, 123(7):070503, August 2019.

\bibitem{Chakraborty2023}
Shantanav Chakraborty.
\newblock Implementing any linear combination of unitaries on intermediate-term
  quantum computers, August 2023.
\newblock arXiv:2302.13555.

\bibitem{Childs2016EfficientDynamics}
Andrew~M. Childs and Tongyang Li.
\newblock {Efficient simulation of sparse Markovian quantum dynamics}.
\newblock {\em Quantum Information and Computation}, 17(11{\&}12):901--947,
  November 2016.

\bibitem{childs2012}
Andrew~M. Childs and Nathan Wiebe.
\newblock Hamiltonian simulation using linear combinations of unitary
  operations.
\newblock {\em Quantum Information and Computation}, 12(11–12):901–924,
  November 2012.

\bibitem{Cleve2016EfficientEvolution}
Richard Cleve and Chunhao Wang.
\newblock Efficient quantum algorithms for simulating {L}indblad evolution.
\newblock In Ioannis Chatzigiannakis, Piotr Indyk, Fabian Kuhn, and Anca
  Muscholl, editors, {\em 44th International Colloquium on Automata, Languages,
  and Programming (ICALP 2017)}, volume~80 of {\em Leibniz International
  Proceedings in Informatics (LIPIcs)}, pages 17:1--17:14, Dagstuhl, Germany,
  2017. Schloss Dagstuhl--Leibniz-Zentrum fuer Informatik.

\bibitem{dupuis2010decoupling}
Fr{\'e}d{\'e}ric Dupuis.
\newblock {\em The decoupling approach to quantum information theory}.
\newblock PhD thesis, Universit{\'e} de Montr{\'e}al, 2010.
\newblock arXiv:1004.1641.

\bibitem{Gardiner2004QuantumBooks}
Crispin Gardiner and Peter Zoller.
\newblock {\em Quantum Noise: A Handbook of {M}arkovian and Non-{M}arkovian
  Quantum Stochastic Methods with Applications to Quantum Optics}.
\newblock Springer, 2004.

\bibitem{Gorini1976CompletelySystems}
Vittorio Gorini, Andrzej Kossakowski, and E.~C.~G. Sudarshan.
\newblock Completely positive dynamical semigroups of {$N$}‐level systems.
\newblock {\em Journal of Mathematical Physics}, 17(5):821--825, May 1976.

\bibitem{Haah2017}
Jeongwan Haah, Aram~W. Harrow, Zhengfeng Ji, Xiaodi Wu, and Nengkun Yu.
\newblock Sample-optimal tomography of quantum states.
\newblock {\em {IEEE} Transactions on Information Theory}, 63(9):5628--5641,
  September 2017.

\bibitem{KSMM22}
Hirsh Kamakari, Shi-Ning Sun, Mario Motta, and Austin~J. Minnich.
\newblock Digital quantum simulation of open quantum systems using quantum
  imaginary--time evolution.
\newblock {\em PRX Quantum}, 3(1):010320, February 2022.

\bibitem{Kastoryano2011}
M.~J. Kastoryano, F.~Reiter, and A.~S. S\o{}rensen.
\newblock Dissipative preparation of entanglement in optical cavities.
\newblock {\em Physical Review Letters}, 106(9):090502, February 2011.

\bibitem{Kastoryano2014QuantumCase}
Michael~J. Kastoryano and Fernando G. S.~L. Brand{\~{a}}o.
\newblock Quantum {G}ibbs samplers: the commuting case.
\newblock {\em Communications in Mathematical Physics}, 344(3):915--957,
  September 2014.

\bibitem{Kimmel2017HamiltonianComplexity}
Shelby Kimmel, Cedric Yen~Yu Lin, Guang~Hao Low, Maris Ozols, and Theodore~J.
  Yoder.
\newblock {Hamiltonian simulation with optimal sample complexity}.
\newblock {\em npj Quantum Information}, 3(1):1--7, March 2017.

\bibitem{Kitaev1997QuantumCorrection}
A.~Yu {Kitaev}.
\newblock {Quantum computations: algorithms and error correction}.
\newblock {\em Russian Mathematical Surveys}, 52(6):1191--1249, December 1997.

\bibitem{kothari2014efficient}
Robin Kothari.
\newblock {\em Efficient algorithms in quantum query complexity}.
\newblock PhD thesis, University of Waterloo, School of Computer Science,
  August 2014.
\newblock \url{http://hdl.handle.net/10012/8625}.

\bibitem{Kraus2008}
B.~Kraus, H.~P. B\"uchler, S.~Diehl, A.~Kantian, A.~Micheli, and P.~Zoller.
\newblock Preparation of entangled states by quantum {M}arkov processes.
\newblock {\em Physical Review A}, 78(4):042307, October 2008.

\bibitem{Lin75}
G\"oran Lindblad.
\newblock Completely positive maps and entropy inequalities.
\newblock {\em Communications in Mathematical Physics}, 40:147--151, June 1975.

\bibitem{Lindblad1976OnSemigroups}
G\"oran Lindblad.
\newblock {On the generators of quantum dynamical semigroups}.
\newblock {\em Communications in Mathematical Physics}, 48(2):119--130, June
  1976.

\bibitem{Lloyd1996UniversalSimulators}
Seth Lloyd.
\newblock Universal quantum simulators.
\newblock {\em Science}, 273(5278):1073--1078, August 1996.

\bibitem{Lloyd2014QuantumAnalysis}
Seth Lloyd, Masoud Mohseni, and Patrick Rebentrost.
\newblock {Quantum principal component analysis}.
\newblock {\em Nature Physics}, 10(9):631--633, July 2014.

\bibitem{Low2016OptimalProcessing}
Guang~Hao Low and Isaac~L. Chuang.
\newblock Optimal {H}amiltonian simulation by quantum signal processing.
\newblock {\em Physical Review Letters}, 118(1):010501, January 2017.

\bibitem{Magesan2012ModelingCircuits}
Easwar Magesan, Daniel Puzzuoli, Christopher~E. Granade, and David~G. Cory.
\newblock Modeling quantum noise for efficient testing of fault-tolerant
  circuits.
\newblock {\em Physical Review A}, 87(1):012324, January 2013.

\bibitem{Manzano2011QuantumLaw}
Daniel Manzano, Markus Tiersch, Ali Asadian, and Hans~J. Briegel.
\newblock Quantum transport efficiency and {F}ourier's law.
\newblock {\em Physical Review E}, 86(6):061118, December 2012.

\bibitem{may2008charge}
Volkhard May and Oliver K{\"u}hn.
\newblock {\em Charge and Energy Transfer Dynamics in Molecular Systems}.
\newblock John Wiley \& Sons, 2008.

\bibitem{Miessen2022QuantumDynamics}
Alexander Miessen, Pauline~J. Ollitrault, Francesco Tacchino, and Ivano
  Tavernelli.
\newblock Quantum algorithms for quantum dynamics.
\newblock {\em Nature Computational Science}, 3(1):25--37, December 2022.

\bibitem{nielsen2010quantum}
Michael~A. Nielsen and Isaac~L. Chuang.
\newblock {\em Quantum Computation and Quantum Information}.
\newblock Cambridge University Press, 2010.
\newblock 10th Anniversary Edition.

\bibitem{nitzan2006chemical}
Abraham Nitzan.
\newblock {\em Chemical Dynamics in Condensed Phases: Relaxation, Transfer and
  Reactions in Condensed Molecular Systems}.
\newblock Oxford University Press, 2006.

\bibitem{Olmos2012FacilitatedGlasses}
Beatriz Olmos, Igor Lesanovsky, and Juan~P. Garrahan.
\newblock {Facilitated spin models of dissipative quantum glasses}.
\newblock {\em Physical Review Letters}, 109(2):020403, July 2012.

\bibitem{Patel2023}
Dhrumil Patel and Mark~M. Wilde.
\newblock Wave matrix {L}indbladization {I}: Quantum programs for simulating
  {M}arkovian dynamics.
\newblock {\em Open Systems and Information Dynamics}, 30(02):2350010, June
  2023.

\bibitem{Plenio1998}
Martin~B. Plenio and Peter~L. Knight.
\newblock The quantum-jump approach to dissipative dynamics in quantum optics.
\newblock {\em Reviews of Modern Physics}, 70(1):101--144, January 1998.

\bibitem{Prosen2011OpenTransport}
Tomaz Prosen.
\newblock Open {$XXZ$} spin chain: Nonequilibrium steady state and a strict
  bound on ballistic transport.
\newblock {\em Physical Review Letters}, 106(21):217206, May 2011.

\bibitem{Reiter2016}
Florentin Reiter, David Reeb, and Anders~S. S\o{}rensen.
\newblock Scalable dissipative preparation of many-body entanglement.
\newblock {\em Physical Review Letters}, 117(4):040501, July 2016.

\bibitem{Schlimgen2022QuantumOperators}
Anthony~W. Schlimgen, Kade Head-Marsden, LeeAnn~M. Sager, Prineha Narang, and
  David~A. Mazziotti.
\newblock Quantum simulation of the {L}indblad equation using a unitary
  decomposition of operators.
\newblock {\em Physical Review Research}, 4(2):023216, June 2022.

\bibitem{Suri2022Two-UnitarySimulation}
Nishchay Suri, Joseph Barreto, Stuart Hadfield, Nathan Wiebe, Filip Wudarski,
  and Jeffrey Marshall.
\newblock Two-unitary decomposition algorithm and open quantum system
  simulation.
\newblock {\em Quantum}, 7:1002, May 2023.

\bibitem{verstraete2009quantum}
Frank Verstraete, Michael~M. Wolf, and J.~Ignacio Cirac.
\newblock Quantum computation and quantum-state engineering driven by
  dissipation.
\newblock {\em Nature Physics}, 5(9):633--636, 2009.

\bibitem{Weiss2021}
Ulrich Weiss.
\newblock {\em Quantum Dissipative Systems}.
\newblock World Scientific, 5th edition, 2021.

\bibitem{wilde_2017}
Mark~M. Wilde.
\newblock {\em Quantum Information Theory}.
\newblock Cambridge University Press, Cambridge, UK, second edition, 2017.

\end{thebibliography}
\bibliographystyle{plain}

\end{document}